\definecolor{DarkBlue}{rgb}{0,0,0.8}  %
\definecolor{DarkOrange}{rgb}{0.8,0.4,0}  %
\def\mylinkcolor{DarkBlue}
\newtheorem{theorem}{Theorem}[section]
\newtheorem{proposition}[theorem]{Proposition}
\newtheorem{lemma}[theorem]{Lemma}
\newtheorem{claim}[theorem]{Claim}
\newtheorem{definition}[theorem]{Definition}
\theoremstyle{definition}
\newtheorem{conjecture}[theorem]{Conjecture}
\numberwithin{equation}{section}
\newcommand{\complex}{{\mathbb C}}
\newcommand{\reals}{{\mathbb R}}
\newcommand{\integers}{{\mathbb Z}}
\newcommand{\tensor}{\otimes}
\newcommand{\adjoint}{*}
\newcommand{\eqdef}{\coloneqq}
\newcommand{\ket}[1]{| #1 \rangle}
\newcommand{\bra}[1]{ \langle #1 |}
\newcommand{\ketbra}[2]{| #1 \rangle\!\langle #2 |}
\newcommand{\braket}[2]{\langle #1 | #2 \rangle }
\newcommand{\density}[1]{\ketbra{#1}{#1}}
\newcommand{\norm}[1]{\left\| #1 \right\|}
\newcommand{\trnorm}[1]{\left\| #1 \right\|_{\mathrm{tr}}}
\newcommand{\sg}{{\uppsi_2}}
\newcommand{\size}[1]{\left| #1 \right|}
\newcommand{\set}[1]{\left\{ #1 \right\}}
\newcommand{\transpose}{\top}
\newcommand{\e}{{\mathrm e}}
\newcommand{\complexi}{{\mathrm{i}}}
\newcommand{\id}{{\mathbb 1}}
\newcommand*\diff{\mathop{}\!\mathrm{d}}
\DeclareMathOperator{\expct}{{\mathbb E}}
\newcommand{\Order}{\mathrm{O}}
\newcommand{\order}{\mathrm{o}}
\newcommand{\linear}{{\mathsf L}}
\newcommand{\unitary}{{\mathsf U}}
\newcommand{\qstate}{{\mathsf D}}
\DeclareMathOperator{\trace}{Tr}
\DeclareMathOperator{\support}{supp}
\DeclareMathOperator{\entropy}{H}
\DeclareMathOperator{\mi}{I}
\DeclareMathOperator{\Tr}{Tr}
\newcommand{\Id}{\id}
\renewcommand{\id}{\mathds{1}}
\DeclareMathOperator*{\Ex}{\mathbb{E}}
\renewcommand{\cal}[1]{\mathcal{#1}}
\newcommand{\Ind}{\mathbf{1}}
\newcommand{\C}{\ensuremath{\mathbb{C}}}
\newcommand{\R}{\ensuremath{\mathbb{R}}}
\newcommand{\eps}{\varepsilon}
\newcommand{\Paren}[1]{\left(#1\right)}
\newcommand{\Abs}[1]{\left\lvert#1\right\rvert}
\newcommand{\PauliX}{{\mathrm X}}
\newcommand{\PauliY}{{\mathrm Y}}
\newcommand{\PauliZ}{{\mathrm Z}}
\newcommand{\Swap}{{\mathrm F}}
\newcommand{\mes}{\upphi} %
\newcommand{\bpsi}{{\bm \psi}}
\newcommand{\bU}{{\bm U}}
\newcommand{\bj}{{\bm j}}
\newcommand{\bk}{{\bm k}}
\newcommand{\bA}{{\bm A}}
\newcommand{\bM}{{\bm M}}
\newcommand{\bmu}{{\bm \mu}}
\newcommand{\bQ}{{\bm Q}}
\newcommand{\bR}{{\bm R}}
\newcommand{\bS}{{\bm S}}
\newcommand{\bL}{{\bm \Lambda}}
\newcommand{\bx}{{\bm x}}
\newcommand{\bv}{{\bm v}}
\newcommand{\bs}{{\bm s}}
\newcommand{\bzeta}{{\bm \zeta}}
\newcommand{\bxi}{{\bm \xi}}
\newcommand{\bPi}{{\bm \Pi}}
\newcommand{\bcE}{{\bm {\mathcal E}}}
\newcommand{\bE}{{\bm E}}
\newcommand{\beps}{{\bm \eps}}
\DeclareMathOperator{\diracdelta}{\updelta}    %
\DeclareMathOperator{\dist}{\upgamma}    %
\DeclareMathOperator{\hc}{\upalpha_{\text{Holevo}}}    %
\newcommand{\rd}{{\mathrm d}}
\newcommand{\rP}{{\mathrm P}}
\newcommand{\EPR}{\mathrm{EPR}}
\newcommand{\nhs}{\mathrm{nhs}}
\newcommand{\cF}{{\mathcal F}}
\DeclareMathOperator{\MP}{p}
\newcommand{\suppress}[1]{}
\newcommand{\comment}[1]{}
\begin{document}

\title{Rigidity of superdense coding}

\author{Ashwin Nayak~\thanks{Department of Combinatorics and Optimization,
and Institute for Quantum Computing, 
University of Waterloo, 200 University Ave.\ W., Waterloo, ON,
N2L~3G1, Canada. Email: \texttt{ashwin.nayak@uwaterloo.ca}~.} 
\and 
Henry Yuen~\thanks{Department of Computer Science, Columbia University, New York, USA. E-mail: \texttt{henry.yuen@columbia.edu}~.}
}

\date{}
\maketitle

\begin{abstract}
The famous superdense coding protocol of Bennett and Wiesner demonstrates that it is possible to communicate two bits of classical information by sending only one qubit and using a shared EPR pair. Our first result is that an arbitrary protocol for achieving this task (where there are no assumptions on the sender's encoding operations or the dimension of the shared entangled state) is locally equivalent to the canonical Bennett-Wiesner protocol. In other words, the superdense coding task is \emph{rigid\/}. In particular, we show that the sender and receiver only use additional entanglement (beyond the EPR pair) as a source of classical randomness.

We also investigate several questions about higher-dimensional superdense coding, where the goal is to communicate one of $d^2$ possible messages by sending a $d$-dimensional quantum state, for general dimensions~$d$. Unlike the $d=2$ case (i.e. sending a single qubit), there can be inequivalent superdense coding protocols for higher~$d$. We present concrete constructions of inequivalent protocols, based on constructions of inequivalent orthogonal unitary bases for all $d > 2$. 
Finally, we analyze the performance of superdense coding protocols where the encoding operators are independently sampled from the Haar measure on the unitary group. Our analysis involves bounding the distinguishability of random maximally entangled states, which may be of independent interest.

\end{abstract}

\tableofcontents

\newcommand{\DivMid}{\, \Big \| \,}
\newcommand{\Hilb}{\mathcal{H}}
\newcommand{\wt}[1]{\widetilde{#1}}

\section{Introduction}

In quantum information theory, rigidity is a phenomenon where optimal performance in an information processing task requires using a protocol satisfying extremely stringent constraints --- in some cases, there is essentially a \emph{unique\/} optimal protocol. The primary examples of rigidity come from nonlocal games (also known as \emph{Bell tests} in the physics literature). In this setting two spatially separated parties Alice and Bob play a game with a third-party called the referee. In order to maximize their chances of winning, before the game starts Alice and Bob choose an entangled state to share as well as local measurements to perform on the state. For example, in the famous CHSH game the optimal winning probability is $\cos^2(\pi/8)$, and a canonical strategy that achieves this uses a (rotated) EPR pair and single-qubit Pauli measurements. The CHSH game is \emph{rigid\/} in the sense that \emph{any\/} optimal strategy for the CHSH game is identical to this canonical strategy, up to local changes of basis. 

The study of rigidity in quantum information processing arguably started with the work of Mayers and Yao~\cite{mayers1998quantum,mayers2003self}, who initiated the concept of \emph{device-independent cryptography}. The idea behind this subject is that a classical user can verify that untrusted quantum hardware is behaving as intended --- say, generating random keys or performing a quantum computation -- simply by verifying that the hardware is employing a (near)-optimal strategy in a rigid nonlocal game. Since the work of Mayers and Yao, 
nonlocal game rigidity has been an extremely fruitful concept in quantum cryptography (see, e.g., Refs.~\cite{vazirani2019fully} and~\cite{CGJV19-verifiable-qc}), complexity theory (see Ref.~\cite{ji2020mip} and the references therein), and quantum information more generally~\cite{vsupic2020self}. This motivates the following question: what other tasks in quantum information also exhibit rigidity phenomena? 

To our knowledge, the only other work on rigidity phenomena outside of nonlocal games is that reported in Refs.~\cite{tavakoli2018self,farkas2019self} on the rigidity of \emph{quantum random access codes (QRACs)}. 
The authors study ``$2^d \rightarrow 1$'' QRACs, which encode~$2$ classical dits $x,y \in [d]$ into a $d$-dimensional system, such that either~$x$ or~$y$ may be retrieved by performing a suitable measurement. 
These works show that~$2^d \to 1$ QRACs are rigid, and in fact certify measurements based on mutually unbiased bases (MUBs). 

In this paper we investigate the rigidity properties of superdense coding, which plays a fundamental role in quantum Shannon theory (see, e.g., Ref.~\cite[Chapter~6]{wilde2013quantum}). The superdense coding \emph{task} is to communicate one of four possible messages while only transmitting one quantum bit across a channel. The superdense coding \emph{protocol}, first proposed by Bennett and Wiesner~\cite{bennett1992communication}, achieves this task in the following way: Alice and Bob share one qubit each of an EPR pair (i.e., the maximally entangled state $\tfrac{1}{\sqrt{2}} \ket{00} + \frac{1}{\sqrt{2}}  \ket{11} $) in advance, and to transmit a message $i \in \{1,2,3,4\}$, Alice applies a one of four Pauli operators to her half of the EPR pair and sends her qubit. Bob then performs a Bell measurement on the qubit received from Alice and his qubit to determine $i$. 

\subsection{Rigidity for superdense coding of two classical bits}

The first result in our paper is to show that superdense coding is rigid: \emph{any} protocol that accomplishes this task is ``locally equivalent'' to the Bennett-Wiesner protocol. We model arbitrary protocols for superdense coding in the following manner: Alice and Bob share a density matrix $\tau$ on a bipartite Hilbert space $\Hilb_A \otimes \Hilb_B$, where we assume without loss of generality that $\Hilb_{A}$ factors into $\Hilb_{A'} \otimes \Hilb_{A''}$ where $\Hilb_{A''}$ is isomorphic to $\complex^2$. Given an input $i \in \{1,2,3,4\}$, Alice applies a unitary operator $U_i$ (called an \emph{encoding operator\/}) to her share of $\tau$ (with support in the space~$\Hilb_A$), sends the qubit~$A''$ to Bob, and Bob then performs an optimal distinguishing measurement on the Hilbert space $\Hilb_{A''} \otimes \Hilb_{B}$ to determine what the input $i$ was. See \Cref{fig:general-protocol} for an illustration of a general superdense coding protocol.

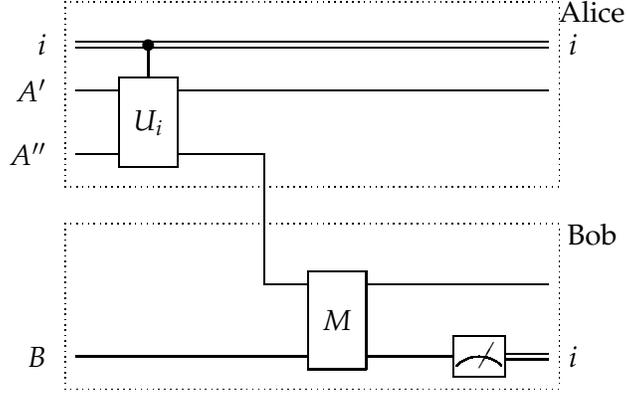
\begin{figure}
\centerline{
\Qcircuit @C=1.5em @R=1em {
                                 &                          &                          &                 &              &              &                  &                                    &   \mbox{Alice}   \\
   \lstick{i\rule{0.5em}{0em}}   &   \control \cw \qwx[1]   &   \cw                    &   \cw           &   \cw        &   \cw        &   \cw            &   \cw                              &   \lstick{i}     \\
   \lstick{A'\rule{0.5em}{0em}}   &   \multigate{1}{U_i}         &   \qw   			   &   \qw           &   \qw        &   \qw        &   \qw            &   \rstick{\rule{0.5em}{0em}} \qw   &       \\
   \lstick{A''\rule{0.5em}{0em}}   &   \ghost{U_i}               &   \qw                  &   \qw \qwx[1]   &              &              &                  &   \push{\rule{0em}{1.5em}}         &                  \\
                                 &                          &                          &   \qwx[1]       &              &              &                  &                                    &                  \\
                                 &                          &                          &   \qwx[1]       &              &              &                  &                                    &   \mbox{Bob}     \\
   \push{\rule{0em}{1.5em}}      &                          &                          &                 &   \multigate{1}{M}   &   \qw   &   \qw   &   \qw                              &      \\
   \lstick{B\rule{0.5em}{0em}}   &   \qw                    &   \qw                    &   \qw           &   \ghost{M}      &   \qw        &   \meter   &   \push{\rule{0em}{1.5em}}  \cw    &   \lstick{i}
\gategroup{1}{1}{4}{8}{.7em}{.}
\gategroup{6}{1}{8}{8}{.7em}{.}
}
}
\caption{A general superdense coding protocol. This quantum circuit is modified from~\cite{Charezma}.} \label{fig:general-protocol}
\end{figure}

\emph{A priori\/} it appears daunting to characterize the structure of an arbitrary superdense coding protocol. For one, the dimension of the spaces $\Hilb_{A'}$ and $\Hilb_B$ are unbounded, and the state $\tau$ is uncharacterized. Furthermore, the encoding unitary operators $U_i$ can be extremely complicated, potentially performing complex entangling operations between the space $\Hilb_{A'}$ and $\Hilb_{A''}$ (the qubit to be sent over to Bob). However, the property of being a superdense coding protocol is extremely constraining. \Cref{thm:rigidity} gives a precise characterization of how an arbitrary superdense protocol is \emph{locally equivalent} to the canonical Bennett-Wiesner protocol. In the statement of the theorem, ``$=_{\tau'}$'' denotes equality of two unitary operators with respect to the state $\tau'$; in other words, $C =_{\rho} D$ means that $C \rho C^* = D \rho D^*$.

\begin{restatable}[Rigidity for superdense coding]{theorem}{rigidity}
\label{thm:rigidity}

Let $(\tau,(U_i))$ denote a superdense coding protocol. Then there exist 
\begin{enumerate}
	\item Unitary operators $V$ acting on $\Hilb_{A'} \otimes \Hilb_{A''}$ and $(C_i)_{i \in [4]}$ acting on $\Hilb_{A'}$,
	\item An isometry $W$ mapping $\Hilb_B$ to a Hilbert space $\Hilb_{B'} \otimes \Hilb_{B''}$ where $\Hilb_{B''}$ is isomorphic to $\C^2$,
	\item A density matrix $\rho$ on $\Hilb_{A'} \otimes \Hilb_{B'}$,
	\item A set of pairwise orthogonal projectors $\{ P_r \}$ that sum to the identity on $\Hilb_{A'}$, and
	\item A collection of~$2 \times 2$ unitary operators $\{S_r\}$,
\end{enumerate}
such that, letting $\tau' \eqdef (V \otimes W) \tau (V \otimes W)^*$, we have
\[
	\tau' = \rho^{A'B'} \otimes \ketbra{\EPR}{\EPR}^{A'' B''}
\]
and for $i \in \{1,2,3,4\}$,
\[
	(C_i^* \otimes \id) U_i V^* =_{\tau'} \sum_{r} P_r \otimes S_r \sigma_i S_r^* 
\]
where $\sigma_1 \eqdef \id$, $\sigma_2 \eqdef \PauliZ$, $\sigma_3 \eqdef \PauliX$, and $\sigma_4 \eqdef \PauliY$ are the one-qubit Pauli matrices. 

\end{restatable}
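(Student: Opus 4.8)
The plan is to extract the structure from the single hard constraint that Bob must perfectly distinguish the four states $\rho_i \eqdef \Tr_{A'}\bigl((U_i\otimes\id)\tau(U_i\otimes\id)^*\bigr)$ on $\Hilb_{A''}\otimes\Hilb_B$. Perfect distinguishability of four states forces them to be supported on mutually orthogonal subspaces. First I would purify: write $\tau = \sum_k p_k \density{\psi_k}$ and work with a purification $\ket{\Psi}$ on $\Hilb_A\otimes\Hilb_B\otimes\Hilb_R$; the encoded global states $(U_i\otimes\id\otimes\id)\ket{\Psi}$ have pairwise orthogonal reduced states on $A''B$, hence are themselves pairwise orthogonal, and moreover the orthogonality must hold ``fiberwise'' — on each eigenspace of $\tau$ (equivalently, conditioned on the value of a classical register purifying the mixedness). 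This is where the $\{P_r\}$ will come from: the common refinement of the block structure of $\tau$ on $\Hilb_{A'}$.

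Next, the dimension-counting heart of the argument. Fix one fiber where $\tau$ restricts to a pure bipartite state $\density{\phi}$ on $\Hilb_{A'}\otimes\Hilb_B$ with Schmidt rank $m$. The four operators $U_i$ restricted to the $A''$ factor must send $\ket{\phi}\otimes\ket{0}_{A''}$ (and likewise $\ket{\phi}\otimes\ket{1}_{A''}$, and all superpositions — since Bob distinguishes regardless of, in fact there is no freedom in, the $A''$ input; actually Alice always applies $U_i$ to the \emph{fixed} half of $\tau$) to four states that are pairwise orthogonal on $A''B$ after tracing out $A'$. Using the standard fact that if $\sigma, \sigma'$ are orthogonal and each has the form $\Tr_{A'}$ of a state whose $A'B$-marginal is purified on a system of Schmidt rank $m$, then $m$ must be large enough to ``absorb'' the discrepancy; pushing this through, the qubit $A''$ together with just enough of $B$ must already carry a full EPR pair's worth of entanglement, and the $U_i$'s act on $A''$ essentially as the Pauli twirl. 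Concretely I expect to show: there is a two-dimensional subsystem $B''\subseteq \Hilb_B$ (cut out by the isometry $W$) and a unitary $V$ on $\Hilb_{A'}\Hilb_{A''}$ such that $(V\otimes W)\ket{\phi}\ket{\cdot}_{A''}$ becomes $\ket{\rho\text{-part}}_{A'B'}\otimes\ket{\EPR}_{A''B''}$, and $V$ conjugates each $U_i$ into $C_i\otimes (S\sigma_i S^*)$ for some fixer $C_i$ on $A'$ and a fixed local unitary $S$ on $A''$ — the $S$ is exactly the residual basis freedom, the $C_i$ absorbs any correlated action on the ancilla $A'$. Varying over fibers $r$ gives the projectors $P_r$ and the fiber-dependent unitaries $S_r$; the $C_i$ can be made fiber-independent by noting the encoding is a single global unitary.

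To assemble it cleanly I would proceed in this order: (1) reduce to the pure-state case by conditioning on an eigenbasis of $\tau$, obtaining a direct-sum/refinement decomposition and the family $\{P_r\}$; (2) on each fiber, invoke perfect distinguishability to get mutual orthogonality of the four encoded purifications; (3) prove the ``rigidity of a single EPR pair'' core lemma — four pairwise-orthogonal maximally-correlatable-via-a-qubit states force an EPR pair factor and a Pauli-basis action — this is essentially the content of the Bennett–Wiesner analysis run in reverse, and is where I'd use a dimension/rank inequality (the reduced state on $B$ conditioned on $A''$ must have rank exactly doubling, no more) together with the classification of unitary operator bases of $M_2(\C)$ (the Pauli basis being the unique one up to left/right multiplication by local unitaries and a phase, for $d=2$); (4) package the fiberwise isometries and unitaries into global $V$, $W$, $\{S_r\}$, and peel off the $A'$-action as $\{C_i\}$, checking the claimed identity holds on the support of $\tau'$ (the $=_{\tau'}$ equality, not everywhere).

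\textbf{Main obstacle.} The crux is Step (3): showing that a qubit channel's worth of Alice-side action, when it perfectly signals two bits, must literally \emph{be} an EPR pair plus a Pauli twirl, with all extra entanglement relegated to a passive ancilla that the $C_i$ and $P_r$ clean up. The difficulty is that $U_i$ may entangle $A'$ with $A''$ arbitrarily before the qubit is sent, so orthogonality of the $A''B$-marginals is an averaged statement over $A'$; I must upgrade it to a structural, fiberwise one, which is precisely why the eigenbasis refinement in Step (1) and the rank-doubling inequality in Step (3) have to be done carefully. The uniqueness of the Pauli basis among orthogonal unitary bases of $\C^{2\times 2}$ is what makes $d=2$ special — exactly the point the paper flags as failing for $d>2$ — so I would isolate that classification as the one genuinely $d=2$-specific ingredient and present it as a self-contained lemma.
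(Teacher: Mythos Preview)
Your outline has the right shape—orthogonality from perfect distinguishability, a fiberwise reduction, and the Pauli-basis uniqueness as the $d=2$ hinge—but two load-bearing steps are left as black boxes, and the paper's actual arguments for them do not match what you sketch.

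First, extracting the EPR pair. You assert that ``the qubit $A''$ together with just enough of $B$ must already carry a full EPR pair's worth of entanglement,'' citing a ``rank-doubling inequality'' that is never stated. The paper does not reason via ranks: it runs an entropic chain ($\mi(X:A''B)=2$ and $\mi(X:B)=0$ force $\entropy(A''\mid B)=1$ and $\entropy(A''\mid XB)=-1$, hence $A''$ is maximally mixed and, for each fixed $i$, maximally entangled with $B$) to conclude that after local isometries $\tau$ factors as $\rho^{A'B'}\otimes\density{\EPR}^{A''B''}$. Because $U_i$ can entangle $A'$ with $A''$ arbitrarily, the marginal on $A''B$ is genuinely mixed, and rank or dimension counts alone do not seem to pin down an EPR pair; some argument that actually saturates both the upper and lower entropy bounds on a qubit is needed.

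Second, your Step~(3) ``core lemma'' is essentially the whole theorem restricted to one eigenspace of $\tau$, and is not easier than the global statement. The paper splits it into two further nontrivial stages: (i) each $U_i$ is individually block-diagonalized as $\sum_\ell Q_{i\ell}\otimes R_{i\ell}$ with $R_{i\ell}$ traceless Hermitian, via a delicate polar-decomposition computation; and (ii) the a priori $i$-dependent projector families $\{Q_{i\ell}\}_\ell$ are reconciled into a single common refinement $\{P_r\}$ by a combinatorial argument (an ``overlap graph'' on the projectors, in which triangles cannot share edges by Pauli uniqueness, and a common eigenvector is extracted from each triangle). Your proposal conflates the $\{P_r\}$ with eigenspaces of $\tau$, but the final $P_r$ are strictly finer—they are determined by the encoding operators, not by $\tau$—and the matching step (ii) has no analogue in your sketch. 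That is exactly the place where ``upgrading averaged orthogonality to fiberwise structure'' happens, and it is not a consequence of the eigenbasis decomposition of $\tau$.
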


\Cref{thm:rigidity} can be interpreted as expressing rigidity for superdense coding in the following way: given an arbitrary protocol $(\tau,(U_i))$ for superdense coding, there exists local isometries $V, W$ where if Alice applies $V$ and Bob applies $W$ to their share of $\tau$, then an EPR pair is extracted with an auxiliary state $\rho$ remaining. By pre-applying $V^*$ to Alice's unitary operators $U_i$, we discover that $U_i V^*$ has a very regular form: operationally, it can be interpreted as performing some projective measurement $\{ P_r\}$ on Alice's part of the auxiliary state $\rho$ to obtain some outcome $r$ in a set $\cal{R}$, and then based on $r$, applying a rotated version of the standard Bennett-Wiesner superdense coding protocol to Alice's part of the EPR pair. Finally, after sending her EPR qubit, Alice then applies some unitary operator $C_i$ on her remaining qubits (which does not affect Bob's measurement in any way). This considerably strengthens and extends the characterization of ``tight'' superdense coding protocols due to Vollbrecht and Werner~\cite[Lemma~3]{VW00-Pauli-operators} (see also Ref.~\cite{Werner01-teleportation-dense-coding}); they studied protocols in which the shared entangled state~$\tau$ is a state on~$\complex^2 \tensor \complex^2$ (or on~$\complex^d \tensor \complex^d$ in the case of~$d$-dimensional superdense coding protocols; see \Cref{sec-intro-d-dim}). This difference would be significant in a cryptographic setting; in the context of quantum key distribution, this is the difference between the \emph{device-independent\/} and \emph{semi-device-independent\/} settings.

The proof of \Cref{thm:rigidity} is given in \Cref{sec:d2}. It proceeds via a number of reductions: first, using an information-theoretic argument, we show every superdense coding protocol $(\tau,(U_i))$ is locally equivalent to one that uses an EPR pair in the state $\tau$. Given this, we then show that each of the encoding operators $U_i$ can be individually block-diagonalized with respect to the EPR pair. Finally, we show that the blocks across the different encoding operators $U_i$ can be ``matched up'' in a way that they correspond to the Pauli matrices. Each of these steps requires carefully deducing the structure imposed on the state and the encoding operators by the correctness of the protocol.

\subsection{Rigidity for higher dimensional superdense coding?}
\label{sec-intro-d-dim}

We then consider the generalization of superdense coding to communicating more than $2$ classical bits. Specifically, we consider protocols for communicating one of $d^2$ possible messages by sending a $d$-dimensional quantum system over the channel --- we call these $d$-dimensional superdense coding protocols. A canonical protocol for $d$-dimensional superdense coding is as follows: the players share a $d$-dimensional maximally entangled state $\ket{\mes_d} \eqdef \frac{1}{\sqrt{d}} \sum_{e=0}^{d-1} \ket{e} \ket{e}$, and given message $i \in [d^2]$, Alice applies a unitary operator $E_i$ to her share of $\ket{\mes_d}$, and sends it over to Bob. The family of unitary operators $\{E_i\}$ can be any orthogonal unitary basis for the space of $d \times d$ matrices. (The orthogonality property means that $\Tr(E_i^\adjoint E_j) = 0$ if and only if $i \neq j$.) An example of such a basis is the set of Heisenberg-Weyl operators. In dimension $d$, these are a set of $d \times d$ matrices $\{ P_{i,j} : 0 \le i,j < d \}$ defined as follows. Let $\omega_d \eqdef \exp \Paren{ \frac{2 \pi \complexi}{d}}$ be a primitive $d$th root of unity. For $i,j \in \{0,1,2,\ldots,d-1\}$, let $P_{i,j} = \PauliX^i_d \PauliZ^j_d$ where $\PauliX_d \eqdef \sum_{k=0}^{d-1} \ketbra{k+1 \, (\mathrm{mod} \, d)}{k}$ is the ``shift'' operator, and $\PauliZ_d \eqdef \sum_{k=0}^{d-1} \omega^k_d \, \ketbra{k}{k}$ is the ``clock'' operator.  Does the rigidity phenomenon also extend to dimensions~$d$ larger than~$2$? 

The second result of this paper is that $d$-dimensional superdense coding for $d \geq 3$ is not rigid in the same sense as \Cref{thm:rigidity}: there are $d$-dimensional superdense coding protocols which are not locally equivalent to each other. This is because in dimensions three and higher there are \emph{inequivalent\/} orthogonal unitary bases. (In contrast, all orthogonal unitary bases in dimension two are equivalent to the Pauli matrices.) Here, equivalence between two unitary bases $\{E_i\}$ and $\{F_i\}$ means there exist unitary operators $U,V$ such that for all $i$, we have $F_i = \alpha_i U E_i V$ for some choice of complex phase $\alpha_i$.

\begin{restatable}[Existence of inequivalent orthogonal unitary bases for all $d \geq 3$]{theorem}{ineq-oub}
\label{thm:ineq-oub-intro}
For every dimension~$d \ge 3$, there are orthogonal unitary bases that are not equivalent to each other.
\end{restatable}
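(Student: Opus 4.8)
The plan is to exhibit, for each $d \ge 3$, a concrete orthogonal unitary basis that is provably inequivalent to the Heisenberg--Weyl basis, and to certify the inequivalence with an invariant that is blind to the ``gauge'' action $E_i \mapsto \alpha_i U E_i V$ and to relabeling. For the construction I would take a ``shift and multiply'' family: fix diagonal unitaries $E^{(0)} \eqdef \id, E^{(1)}, \dots, E^{(d-1)}$ on $\C^d$ and set $\mathcal{B} \eqdef \{\, E^{(a)}\PauliX_d^a\PauliZ_d^b : 0 \le a,b < d \,\}$. This is an orthogonal unitary basis for \emph{every} such choice: two elements with the same $a$ are orthogonal because $\{\PauliZ_d^b\}_b$ is and $\PauliX_d^a$ cancels, while two elements with $a \ne a'$ have trace zero because, after cancelling the diagonal unitaries, the underlying permutation of the relevant product is the fixed-point-free shift $\PauliX_d^{a'-a}$. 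Taking every $E^{(a)} = \id$ recovers $\mathrm{HW}$, so $\mathcal{B}$ is a genuine deformation of it.

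To separate $\mathcal{B}$ from $\mathrm{HW}$ I would use the following $\{0,1\}$-valued quantity $\Lambda$: $\Lambda(\mathcal{C}) = 1$ iff some ``left translate'' $\mathcal{C}K^{-1} \eqdef \{\, CK^{-1} : C \in \mathcal{C} \,\}$ (for some $K \in \mathcal{C}$) contains three elements whose product has two eigenvalues with ratio that is not a root of unity. The crux --- and the step I would handle most carefully --- is that $\Lambda$ is invariant under $\{E_i\} \mapsto \{\alpha_i U E_i V\}$: a left translate transforms as $\{E_i E_m^{-1}\} \mapsto U\{\, (\text{phase})\cdot E_i E_m^{-1} \,\}U^{*}$, the stray factor $VU$ cancelling exactly because one has passed to translates, so a product of three elements of a translate becomes a scalar times a conjugate by $U$ of the original product, which leaves the multiset of eigenvalue ratios unchanged; the individual phases $\alpha_i$ and relabeling are harmless.

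It then remains to evaluate $\Lambda$ on the two bases. For $\mathrm{HW}$: since $(\PauliX_d^a\PauliZ_d^b)^d = \omega_d^{ab\binom{d}{2}}\id$ is a scalar, every Heisenberg--Weyl operator has all its eigenvalues equal to $d$th roots of a root of unity, hence roots of unity; a left translate of $\mathrm{HW}$ is just $\mathrm{HW}$ with each element rescaled by a scalar, and a product of three such elements is a scalar times a single Heisenberg--Weyl operator, whose eigenvalue ratios (the scalar cancelling) are roots of unity --- so $\Lambda(\mathrm{HW}) = 0$. For $\mathcal{B}$ with $d \ge 3$: pick $\zeta$ on the unit circle that is not a root of unity, take $E^{(1)} = \diag(\zeta, 1, \dots, 1)$ and all other $E^{(a)} = \id$; then the three elements $E^{(1)}\PauliX_d$, $\PauliX_d^{d-1}$, $\id$ of $\mathcal{B}$ --- distinct because $d \ge 3$ --- multiply to $E^{(1)}\PauliX_d^d = E^{(1)}$, whose eigenvalue ratio $\zeta$ is not a root of unity, so $\Lambda(\mathcal{B}) = 1$. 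Hence $\mathcal{B} \not\sim \mathrm{HW}$, which proves the theorem (and, with a suitable refinement of $\Lambda$ recording the offending ratios, one expects a whole continuum of pairwise inequivalent bases).

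I expect the main obstacle to be finding the invariant rather than applying it. The quantities one reaches for first --- second moments such as $\sum_{i,j,k,l}\bigl|\Tr(E_i^{*}E_j E_k^{*}E_l)\bigr|^2$ --- are in fact the \emph{same} for every orthogonal unitary basis, being pinned down by the $1$-design identity $\sum_i E_i X E_i^{*} = d\,\Tr(X)\,\id$; so the invariant has to genuinely detect the difference between a ``group-like'' basis and one that is not. Eigenvalue spectra of products of basis elements do this, but only once one correctly quotients out the left/right multiplication freedom, which is exactly the role of the ``left translate'' clause in $\Lambda$.
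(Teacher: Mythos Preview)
Your argument is correct. The orthogonality check for $a\neq a'$ is phrased a bit loosely (the diagonal factors $(E^{(a)})^*E^{(a')}$ do not literally cancel), but the point stands: the product is a diagonal times the fixed-point-free permutation $\PauliX_d^{a'-a}$ times a diagonal, hence has zero diagonal and zero trace. The invariance of $\Lambda$ is exactly right --- passing to a left translate kills the stray $VU$, and eigenvalue \emph{ratios} are blind to the residual global phase and conjugation --- and the evaluations $\Lambda(\mathrm{HW})=0$, $\Lambda(\mathcal{B})=1$ go through as you describe.

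Your route is genuinely different from the paper's. The paper proceeds by cases: for $d\ge 5$ it replaces the powers of $\PauliX_d$ by a sequence of disjoint perfect matchings in $K_{d,d}$ in which one permutation has a cycle of length not dividing $d$, and separates this from $\mathrm{HW}$ by comparing eigenvalues of a \emph{single} translated element (using that all eigenvalues of $\PauliX_d^a\PauliZ_d^b$ lie in one coset of the $d$th roots of unity); for $d=4$ it uses the tensor power of the qubit Pauli basis versus the clock--shift basis and the same spectral idea; for $d=3$ it uses Werner's diagonal perturbation together with the observation that $\mathrm{HW}$ is a \emph{commutative projective group} while the perturbed basis is not. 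Your construction is essentially the Werner perturbation extended to all $d$, and your $\Lambda$ is a single spectral invariant that replaces all three of the paper's case-specific invariants. What you gain is uniformity and, as you note, an immediate continuum of inequivalent bases by varying $\zeta$. What the paper's approach buys is that for $d\ge 4$ its bases have entries in roots of unity and arise from combinatorial data (Latin squares, matchings), and the invariant there needs only a \emph{pair} of basis elements rather than a triple; your use of three elements (one of them $\id$) is what lets the product collapse to the diagonal $E^{(1)}$ for every $d\ge 3$ simultaneously.
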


The uniqueness of orthogonal unitary bases was first studied by Vollbrecht and Werner~\cite{VW00-Pauli-operators}, and the existence of non-equivalent orthogonal unitary bases for all dimensions greater than~$2$ was observed in follow-up work by Werner~\cite{Werner01-teleportation-dense-coding}. (We elaborate on prior work on the topic in \Cref{sec-oub-uniqueness}.) Werner described, without proof, how non-equivalent bases may be constructed. We present explicit constructions of such bases in \Cref{sec-oub}. The construction for~$d \ge 4$ is based on the observation that the shift operator~$\PauliX_d$ corresponds to a perfect matching in~$K_{d,d}$, the complete bipartite graph. Moreover, its powers~$\set{ \PauliX_d^i : 0 \le i < d }$ correspond to a partition of the edge set of~$K_{d,d}$ into~$d$ \emph{disjoint\/} perfect matchings. By replacing this partition with another carefully chosen such partition, we obtain an orthogonal unitary basis that is not equivalent to the clock and shift construction. The proof of non-equivalence involves comparing the spectra of the operators in the two bases, taking into account the complex phase and unitary operators that witness a potential equivalence map. For~$d = 3$, we follow a construction described by Werner~\cite{Werner01-teleportation-dense-coding}. We prove non-equivalence to the clock and shift basis by showing that the resulting basis is not a commutative projective group (again accounting for a potential equivalence map).

In a previous version of this paper, we conjectured that rigidity for higher dimensional superdense coding holds \emph{up to\/} choosing orthogonal unitary bases~\cite[Conjecture~1.3]{NY20-rigidity}. That is, every $d$-dimensional superdense coding protocol is locally equivalent (in the sense of \Cref{thm:rigidity}) to one where Alice and Bob share an entangled state $\rho^{A'B'} \otimes \ketbra{\mes_d}{\mes_d}^{A'' B''}$ for some density matrix $\rho^{A'B'}$, and Alice's encoding operators are of the form 
\[
	U_i = \sum_r P_r \otimes E_{r,i}
\]
where $\{P_r\}$ is a set of pairwise orthogonal projectors that sum to the identity on $\Hilb_{A'}$ and for every~$r$, the set $\{E_{r,i}\}_{i \in [d^2]}$ is an orthogonal unitary basis for the space of $d \times d$ complex matrices. This would be a natural extension of the statement of \Cref{thm:rigidity} to the case of general $d \geq 2$ where the registers $A' B'$ are treated as a source of ``shared randomness'' to help Alice and Bob synchronize their choice of orthogonal unitary basis. 

We can show that when the shared entangled state between Alice and Bob is a pure state in $\C^d \otimes \C^d$, then this conjecture holds (see \Cref{sec-oub-connection}): up to local unitary operators, the shared state is necessarily the maximally entangled state $\ket{\mes_d}$ of local dimension $d$, and the encoding operators $\{U_i\}$ necessarily form an orthogonal unitary basis. However, this conjecture is false for protocols where Alice sends only a \emph{part} of her entangled state. In work subsequent to ours, Farkas, Kaniewski, and Nayak~\cite{FKN22-mum-superdense-coding} show that there exist infinitely many superdense coding protocols that are not locally equivalent to a protocol of the form described in the conjecture. In particular, in these counterexample protocols Alice may perform a complicated entangling operation between her message and the rest of her state, rather than just treating the ancilla system as a source of shared randomness.

\subsection{Superdense coding protocols with error}
\label{sec-robust-rigidity}

Finally, we consider \emph{probabilistic} protocols for $d$-dimensional superdense coding, where Bob's decoding only needs to succeed with high probability. In particular, we say that $(\tau,(U_i))$ is a $(d,\eps)$-superdense coding protocol if Bob is able to decode Alice's message $i$ with probability at least $1 - \eps$, for all $i$. We focus on the case where Alice and Bob share an entangled state in $\C^d \otimes \C^d$ (i.e., have local dimension $d$). As mentioned previously, in the exact case $\eps = 0$, their shared state is necessarily the maximally entangled state and Alice's encoding unitary operators form an orthogonal unitary basis. We conjecture that even in the probabilistic setting, this characterization of $d$-dimensional superdense coding protocols is \emph{robust\/}, in the following sense.
\begin{conjecture}
\label{conj:robust-rigidity}
There exist functions $\delta_1,\delta_2: [0,1] \to [0,1]$ where $\delta_1(\eps)$ and $\delta_2(\eps)$ monotonically decrease to $0$ as $\eps \to 0$, such that the following holds. For all $(d,\eps)$-superdense coding protocols $(\tau,(U_i))$ such that $\tau$ is a density matrix on $\C^d \otimes C^d$ and $U_i$ are unitary operators in $\unitary(\C^d)$, we have 
\[
	\bra{\mes_d} \tau \ket{\mes_d} \quad \geq \quad 1 -  \delta_1(\eps) \enspace,
\]
and there exists an orthogonal unitary basis $\{E_i\}_{i \in [d^2]}$ for the space of $d \times d$ complex matrices such that for all $i \in [d^2]$,
\[
	\| U_i - E_i \|_{\nhs} \quad \leq \quad \delta_2(\eps) \enspace,
\]
where $\| X\|_{\nhs} \eqdef  \sqrt{\frac{1}{d} \Tr(X X^\dagger)}$ denotes the normalized Hilbert-Schmidt norm on the space of $d \times d$ matrices.
\end{conjecture}
We note that the choice of the normalized Hilbert-Schmidt norm in the statement of \Cref{conj:robust-rigidity} is somewhat arbitrary; one can also consider other formulations of the conjecture with other norms (such as the spectral norm, etc.).  

The last part of our paper analyzes a possible challenge to \Cref{conj:robust-rigidity} proposed by Aram Harrow. %
Consider the following probabilistic construction for a potential $(d,\eps)$-superdense coding protocol:
independently sample $d^2$ matrices $\bU_1,\ldots, \bU_{d^2}$ from the Haar measure on~$\unitary( \complex^d)$, the group of $d \times d$ complex unitary matrices. Let $\tau \eqdef \ketbra{\mes_d}{\mes_d}$ denote the $d$-dimensional maximally entangled state. How well does the protocol $(\tau,(\bU_i))$ accomplish superdense coding? 

In classical and quantum communication, many tasks can be performed near-optimally via probabilistically constructed protocols. See, e.g., the text by Wilde~\cite{wilde2013quantum} for examples from Shannon theory. A simple example from communication complexity is the task of \emph{quantum fingerprinting\/}~\cite{buhrman2001quantum}, which enables checking whether two $n$-bit strings $x$ and $y$ are equal by only comparing two $\Order(\log n)$-qubit \emph{fingerprints} of the strings. It can be shown that picking random $\Order(\log n)$-qubit states for each $n$-bit string $x$ yields a good quantum fingerprinting protocol. 

Let $\bPi_d$ denote the random superdense protocol specified by~$( \mes_d, (\bU_i))$. Note that the error $\beps$ of $\bPi_d$, when averaged over the choice of random unitaries $(\bU_i)$, is some function of $d$. We first argue that the conjecture implies that the error of a random superdense coding protocol, when averaged over the choice of $(\bU_i)$, cannot be too small:

\begin{restatable}{proposition}{counterexample}
\label{prop:counterexample}
Suppose \Cref{conj:robust-rigidity} were true. Let $\delta_2(\eps)$ be the function from \Cref{conj:robust-rigidity}. Then the random superdense coding protocol~$\bPi_d$ specified by~$( \mes_d, (\bU_i))$ must have error $\beps$ satisfying
\[
	\Ex_{(\bU_i)} \delta_2(\beps)^2 \quad \geq \quad (2d)^{-2} \enspace.
\]
\end{restatable}
Put another way, it cannot be that both \Cref{conj:robust-rigidity} is true and also the random superdense protocol has error vanishing so quickly such that $\delta_2(\beps)$ is smaller than $(2d)^{-2}$, on average. Due to the concentration of measure phenomenon for Haar-random states and unitary operators (as expressed by, e.g., the L\'{e}vy-like property in \Cref{thm-Haar-concentration}), it is plausible \emph{a priori\/} that the average error $\beps$, and therefore also~$\delta_2(\beps)$, scale as $\order(d^{-2})$. Thus, the random superdense protocol is potentially a counterexample to \Cref{conj:robust-rigidity}.

We show that this probabilistic construction does \emph{not\/} yield a good superdense coding protocol: with overwhelmingly high probability over the choice of random unitary operators $(U_i)$, the protocol has a nonzero probability of error that is independent of $d$. Thus, \Cref{conj:robust-rigidity} is not ruled out by the random protocol construction.

\begin{restatable}[Performance of a random superdense coding protocol]{theorem}{random-protocol}
\label{thm:random-protocol}
The random superdense coding protocol~$\bPi_d$ specified by~$( \mes_d, (\bU_i))$ where~$\bU_i \in \unitary( \complex^d)$ are Haar-random unitary operators has error at least~$1 - \tfrac{8}{3\pi} \approx 0.15$ as~$d \to \infty$, with high probability over the choice of $(\bU_i)$. 
\end{restatable}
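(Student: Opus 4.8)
The plan is to lower-bound the optimal decoding error by analyzing the Holevo-type quantity governing how well Bob can distinguish the $d^2$ states $\rho_i \eqdef (\bU_i \otimes \id)\ketbra{\mes_d}{\mes_d}(\bU_i \otimes \id)^*$. Since the states $\rho_i$ are equiprobable, the optimal success probability is $\MP_{\text{succ}} = \tfrac{1}{d^2}\trace\bigl(\sum_i \Pi_i \rho_i\bigr)$ maximized over POVMs $\{\Pi_i\}$, which by the pretty-good-measurement bound and the operator-norm bound satisfies $\MP_{\text{succ}} \le \tfrac{1}{d^2}\,\bignorm{\sum_i \rho_i}_\infty \cdot (\text{rank bound})$ — more precisely, I would use the clean inequality $\MP_{\text{succ}} \le \tfrac{1}{d^2}\bignorm{\sum_i \rho_i}_\infty \cdot d$ coming from the fact that $\sum_i \Pi_i = \id$ on a $d^2$-dimensional space but each $\rho_i$ has rank one, so $\trace(\sum_i \Pi_i \rho_i) \le \sum_i \norm{\Pi_i}_\infty \le$ (a bound via $\trace \Pi_i \le$ dimension arguments). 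The cleaner route: each $\rho_i = \ketbra{\psi_i}{\psi_i}$ with $\ket{\psi_i}$ a maximally entangled state, and $\MP_{\text{succ}} \le \tfrac{1}{d^2}\Bignorm{\sum_{i}\ketbra{\psi_i}{\psi_i}}_\infty$ is false in general, so instead I will bound $\MP_{\text{succ}}$ by $\tfrac{1}{d^2}\sum_i \bra{\psi_i}\Pi_i\ket{\psi_i}$ and use that the Gram matrix $G_{ij} = \abs{\braket{\psi_i}{\psi_j}}^2 = \tfrac{1}{d^2}\abs{\trace(\bU_i^*\bU_j)}^2$ controls distinguishability; specifically, via the operator inequality relating $\MP_{\text{succ}}$ to $\frac{1}{d^2}\lambda_{\max}$ of the matrix $[\braket{\psi_i}{\psi_j}]$, which has spectral norm $\tfrac{1}{d^2}\bignorm{[\trace(\bU_i^*\bU_j)]}_\infty$.

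Next I would compute the relevant spectral statistics of the $d^2 \times d^2$ Gram-type matrix $M$ with $M_{ij} = \tfrac{1}{d}\trace(\bU_i^* \bU_j)$ (so $M_{ii}=1$). The key random-matrix input is that for $i \ne j$, $\trace(\bU_i^*\bU_j)$ is, for large $d$, approximately a standard complex Gaussian: by Haar-invariance and the moment method (or Weingarten calculus), $\Ex\abs{\trace(\bU_i^*\bU_j)}^2 = 1$ and higher moments match the complex Gaussian, with fluctuations $O(1/d)$. Thus $M = \id + \tfrac{1}{\sqrt d} G + (\text{error})$ where $G$ is (close to) a $d^2 \times d^2$ GUE-like matrix with i.i.d. complex Gaussian off-diagonal entries and zero diagonal. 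The spectral norm of such $G$ is concentrated around $2\sqrt{d^2} = 2d$ by the semicircle law / Bai–Yin, so $\bignorm{M - \id}_\infty \approx \tfrac{1}{\sqrt d}\cdot 2d = 2\sqrt d$, giving $\bignorm{M}_\infty \approx 2\sqrt d$ and hence $\MP_{\text{succ}} \lesssim \tfrac{1}{d^2}\cdot d \cdot 2\sqrt{d} \to 0$?? — that overshoots, so I must be more careful: the correct normalization is that $[\braket{\psi_i}{\psi_j}] = \tfrac1d[\trace(\bU_i^*\bU_j)]$ as a $d^2\times d^2$ matrix has operator norm concentrated, and the Holevo–Curlander / Hausladen–Wootters bound gives $\MP_{\text{succ}} \le \bignorm{\,[\braket{\psi_i}{\psi_j}]\,}_\infty$ directly (not divided by $d^2$ again), which with $[\trace(\bU_i^*\bU_j)]$ having norm $\Theta(d)$ gives $\MP_{\text{succ}} \le \Theta(1)$; pinning the constant to $\tfrac{8}{3\pi}$ is what requires the delicate eigenvalue computation below.

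To get the precise constant $1 - \tfrac{8}{3\pi}$, I would identify the limiting empirical spectral distribution of the $d^2 \times d^2$ matrix $N \eqdef \tfrac{1}{d^2}[\abs{\trace(\bU_i^*\bU_j)}^2]$ or, better, work with the sum-of-states operator $R \eqdef \sum_i \rho_i$ acting on $\complex^d \otimes \complex^d$. Writing $\ket{\psi_i} = (\bU_i\otimes\id)\ket{\mes_d}$, one has $R = \tfrac{1}{d}\sum_i (\bU_i\otimes\id)\ketbra{\mes_d}{\mes_d}(\bU_i^*\otimes\id)$, whose spectrum (pushed forward onto a $d^2$-dimensional space) governs $\MP_{\text{succ}}$ through the pretty-good-measurement two-sided bound $\tfrac{1}{d^2}\sum_i\bra{\psi_i} R^{-1}\ket{\psi_i}\,(\cdots)$; by Haar-concentration, $R/\Ex\trace R$ converges to the Marchenko–Pastur-type law with ratio $1$ (since we are summing $d^2$ rank-one maximally entangled projectors in a $d^2$-dimensional space, the "aspect ratio" is $1$). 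The Holevo–Helstrom error for equiprobable states bounded by the pretty-good measurement then reduces to an integral against the Marchenko–Pastur density with parameter $1$, namely $\tfrac{1}{2\pi}\sqrt{\tfrac{4-x}{x}}$ on $[0,4]$; the quantity $\int_0^4 \sqrt{x}\cdot\tfrac{1}{2\pi}\sqrt{\tfrac{4-x}{x}}\,dx = \tfrac{1}{2\pi}\int_0^4\sqrt{4-x}\,dx = \tfrac{1}{2\pi}\cdot\tfrac{16}{3} = \tfrac{8}{3\pi}$, which is exactly the claimed success probability upper bound, so the error is at least $1 - \tfrac{8}{3\pi}$. I would then convert this expectation statement into a high-probability statement via standard concentration for Lipschitz functions of Haar-random unitaries (Gromov–Milman / Levy's lemma on $\unitary(d)^{\otimes d^2}$) applied to the decoding success probability, which is $O(1/\sqrt d)$-Lipschitz in the $\bU_i$'s.

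The main obstacle, and the step deserving the most care, is rigorously establishing convergence of the spectral distribution of $R$ (equivalently of the $d^2 \times d^2$ matrix $[\braket{\psi_i}{\psi_j}]$) to the Marchenko–Pastur law with ratio $1$, together with the matching two-sided bound on $\MP_{\text{succ}}$: one must show both that the pretty-good measurement achieves success probability $\to \tfrac{8}{3\pi}$ and that \emph{no} measurement does better, which requires an upper bound like $\MP_{\text{succ}} \le \bignorm{[\braket{\psi_i}{\psi_j}]}_\infty^{1/2}\cdot(\text{something})$ or a direct SDP-duality argument certifying optimality in the limit. The subtleties are: (i) the entries $\trace(\bU_i^*\bU_j)$ are not exactly Gaussian and are not independent across all pairs $(i,j)$ (they share the $\bU_i$'s), so one needs a moment-method or Lindeberg-type argument tailored to this dependency structure, controlling that the $O(1/d)$ corrections do not shift the limiting law; and (ii) one must ensure that the $1/\sqrt d$-scale edge fluctuations of the spectrum (Tracy–Widom corrections) and the Lipschitz-concentration error are both $o(1)$, which they are, but this needs to be checked to justify the "with high probability as $d\to\infty$" qualifier. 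Everything else — the Holevo/pretty-good-measurement inequalities, the explicit integral evaluation — is routine once the spectral statistics are in hand.
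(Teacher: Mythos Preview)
Your overall architecture is correct and matches the paper: bound the optimal success probability by a spectral functional of~$\bQ \eqdef \sum_i \ketbra{\bpsi_i}{\bpsi_i}$, show the empirical spectral distribution of~$\bQ$ converges to Mar\v{c}enko--Pastur with ratio~$1$, compute the integral~$\int_0^4 \sqrt{x}\,\MP_1(x)\,dx = \tfrac{8}{3\pi}$, and upgrade to high probability via Lipschitz concentration on~$\unitary(\complex^d)^{d^2}$. However, three of your steps have genuine gaps or unnecessary detours.

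\textbf{The upper bound on success probability.} Your first two paragraphs cycle through several incorrect or incomplete bounds before gesturing at ``SDP duality'' as the way to certify that no measurement beats~$\tfrac{8}{3\pi}$. This is unnecessary. The clean tool is the Holevo--Curlander upper bound: for any ensemble~$((p_i,\rho_i))$, the optimal success probability is at most~$\Tr\sqrt{\sum_i p_i^2 \rho_i^2}$. For the uniform pure-state ensemble~$(\tfrac{1}{n},\ketbra{\bpsi_i}{\bpsi_i})$ with~$n = d^2$ this is exactly~$\tfrac{1}{n}\Tr\sqrt{\bQ}$, i.e., the expectation of~$\sqrt{\bL_d}$ where~$\bL_d$ is a uniformly random eigenvalue of~$\bQ$. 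This is where the~$\sqrt{x}$ integrand comes from, and no pretty-good-measurement lower bound or SDP argument is needed for the theorem as stated (which only asserts error~$\ge 1 - \tfrac{8}{3\pi}$).

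\textbf{Handling the dependence.} You correctly flag that the entries~$\trace(\bU_i^*\bU_j)$ are not independent and propose a moment-method or Lindeberg-type argument. The paper instead works directly with~$\bQ = \tfrac{1}{n}\bR\bR^\adjoint$ where the \emph{columns} of~$\bR$ are the i.i.d.\ vectors~$d\ket{\bpsi_i}$, and invokes Yaskov's version of the Mar\v{c}enko--Pastur theorem, which requires only that the column distribution be \emph{pseudo-isotropic}: $\tfrac{1}{n}(\bra{\bxi_n}A_n\ket{\bxi_n} - \Tr A_n) \to 0$ in probability for any bounded sequence~$(A_n)$. This is verified by a direct second-moment computation using the closed form for~$\expct\ketbra{\bpsi}{\bpsi}^{\otimes 2}$ (a linear combination of~$\id$ and swap operators on the four tensor factors). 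This is both cleaner and more robust than trying to approximate the Gram matrix by a GUE.

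\textbf{Convergence of means.} You treat ``edge fluctuations'' as the only obstacle to passing from the limiting spectral law to the limiting value of~$\expct\sqrt{\bL_d}$. This misses a more basic measure-theoretic point that the paper devotes an entire subsection to: weak convergence~$\bL_d \Rightarrow \bL$ does \emph{not} by itself imply~$\expct\sqrt{\bL_d} \to \expct\sqrt{\bL}$, because~$\sqrt{x}$ is unbounded. One needs uniform integrability, which the paper obtains by proving that~$d\ket{\bpsi}$ is sub-gaussian (via concentration on~$\unitary(\complex^d)$) and then applying a Bai--Yin-type bound to show~$\Pr(\norm{\bQ} > t) \le 2\exp(-tn/c)$ for large~$t$. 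This tail bound is what lets you truncate~$\sqrt{x}$ to a bounded function and pass to the limit. The same gap appears in Montanaro's earlier analysis of random pure states, and the paper explicitly notes that its argument closes that gap as well.
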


We prove \Cref{thm:random-protocol} by showing that the \emph{distinguishability\/} of the ensemble of random states $\{ (\bU_i \otimes \id) \ket{\mes_d} \}_{i \in [d^2]}$ is bounded away from $1$ (with high probability). The generalized Holevo-Curlander bounds~\cite{Holevo79-distinguishability,Curlander79-distinguishability,ON99-converse-channel-coding,Tyson09-distinguishability} relate the distinguishability of an ensemble $\{ (p_i, \rho_i) \}$ to the quantity
\begin{equation}
\label{eq-hc-bound}
\Tr \Big( \sum_i p_i^2 \rho_i^2 \,\Big)^{1/2} \enspace.
\end{equation}
Our analysis of this quantity is largely inspired by work due to Montanaro~\cite{montanaro2007distinguishability} on the distinguishability of random pure quantum states. However, extending his approach to the ensemble of interest to us---one consisting of random \emph{maximally entangled\/} states---involves significant technical difficulties. The approach involves relating the distinguishability of an ensemble of states to the spectrum of the ensemble average. In the case of  Haar-random pure states, the ensemble average is well approximated by the ensemble average of \emph{unnormalized\/} complex gaussian vectors with suitably chosen variance. The spectrum of such matrices in the asymptotic limit is given by the Mar\v{c}enko-Pastur Theorem from random matrix theory. In our case, the entries of the random vectors in the ensemble are \emph{not\/} independent. We instead bound the generalized Holevo-Curlander quantity in \Cref{eq-hc-bound} by employing a recent generalization of the Mar\v{c}enko-Pastur Theorem due to Yaskov~\cite{Yaskov16-MP-short-proof}. (The theorem was proven for ensembles of random real vectors. We verify that its proof also extends to complex random vectors with analogous properties.) In the process, we show that random maximally entangled states satisfy a \emph{pseudo-isotropy\/} condition that suffices for the theorem to hold. 

A subtlety in the use of the Mar\v{c}enko-Pastur law is that we would like to deduce the convergence of a sequence of means to the mean of the limiting distribution from the convergence of a sequence of distributions. This does not necessarily hold in general. In order to prove such a relation between the two forms of convergence, we show that random maximally entangled states are \emph{sub-gaussian\/}. This allows us to draw on a generalization of the Bai-Yin Theorem, which bounds the norm of matrices whose columns are given by i.i.d.\ sub-gaussian vectors. We thus show that the norm of the ensemble average has an exponentially decaying tail, which in turn guarantees the form of convergence we seek.

We believe the techniques used in our analysis are of independent interest. In fact, the subtlety mentioned above was overlooked by Montanaro; the ideas we develop may also be used to close a gap in his analysis (see Section~\ref{sec-expectation-limit} for the details).

\subsection{Further remarks and open questions}

In this paper we have initiated the study of rigidity phenomena in superdense coding protocols. Given its importance in quantum Shannon theory, our study may shed new light on protocols based on superdense coding. The power of entanglement as a resource in distributed quantum computation, in particular in two-party communication complexity, remains a mystery. The rigidity theorem we establish (Theorem~\ref{thm:rigidity}) gives a complete picture for a simple but fundamental task. The property shown in the analysis of random superdense coding protocols (Theorem~\ref{thm:random-protocol}) may also be interpreted as placing a limit on how closely a sequence of random unitary operators approximate an orthogonal basis. This may be of relevance to the theory of error-correction, where unitary error bases play a central role. 

We list several open questions that arise from this work:
\begin{enumerate}
	\item Is \Cref{conj:robust-rigidity} true? Does a robust version of \Cref{thm:rigidity} hold? 
	\item Do $d$-dimensional superdense coding protocols, in which the shared state between Alice and Bob may have local dimension larger than~$d$, also exhibit some non-trivial form of rigidity?
	\item Does rigidity also hold for quantum teleportation, a task that is ``dual'' to superdense coding? Can this be derived in a black-box way from the rigidity of superdense coding?
	\item Are there any connections between the QRAC rigidity results of~\cite{tavakoli2018self,farkas2019self} and our results on the rigidity of superdense coding?
	\item What other quantum information processing tasks have the rigidity property?
\end{enumerate}
We believe the investigation of these questions will lead to significant new insights into the nature of quantum information, with wide ranging ramifications.

\paragraph{Acknowledgements.}
We thank the anonymous journal reviewers for their thorough feedback on an earlier version of the paper. We thank Adam Bouland, Chinmay Nirkhe, and Zeph Landau for stimulating discussions at the beginning of this project. H.Y.\ would like to especially thank Adam Bouland for his integral role in formulating the questions explored in this paper. We would like to thank Pavel Yaskov for the correspondence on his work on the Mar\v{c}enko–Pastur theorem.
We thank J{\k{e}}drzej Kaniewski and Mate Farkas for their pointers to the literature on rigidity of QRACs. A.N.\ would like to thank Kanstantsin Pashkovich and Vern Paulsen for helpful discussions on bases of orthogonal unitary operators, and is grateful to the Berkeley CS Theory Group for their hospitality during a visit in Fall~2017, when this work was initiated. A.N.'s research is supported in part by a Discovery Grant from NSERC Canada.
H.Y.\ is supported by an NSERC Discovery Grant, a Google Quantum Research Award, and AFOSR Grant No.\! FA9550-21-1-0040. 
This research was partly conducted at the Kavli Institute for Theoretical Physics during the Quantum Physics of Information program in 2017 (and thus this research was supported in part by the National Science Foundation under Grant No. NSF PHY17-48958).

\section{Properties of superdense coding}
\label{sec:prelim}

\subsection{Quantum information basics} 

We refer the reader to texts such as~\cite{NC11-quantum-information,Watrous18-TQI,wilde2013quantum} for the basics of quantum information, and mention some notational conventions here.

We write $\id$ to denote the identity operator on a Hilbert space. We use superscripts on quantum states, e.g., $\ket{\psi}^{AB}$ or $\rho^{AB}$, to denote the registers in which they are stored. Similarly, we use subscripts on operators to indicate the registers on which they act, unless this is clear from the context. Given a bipartite density matrix $\rho^{AB}$, we write $\rho^A$ to denote its reduction to register $A$ (i.e., the partial trace over~$B$).

Given operators $A, B$ and a density matrix $\rho$ acting on a Hilbert space $\Hilb$, we write $A =_\rho B$ to denote $A\rho A^* = B\rho B^*$. In other words, the operators $A$ and $B$ have the same action on the state $\rho$.

We write $\ket{\EPR}$ to denote the maximally entangled state $\frac{1}{\sqrt{2}} \Big( \ket{00} + \ket{11} \Big)$ on two qubits.  We write $\ket{\mes_d} = \frac{1}{\sqrt{d}} \sum_{i=0}^{d-1} \ket{i} \ket{i}$ to denote the $d$-dimensional maximally entangled state, or simply $\ket{\mes}$ if the dimension $d$ is clear from context. We recall the single qubit Pauli matrices:
\[
	\id \eqdef \begin{pmatrix} 1 & 0 \\ 0 & 1 \end{pmatrix} \qquad \PauliX \eqdef \begin{pmatrix} 0 & 1 \\ 1 & 0 \end{pmatrix} \qquad  \PauliY \eqdef \begin{pmatrix} 0 & - \complexi \\ \complexi & 0 \end{pmatrix} \qquad  \PauliZ \eqdef \begin{pmatrix} 1 & 0 \\ 0 & -1 \end{pmatrix} \;.
\]

\subsection{Basic properties of superdense coding}

Here we give a formal definition of a general superdense coding protocol, and prove some basic properties about them.

\begin{definition}[Superdense coding protocol]
\label{def:protocol}
Let $d$ be a positive integer.
\suppress{
Let $\Hilb_A \eqdef \Hilb_{A'} \otimes \Hilb_{A''},\Hilb_B \eqdef \Hilb_{B'} \otimes \Hilb_{B''}$ be finite dimensional Hilbert spaces where $\Hilb_{A'}$ is isomorphic to $\Hilb_{B'}$ and $\Hilb_{A''},\Hilb_{B''}$ are both isomorphic to $\complex^d$. 
}
Let $\Hilb_A \eqdef \Hilb_{A'} \otimes \Hilb_{A''}$ and~$\Hilb_B$ be finite dimensional Hilbert spaces where $\Hilb_{A''}$ is isomorphic to $\complex^d$. 
Let $\tau$ denote a density matrix on $\Hilb_A \otimes \Hilb_B$ and let $(U_i)_{i \in [d^2]}$ denote a sequence of $d^2$ unitary operators (called \emph{encoding operators}) acting on $\Hilb_A$. We say that $(\tau,(U_i))$ is a \emph{$(d,\eps)$-superdense coding protocol} if there exists a POVM $\{ M_i \}_{i \in [d^2]}$ acting on $\Hilb_{A''} \otimes \Hilb_B$ such that
\begin{equation}
\label{eq:protocol-povm}
	\Tr(M_i \, \rho_i) \geq 1 - \eps \qquad \forall \, i \in [d^2]
\end{equation}
where $\rho_i$ denotes the reduced density matrix of $(U_i \otimes \id) \tau (U_i \otimes \id)^*$ on registers $A'' B$. When $\eps = 0$ we simply call $(\tau,(U_i))$ a \emph{$d$-dimensional superdense coding protocol}.
\end{definition}

\begin{lemma}[Orthogonality conditions I]
\label{lem:orthog0}
Let $(\tau,(U_i))$ be a $d$-dimensional superdense coding protocol. Then letting $\rho_i$ denote the reduced density matrix of $(U_i \otimes \id) \tau (U_i \otimes \id)^*$ on registers $A'' B$, we have that
\[
	\Tr(\rho_i \, \rho_j) = 0 \qquad \forall i \neq j \in [d^2]\;.
\]
\end{lemma}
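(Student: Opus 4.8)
The plan is to exploit that a $d$-dimensional superdense coding protocol is \emph{zero-error}: by \Cref{def:protocol} with $\eps = 0$, there is a POVM $\{M_i\}_{i \in [d^2]}$ acting on $\Hilb_{A''} \otimes \Hilb_B$ with $\Tr(M_i \rho_i) = 1$ for every $i \in [d^2]$. The lemma is then the standard fact that an ensemble of states that can be perfectly distinguished by a measurement must have pairwise orthogonal supports; I would simply spell out the elementary operator-positivity argument behind it.

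First I would convert each scalar identity $\Tr(M_i \rho_i) = 1$ into the operator identity $M_i \rho_i = \rho_i$. Rewriting it as $\Tr\big((\id - M_i)\rho_i\big) = 0$ and noting that $\id - M_i$ and $\rho_i$ are positive semidefinite, the left-hand side equals $\Tr(X^* X)$ for $X = (\id - M_i)^{1/2}\rho_i^{1/2}$; hence $X = 0$, so $(\id - M_i)\rho_i = 0$, i.e.\ $M_i \rho_i = \rho_i$. In particular $M_i$ acts as the identity on the support of $\rho_i$, and taking adjoints (all operators here are Hermitian) gives $\rho_i M_i = \rho_i$ as well.

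Next, fix $j \neq i$. Since the $M_k$ are positive semidefinite and sum to $\id$, we have $0 \le M_j \le \id - M_i$ in the operator order, and therefore $0 \le \Tr(M_j \rho_i) \le \Tr\big((\id - M_i)\rho_i\big) = 0$. The same square-root argument as above yields $M_j \rho_i = 0$, that is, the support of $\rho_i$ lies in the kernel of $M_j$. Finally, using $\rho_j M_j = \rho_j$ (the adjoint of $M_j \rho_j = \rho_j$), the identity $M_j \rho_i = 0$, and cyclicity of the trace,
\[
\Tr(\rho_i \rho_j) \;=\; \Tr(\rho_i \rho_j M_j) \;=\; \Tr\big(M_j \rho_i\, \rho_j\big) \;=\; 0 \;,
\]
which is the claim.

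There is essentially no obstacle here; the only step deserving care is the passage from the trace conditions to the operator identities $M_i\rho_i = \rho_i$ and $M_j \rho_i = 0$ --- remembering that $\Tr(AB) = 0$ for positive semidefinite $A, B$ forces $A^{1/2}B^{1/2} = 0$, not merely that the product has zero trace. (A robust analogue of this lemma for $(d,\eps)$-protocols, giving $\Tr(\rho_i\rho_j) = \Order(\eps)$ or similar, would instead require tracking the above inequalities quantitatively, but that is not needed for the zero-error statement.)
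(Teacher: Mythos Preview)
Your proof is correct and follows essentially the same approach as the paper's: both use the zero-error condition $\Tr(M_i\rho_i)=1$ together with the POVM constraint $\sum_k M_k=\id$ to force $\Tr(M_j\rho_i)=0$ for $j\neq i$, and then deduce $\Tr(\rho_i\rho_j)=0$. The only cosmetic difference is that the paper extracts the operator inequality $\rho_i\le M_i$ (via a spectral decomposition of $\rho_i$) and concludes by $0\le\Tr(\rho_i\rho_j)\le\Tr(\rho_i M_j)=0$, whereas you extract the operator identities $M_i\rho_i=\rho_i$ and $M_j\rho_i=0$ via the square-root trick and conclude by inserting $M_j$ under the trace; both routes are standard and equally short.
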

\begin{proof}
	Let $\{M_i\}$ denote a POVM satisfying \Cref{eq:protocol-povm} for $\eps = 0$. Then for all $i \in [d^2]$, we have $\rho_i \leq M_i$ according to the positive semidefnite ordering. This is because if we write $\rho_i = \sum_k p_{ik} \ketbra{\psi_{ik}}{\psi_{ik}}$ for some probabilities $\{p_{ik}\}$, then $\Tr(\rho_i M_i) = 1$ implies that for all $k$, $\bra{\psi_{ik}} M_i \ket{\psi_{ik}} = 1$, which implies that $\ket{\psi_{ik}}$ is an eigenvector of $M_i$ with eigenvalue $1$. This implies that $M_i = \sum_k \ketbra{\psi_{ik}}{\psi_{ik}} + M_i'$ for some positive semidefinite operator $M_i'$, and this is at least $\rho_i$ in the positive semidefinite ordering. 
	
	This means then that for all $j \neq i$, 
	\[
		0 \leq \Tr(\rho_i \, M_j) \leq \Tr(\rho_i \, (\Id - M_i)) = \Tr(\rho_i) - \Tr(\rho_i \, M_i) = 0
	\]
	where the first inequality is due to the positivity of $\rho_i$ and $M_j$, the second inequality is due to the fact that $\sum_i M_i = \id$, and the last equality is due to the fact that $\Tr(\rho_i) = \Tr(\rho_i \, M_i) = 1$. Therefore $\Tr(\rho_i \, M_j) = 0$. Thus we have
	\[
		0 \leq \Tr(\rho_i \, \rho_j) \leq \Tr(\rho_i \, M_j) = 0 \enspace,
	\]
	so $\Tr(\rho_i \, \rho_j) = 0$. 
\end{proof}

\begin{lemma}[Orthogonality conditions II]
\label{lem:orthog}
Let $(\tau,(U_i))$ be a $d$-dimensional superdense coding protocol. Then for all $i \neq j \in [d^2]$, 
	\[
		\Tr_{A''} \left ( U_i \tau^A U_{j}^* \right) = 0
	\]
	where $\tau^A$ denotes the reduced density matrix of $\tau$ on register $A$ and $\Tr_{A''}(\cdot)$ denotes the partial trace over register $A''$. 
\end{lemma}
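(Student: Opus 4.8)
The plan is to promote the support–orthogonality of the decoded states supplied by \Cref{lem:orthog0} to an operator identity on Alice's entire system, by passing to a purification of $\tau$.

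I would first record the elementary fact that positive semidefinite operators $\rho_i,\rho_j$ with $\Tr(\rho_i\rho_j)=0$ have mutually orthogonal supports. Writing $\Pi_i$ for the orthogonal projector onto $\support(\rho_i)$, an operator on $\Hilb_{A''}\otimes\Hilb_B$, \Cref{lem:orthog0} then gives $\Pi_i\Pi_j=0$ for all $i\neq j\in[d^2]$. Next, fix a purification $\ket{\psi}^{ABR}$ of $\tau$ and set $\ket{\psi_i}\eqdef(U_i\otimes\id_{BR})\ket{\psi}$, so that the reduced state of $\ketbra{\psi_i}{\psi_i}$ on registers $A''B$ is exactly $\rho_i$. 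Writing $\widehat\Pi_i\eqdef\id_{A'}\otimes\Pi_i\otimes\id_R$ for the extension of $\Pi_i$ to the full space and using that $\Pi_i$ dominates the support of $\rho_i$,
\[
\norm{(\id-\widehat\Pi_i)\ket{\psi_i}}^2 \;=\; \Tr\big[(\id-\Pi_i)\,\rho_i\big] \;=\; 0,
\]
so $\widehat\Pi_i\ket{\psi_i}=\ket{\psi_i}$, i.e.\ $\ket{\psi_i}$ lies in the range of $\widehat\Pi_i$.

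The lemma then follows from a short computation. Using that $U_i$ and $U_j$ act trivially on $B$ and $R$, that $\tau^A=\Tr_B\tau$, and that $\tau=\Tr_R\ketbra{\psi}{\psi}$,
\[
\Tr_{A''}\!\big(U_i\,\tau^A\,U_j^*\big) \;=\; \Tr_{A''B}\!\big[(U_i\otimes\id_B)\,\tau\,(U_j\otimes\id_B)^*\big] \;=\; \Tr_{A''BR}\big[\ketbra{\psi_i}{\psi_j}\big].
\]
From the previous step, $\ketbra{\psi_i}{\psi_j}=\widehat\Pi_i\,\ketbra{\psi_i}{\psi_j}\,\widehat\Pi_j$. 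Writing $\widehat\Pi_i=\id_{A'}\otimes(\Pi_i\otimes\id_R)$ and $\widehat\Pi_j=\id_{A'}\otimes(\Pi_j\otimes\id_R)$ and applying the partial-trace identity $\Tr_C[(\id\otimes X)\,M\,(\id\otimes Z)]=\Tr_C[(\id\otimes ZX)\,M]$ with $C=A''BR$ (which itself follows from the cyclicity property $\Tr_C[(\id\otimes Y)N]=\Tr_C[N(\id\otimes Y)]$ of the partial trace), the last trace equals $\Tr_{A''BR}\big[(\id_{A'}\otimes(\Pi_j\Pi_i)\otimes\id_R)\,\ketbra{\psi_i}{\psi_j}\big]=0$, since $\Pi_j\Pi_i=0$.

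The only mildly delicate points are the passage from the support of a reduced density matrix to the support of a purifying vector, and keeping track of which registers each projector acts on in the final partial-trace manipulation; neither poses a real obstacle, so I do not anticipate any genuine difficulty. (One can also avoid purification entirely: with $\Theta_{ij}\eqdef(U_i\otimes\id_B)\,\tau\,(U_j\otimes\id_B)^*$, the orthogonality of the supports forces $(\id_{A'}\otimes\Pi_i)\,\Theta_{ij}=\Theta_{ij}=\Theta_{ij}\,(\id_{A'}\otimes\Pi_j)$ via $\image(\Theta_{ij})=\image(\Theta_{ij}\Theta_{ij}^*)=\image\big((U_i\otimes\id_B)\,\tau\,(U_i\otimes\id_B)^*\big)$, and one concludes as above.)
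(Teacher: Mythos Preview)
Your argument is correct. Both your proof and the paper's hinge on purifying~$\tau$ and invoking \Cref{lem:orthog0}, but the bookkeeping differs. The paper applies \Cref{lem:orthog0} to the \emph{purified} protocol (giving Bob the reference register~$R$), then expands the purification in an orthonormal basis~$\{\ket{k}\}$ of~$\Hilb_{A'}$: writing~$\ket{\rho_{ik}} \eqdef (\bra{k}^{A'}\otimes\id)(U_i\otimes\id)\ket{\tau}$, it deduces~$\braket{\rho_{jk'}}{\rho_{ik}}=0$ for all~$k,k'$ and recognises these inner products as the matrix entries of~$\Tr_{A''}(U_i\tau^A U_j^*)$. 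You instead apply \Cref{lem:orthog0} to the \emph{original} protocol, extract the support projectors~$\Pi_i$ on~$\Hilb_{A''}\otimes\Hilb_B$, and then use cyclicity of the partial trace together with~$\Pi_j\Pi_i=0$; no basis expansion is needed. Your purification-free variant is cleaner still, since the identity~$\image(\Theta_{ij})=\image(\Theta_{ij}\Theta_{ij}^*)$ lets you work directly with~$\tau$. The paper's route makes the vanishing of each matrix element explicit, which dovetails with how the orthogonality is used downstream; your route is more invariant and slightly shorter.
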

\begin{proof}

Let $\ket{\tau}$ denote a purification of $\tau$ on the Hilbert space $\Hilb_A \otimes \Hilb_B \otimes \Hilb_R$, where $\Hilb_R$ is the purifying space. Clearly, the protocol where Bob also has access to the purifying space $\Hilb_R$ is also a $d$-dimensional superdense coding protocol. 

Let $\{ \ket{1}, \ldots, \ket{\dim A'} \}$ denote an orthonormal basis for $\Hilb_{A'}$. Let~$\ket{\rho_{ik}}$ be the (sub-normalized) pure state on registers $A'' B R$ given by
\[
	\ket{\rho_{ik}} \eqdef (\bra{k}^{A'} \otimes \id)(U_i \otimes \id) \ket{\tau} \enspace.
\]
Intuitively, $\ket{\rho_{ik}}$ represents the residual state of the protocol on registers $A'' B$ when Alice applies unitary operator $U_i$, and then measures the $A'$ subsystem in the standard basis to obtain outcome $\ket{k}$. 

Note that if we let $\rho_i$ denote the state $(U_i \otimes \id) \ketbra{\tau}{\tau} (U_i \otimes \id)^*$ reduced to the registers $A'' B R$, we have the identity
\[
	\rho_i = \sum_k \ketbra{\rho_{ik}}{\rho_{ik}}\;,
\]
because we can think of $\rho_i$ as the result of first measuring the $A'$ register in the standard basis, and discarding the outcome. Applying \Cref{lem:orthog0} to the purified protocol $(\ketbra{\tau}{\tau}, (U_i))$, we have for $i \neq j$,
\[
	0 = \Tr(\rho_i \, \rho_j) = \sum_{k,k'} \Tr( \ketbra{\rho_{ik}}{\rho_{ik}} \cdot \ketbra{\rho_{jk'}}{\rho_{jk'}} ) = \sum_{k,k'} |\braket{\rho_{jk'}}{\rho_{ik}}|^2
\]
and therefore $|\braket{\rho_{jk'}}{\rho_{ik}}|^2 = 0$ for all $k,k'$. This implies that $\braket{\rho_{jk'}}{\rho_{ik}} = 0$ for all $k,k'$, which can be rewritten as
\[
	\bra{\tau} (U_j \otimes \id)^* (\ketbra{k'}{k}^{A'} \otimes \id) (U_i \otimes \id)\ket{\tau} = \Tr \Big ( (\bra{k}^{A'} \otimes \id) (U_i \otimes \id) \ketbra{\tau}{\tau} (U_j^* \otimes \id) (\ket{k'}^{A'} \otimes \id) \Big) = 0.
\]
This is equivalent to the statement that
\[
\bra{k}^{A'} \, \Tr_{A''} \Big ( U_i \tau^A U_j^* \Big) \, \ket{k'}^{A'} = 0.
\]
Since this holds for all $k,k'$, the matrix $\Tr_{A''} \Big ( U_i \tau^A U_j^* \Big)$ is identically zero, which completes the proof of the Lemma.
\end{proof}

Next we define what it means for superdense protocols to be locally equivalent. 

\begin{definition}
\label{def:equiv-protocols}
Let $\Hilb_{A} \eqdef \Hilb_{A'} \otimes \Hilb_{A''}$ be a Hilbert space where $\Hilb_{A''}$ is isomorphic to $\C^d$. Let $\tau,\tau'$ be density matrices on $\Hilb_A \otimes \Hilb_B$. Let $(U_i), (U_i')$ be unitary operators acting on $\Hilb_A$. We say that $(\tau,(U_i))$ and $(\tau',(U_i'))$ are \emph{locally equivalent} if there exists
\begin{enumerate}
		\item A unitary operator $V$ acting on $\Hilb_{A'} \otimes \Hilb_{A''}$\;,
		\item A set of unitary operators $(C_i)_{i \in [d^2]}$ acting on $\Hilb_{A'}$
\end{enumerate}
such that
\begin{enumerate}
	\item $\tau' = (V \otimes \id) \tau (V \otimes \id)^*$, and
	\item $U_i' = (C_i \otimes \id) U_i V^*$.
\end{enumerate}
\end{definition}

\begin{lemma}[Local unitary freedom of superdense coding protocols]
\label{lem:gauge_freedom}
The following properties hold for local equivalence.
	\begin{enumerate}
		\item If $(\tau,(U_i))$ and $(\tau',(U_i'))$ are locally equivalent, then $(\tau,(U_i))$ is a $(d,\eps)$-superdense coding protocol if and only if $(\tau',(U_i'))$ is.
		\item Local equivalence is transitive.
	\end{enumerate}
\end{lemma}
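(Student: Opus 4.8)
The plan is to prove the two claims separately, each by a direct computation that keeps careful track of which registers every operator acts on (recall the tensor structure is $\Hilb_{A'}\otimes\Hilb_{A''}\otimes\Hilb_B$).

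For the first claim, suppose $(\tau,(U_i))$ and $(\tau',(U_i'))$ are locally equivalent, witnessed by a unitary $V$ on $\Hilb_{A'}\otimes\Hilb_{A''}$ and unitaries $(C_i)$ on $\Hilb_{A'}$, so that $\tau' = (V\otimes\id_B)\tau(V\otimes\id_B)^*$ and $U_i' = (C_i\otimes\id_{A''})U_iV^*$. First I would compute the residual state $\rho_i'$ that $\tau'$ produces under the encoding $U_i'$, namely $\Tr_{A'}\bigl[(U_i'\otimes\id_B)\tau'(U_i'\otimes\id_B)^*\bigr]$. Expanding $U_i'\otimes\id_B = (C_i\otimes\id_{A''}\otimes\id_B)(U_i\otimes\id_B)(V^*\otimes\id_B)$ and substituting $\tau' = (V\otimes\id_B)\tau(V\otimes\id_B)^*$, the $V^*$ inside $U_i'$ cancels the $V$ defining $\tau'$, leaving $(C_i\otimes\id_{A''}\otimes\id_B)\bigl[(U_i\otimes\id_B)\tau(U_i\otimes\id_B)^*\bigr](C_i\otimes\id_{A''}\otimes\id_B)^*$. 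Now taking the partial trace over $A'$ and invoking the invariance of the partial trace under conjugation by a unitary supported on the traced-out register (here $C_i$, which acts only on $A'$), I get $\rho_i' = \rho_i$. Since the residual states on $A''B$ coincide, any POVM $\{M_i\}$ on $\Hilb_{A''}\otimes\Hilb_B$ satisfying $\Tr(M_i\rho_i)\ge 1-\eps$ also satisfies $\Tr(M_i\rho_i')\ge 1-\eps$, and vice versa; hence the two protocols meet the condition of \Cref{def:protocol} for exactly the same $\eps$, which is the claim. The one conceptual point worth highlighting in the writeup is that $C_i$ lives on $A'$, the register Bob never receives and which is traced out in forming $\rho_i$, which is precisely why the post-encoding corrections are ``free''.

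For transitivity, suppose $(\tau_1,(U_i^{(1)}))$ is locally equivalent to $(\tau_2,(U_i^{(2)}))$ via $(V,(C_i))$, and $(\tau_2,(U_i^{(2)}))$ is locally equivalent to $(\tau_3,(U_i^{(3)}))$ via $(V',(C_i'))$. Then I would set $V'' \eqdef V'V$, a unitary on $\Hilb_{A'}\otimes\Hilb_{A''}$, and $C_i'' \eqdef C_i'C_i$, a unitary on $\Hilb_{A'}$, and verify by substitution that $\tau_3 = (V''\otimes\id_B)\tau_1(V''\otimes\id_B)^*$ and, using $(C_i'\otimes\id_{A''})(C_i\otimes\id_{A''}) = (C_i'C_i)\otimes\id_{A''}$, that $U_i^{(3)} = (C_i''\otimes\id_{A''})U_i^{(1)}(V'')^*$. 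This exhibits the witnesses required by \Cref{def:equiv-protocols}, so local equivalence is transitive.

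I do not expect a genuine obstacle: the lemma is essentially a bookkeeping statement, and the only place care is needed is in keeping the three tensor factors $\Hilb_{A'}$, $\Hilb_{A''}$, $\Hilb_B$ and the various identity paddings straight, so that the cancellation of $V$ with $V^*$ and the disappearance of $C_i$ under $\Tr_{A'}$ are manifestly correct.
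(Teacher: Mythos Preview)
Your proposal is correct and follows essentially the same approach as the paper: for Item~1 you show (as the paper asserts) that the reduced state on $A''B$ is identical in the two protocols by canceling $V$ against $V^*$ and using that $C_i$ disappears under $\Tr_{A'}$, and for Item~2 you compose the witnesses exactly as the paper does to get $(V'V,(C_i'C_i))$. Your write-up is simply a more explicit version of the paper's terse argument.
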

\begin{proof}

For any fixed $i$, after Alice applies her encoding unitary operator, the reduced density matrix on registers $A'' B$ is the same whether the protocol $(\tau,(U_i))$ or $(\tau',(U_i'))$ is used. Thus Bob's ability to distinguish between the different messages is exactly the same. This establishes Item 1.

If $(\tau,(U_i))$ and $(\tau',(U_i'))$ are locally equivalent, then $\tau' = (V \tensor \id) \tau (V^* \tensor \id)$, and $U_i' = (C_i \otimes \id) U_i V^*$. If $(\tau',(U_i'))$ and $(\tau'',(U_i''))$ are locally equivalent, then $\tau'' = (V' \tensor \id) \tau' ((V')^* \tensor \id)$, and $U_i'' = (C_i' \otimes \id) U_i' (V')^*$. Thus we have
\begin{align*}
\tau'' \quad & = \quad (V' V \tensor \id) \tau (V^* (V')^* \tensor \id) \enspace, \qquad \text{and} \\
U_i'' \quad & = \quad (C_i' C_i \otimes \id) U_i V^* (V')^* \enspace, 
\end{align*}
which implies that $(\tau,(U_i))$ is locally equivalent to $(\tau'',(U_i''))$. This establishes Item 2.
\end{proof}

\subsection{Nice form protocols}
\label{sec-nice-form}

In this section, we define \emph{nice form\/} protocols and then show that every superdense coding protocol is locally equivalent to one that has a nice form.

\begin{definition}
\label{def:nice-form}
A~$d$-dimensional superdense coding protocol $(\tau,(U_i))$ has a \emph{nice form} if 
	\begin{enumerate}
		\item \label{p1} $U_1 = \id$,
		\item \label{p2} There exists an isometry $W: \Hilb_B \to \Hilb_{B'} \otimes \Hilb_{B''}$ where $\Hilb_{B''}$ is isomorphic to $\C^d$ such that 
		\[
			(\id \otimes W) \tau (\id \otimes W)^* = \rho^{A' B'} \otimes \ketbra{\mes_d}{\mes_d}^{A'' B''}	
		\]		
		for some density matrix $\rho$ on $\Hilb_{A'} \otimes \Hilb_{B'}$.
		\item \label{p3} For all $i \in [d^2]$, we have that 
		\[
			U_i \Tr_{B}(\tau) U_i^* = U_i \left( \rho^{A'} \otimes \frac{\id}{d} \right) U_i^* = \rho^{A'} \otimes \frac{\id}{d} \;
		\]
		where $\rho^{A'}$ denotes the reduced density matrix of $\tau$ on $\Hilb_{A'}$.
		\item \label{p4} Let the spectral decomposition of $\rho^{A'}$ be $\sum_k \lambda_k \Pi_k$ where $\lambda_k > 0$ for all $k$, with~$\lambda_k$ distinct. Then for all $k$ and $i \neq j$, we have
		\[
			\Tr_{A''}((\Pi_k \otimes \id) \, U_i U_{j}^* \, (\Pi_k \otimes \id)) = 0.
		\]
	\end{enumerate}
\end{definition}
Item~\ref{p2} says that up to a local unitary operation, the two parties share a maximally entangled state (in addition to other entanglement), and Item~\ref{p3} turns out to be a consequence of this. Item~\ref{p4} is equivalent to saying that the encoding of distinct messages~$i \neq j$ are orthogonal to each other. The proof of \Cref{lem:nice_form} below may give the reader further intuition into these properties. 

In the proof of \Cref{lem:nice_form}, we make use of an information-theoretic argument that involves quantities such as von Neumann entropy $\entropy(A)$, conditional entropy $\entropy(A | B)$, and mutual information $\mi(A : B)$. For a comprehensive reference on these quantities and their basic properties, we recommend Wilde's textbook~\cite{wilde2013quantum}. 
It is an interesting question whether \Cref{lem:nice_form} can be proved \emph{without} making use of these information-theoretic quantities.

\begin{lemma}
\label{lem:nice_form}
	All superdense coding protocols $(\tau,(U_i))$ are locally equivalent to a superdense coding protocol $(\tau',(U_i'))$ that has a nice form. 
\end{lemma}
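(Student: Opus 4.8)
The plan is to establish the four conditions of \Cref{def:nice-form} in sequence, each via a local equivalence that I compose at the end (using transitivity from \Cref{lem:gauge_freedom}).

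\emph{Achieving $U_1 = \id$.} First I would post-compose Alice's encoding operators with $U_1^*$ on the $A$-side and correspondingly update the state. Precisely, replace $\tau$ by $\tau$ (unchanged, since $V = \id$ here) but set $U_i' \eqdef U_i U_1^{-1}$ — wait, this is not of the form allowed by \Cref{def:equiv-protocols}. Instead, the right move is to take $V \eqdef U_1$, so that $\tau' \eqdef (U_1 \otimes \id)\tau(U_1 \otimes \id)^*$ and $U_i' \eqdef U_i V^* = U_i U_1^*$; then $U_1' = \id$ as desired, and by \Cref{lem:gauge_freedom} this is still a superdense coding protocol. So WLOG $U_1 = \id$ from now on.

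\emph{Extracting an EPR pair (Condition 2).} This is the step I expect to be the main obstacle, and where the information-theoretic argument enters. With $U_1 = \id$, the state $\rho_1$ that Bob must distinguish from the others is just $\tau^{A''B}$ — the reduction of $\tau$ onto $A''B$. Since Bob can perfectly recover which of the four messages was sent, \Cref{lem:orthog0} tells us $\Tr(\rho_i \rho_j) = 0$ for $i \ne j$, so $\rho_1, \rho_2, \rho_3, \rho_4$ are four pairwise-orthogonal density matrices on the Hilbert space $\Hilb_{A''} \otimes \Hilb_B$. The key quantitative point is a Holevo-type / dimension-counting bound: Bob's system $A''B$, together with the fact that all four $\rho_i$ arise from a one-qubit channel $A''$ applied to a fixed $\tau$, forces $\tau^{A''}$ to be maximally mixed and $\tau^{A''}$ to be uncorrelated with $\tau^B$. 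Concretely, I would argue: the four states $\rho_i$ being orthogonal and each being the image of $\tau^{A''B}$ under a local unitary on $A''$ (composed with the ambient structure) forces $I(A'' : B)_\tau$ to vanish and $\entropy(A'')_\tau = 1$; indeed $\sum_i \tfrac14 \rho_i$ has rank $\ge 4$ supported on a $2\dim(\Hilb_B)$-dimensional space, and the entropy/subadditivity inequalities $\entropy(A''B) \le \entropy(A'') + \entropy(B) \le 1 + \entropy(B)$ together with the orthogonality (which yields $\entropy(\tfrac14\sum_i \rho_i) \ge 2 + \entropy(\rho_1)$) pin down equality throughout. Equality in subadditivity gives $\tau^{A''B} = \tau^{A''} \otimes \tau^B$ with $\tau^{A''} = \id/2$. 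A purification/Uhlmann argument then upgrades this product structure on the \emph{mixed} state to the existence of an isometry $W: \Hilb_B \to \Hilb_{B'} \otimes \Hilb_{B''}$ extracting a genuine EPR pair between $A''$ and $B''$, with residual state $\rho^{A'B'}$ — this is where I must be careful that the EPR pair is shared with $A''$ specifically and that the residual state factors correctly. I apply $\id_A \otimes W$ as the Bob-side isometry in the local equivalence.

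\emph{Conditions 3 and 4.} Once the EPR pair is in hand, Condition 3 — that each $U_i$ preserves $\rho^{A'} \otimes \id/2$ — follows from \Cref{lem:orthog}: taking $i = j$ there is trivial, but the content is that since $\Tr_{A''}(U_i \tau^A U_i^*)$ must equal $\tau^{A'}$ (as Bob's marginal on $B$ is untouched and $\tau$ now has product form, $\tau^A = \rho^{A'} \otimes \id/2$), and since $U_i$ is unitary on $A = A'A''$ with $\tau^{A''} = \id/2$ maximally mixed, a short calculation shows $U_i(\rho^{A'} \otimes \id/2)U_i^* $ has the same marginal on $A'$ and on $A''$ as $\rho^{A'}\otimes \id/2$; combined with the orthogonality constraints this forces equality. (I would double-check whether Condition 3 as stated requires an additional small local correction $C_i$ on $A'$, in which case I fold that into the $(C_i)$ of the equivalence.) Finally Condition 4 is exactly the specialization of \Cref{lem:orthog} — applied to the nice-form protocol whose state is $\rho^{A'B'} \otimes \ketbra{\EPR}{\EPR}$ — to the spectral projectors $\Pi_k$ of $\rho^{A'}$: writing $\tau^A = \sum_k \lambda_k \Pi_k \otimes \id/2$, the vanishing of $\Tr_{A''}(U_i \tau^A U_j^*)$ together with $\lambda_k > 0$ and the block structure forces $\Tr_{A''}((\Pi_k \otimes \id)U_iU_j^*(\Pi_k \otimes \id)) = 0$ for each $k$ separately. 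Composing the three local equivalences (set $V = U_1$, then $\id$, then the $W$-isometry on Bob's side; accumulate the $C_i$'s) and invoking transitivity completes the proof.
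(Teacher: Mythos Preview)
Your outline for Items~1, 3, and~4 is broadly in line with the paper's approach (including your parenthetical that a correction~$C_i$ on~$A'$ may be needed for Item~3 --- indeed it is, since~$U_i(\rho^{A'}\otimes\tfrac{\id}{2})U_i^*$ has the form~$\zeta_i\otimes\tfrac{\id}{2}$ with~$\zeta_i$ isospectral to but not necessarily equal to~$\rho^{A'}$). However, your Item~2 argument contains a genuine error in the direction of the decoupling.

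You claim that the entropy argument yields~$I(A'':B)_\tau = 0$ and hence~$\tau^{A''B} = \tau^{A''}\otimes\tau^B$. This is backwards, and in fact impossible: if~$A''$ were in a product state with~$B$, no isometry on~$B$ could ever extract an EPR pair between~$A''$ and a subsystem~$B''$ of~$B$. If you carry your own inequalities through correctly --- orthogonality gives~$\entropy(\tfrac14\sum_i\rho_i) = 2 + \tfrac14\sum_i\entropy(\rho_i)$, while subadditivity on the averaged state gives~$\entropy(\tfrac14\sum_i\rho_i) \le 1 + \entropy(B)_\tau$ --- you obtain~$\entropy(A''|B)_{\rho_i} = -1$, i.e., \emph{maximal entanglement} between~$A''$ and~$B$, not a product. (Also note that the~$\rho_i$ are \emph{not} related to~$\tau^{A''B}$ by local unitaries on~$A''$, since~$U_i$ acts on all of~$A = A'A''$.)

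The paper's fix is to work with a purification~$\ket{\tau}^{ABR}$ and use the duality~$\entropy(A''|B)_{\xi_i} = -\entropy(A''|RA')_{\xi_i}$: from~$\entropy(A''|B) = -1$ one gets~$\entropy(A''|RA') = 1$, which forces~$\mi(A'':RA') = 0$ and~$\entropy(A'') = 1$. So the decoupling is between~$A''$ and~$RA'$ (not~$B$), giving~$\Tr_B(\xi_i) = \rho_i^{RA'}\otimes\tfrac{\id}{2}$. Uhlmann's theorem, applied to this product form with~$B$ as the purifying system, then yields the isometry~$W$ extracting the EPR pair. Once you reverse the direction of the decoupling, the rest of your outline (including the use of \Cref{lem:orthog} and the spectral-projector commutation for Item~4) lines up with the paper.
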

\begin{proof}
	We define a unitary operator $V$ acting on $\Hilb_{A'}$ and unitary operators $(C_i : i \in [d^2])$ acting on $\Hilb_{A'}$ such that, letting $\tau' \eqdef (V \otimes \id) \tau (V \otimes \id)^*$ and $U_i' \eqdef (C_i \otimes \id) U_i V^*$, the pair $(\tau',(U_i'))$ is a superdense coding protocol and has a nice form.
	
	Let $V \eqdef U_1$ and let $C_1 \eqdef \Id$. This already yields Item 1 of \Cref{def:nice-form}. %

	Let $\ket{\tau}^{AB R}$ be a purification of $\tau$ where $\Hilb_R$ is a reference system of dimension $\dim(\Hilb_{A} \otimes \Hilb_{B})$. Consider the cq-state
	\[
		\xi \eqdef \frac{1}{d^2} \sum_i \ketbra{i}{i}^X \otimes (U_i \tensor \id) \ketbra{\tau}{\tau}^{ABR} (U_i^* \tensor \id) \enspace,
	\]
	where the Hilbert space of register~$X$ is $\Hilb_X$. By \Cref{lem:gauge_freedom}, the protocol $(V \tau V^*,(U_i V^*))$ is a superdense coding protocol. Therefore the information contained in registers~$A'' B$ about~$X$ in the state~$\xi$ is
	\[
		\mi(X : A'' B)_\xi = 2 \log_2 d \enspace.
	\]
	Intuitively, this is because Bob can perfectly recover the value of $i \in [d^2]$, i.e., $2 \log_2 d$ bits of information, from the registers $A'' B$ of $\xi$. On the other hand, we have that
	\[
		\mi(X : B)_\xi = 0
	\]
	 because without the qubit register $A''$, Bob has no information about $X$ (the state of the register~$B$ is the same for all~$i$). Therefore we get
	\[
		\mi(X: A'' |B )_\xi = \mi(X : A'' B)_\xi - \mi(X: B)_\xi = 2 \log_2 d \enspace.
	\]
	Using the entropy characterization of conditional mutual information, we get
	\[
		2 \log_2 d = \mi(X: A'' |B )_\xi = \entropy(A'' | B)_\xi - \entropy(A'' | XB)_\xi \enspace.
	\]
	Since $\entropy(A''|B)_\xi \leq \log_2 d $ and $\entropy(A'' | XB)_\xi \geq -\log_2 d $ (because the dimension of register $A''$ is $d$), we get that $\entropy(A''|B)_\xi = \log_2 d $ and $\entropy(A''| XB)_\xi = -\log_2 d $. 
	
	Since $X$ is a classical register, we can write $\entropy(A''|XB)_\xi$ as
	\[
		-\log_2 d  = \entropy(A''|XB)_\xi = \Ex_i \entropy(A'' |B, X = i)
	\]
	where $\entropy(A'' | B, X = i)$ is defined as $\entropy(A''|B)_{\xi_i}$ with $\ket{\xi_i} \eqdef (U_i \tensor \id) \ket{\tau}$. Since $\entropy(A''| B,X =i) \geq -\log_2 d $, we have $\entropy(A''|B)_{\xi_i} = -\log_2 d $ for all $i$, and in particular for $i = 1$.
	
	Then $\entropy(A''|B)_{\xi_i} = -\entropy(A'' | RA')_{\xi_i}$ (because $\xi_i$ is pure), so $\entropy(A'' | RA')_{\xi_i} = \log_2 d $. On one hand, we have that $\mi(A'' : RA')_{\xi_i} = \entropy(A'')_{\xi_i} - \entropy(A'' | RA')_{\xi_i}$, and on the other hand, mutual information is always nonnegative. Thus $\entropy(A'')_{\xi_i} = \log_2 d $, and the reduced density matrix of $\xi_i$ on the $A''$ register is maximally mixed. Furthermore we have $\mi(A'' : RA')_{\xi_i} = 0$, so $\xi_i$ has no correlations between registers $A''$ and $RA'$:
	\begin{equation}
	\label{eq:nice_form1}
		\Tr_B(\xi_i) = \rho_i^{RA'} \otimes \frac{\id}{d}
	\end{equation}
	where $\rho_i$ is some density matrix on the $RA'$ registers. 
	
	Fix $i = 1$, and let $\rho^{RA'}$ denote $\rho_1^{RA'}$. %
	Let $\Hilb_{B'}$ be a Hilbert space with dimension $\dim(\Hilb_{R} \otimes \Hilb_{A'})$ and let $\Hilb_{B''}$ be isomorphic to $\C^d$. Let $\ket{\rho}^{RA'B'}$ denote a purification of $\rho^{RA'}$. Notice that $\ket{\rho}^{RA'B'} \otimes \ket{\mes_d}^{A'' B''}$ is a purification of the state in \Cref{eq:nice_form1}. Using Uhlmann's Theorem~\cite{uhlmann1976transition} (also known as the Schr\"{o}dinger-HJW Theorem~\cite{schrodinger1935discussion,hughston1993complete}), there exists an isometry $W$ on $\Hilb_B$ such that
	\[
		(\id \otimes W) \ket{\xi_1}^{RA' A'' B} = \ket{\rho}^{RA'B'} \otimes \ket{\mes_d}^{A'' B''} \;.
	\]
	Since $\ket{\xi_1} = (V \otimes \id) \ket{\tau}$, we have that
	\[
		\rho^{A'B'} \otimes \ketbra{\mes_d}{\mes_d}^{A'' B''} = \Tr_R ((\id \otimes W) \ketbra{\xi_1}{\xi_1} (\id \otimes W)^*) = (V \otimes W) \tau (V \otimes W)^* \;.
	\]
	Since $\tau' = (V \otimes \id) \tau (V \otimes \id)^*$ we obtain Item~2 of \Cref{def:nice-form} for the protocol~$ (\tau',(U_i V^*)) $.
In what follows we let~$\zeta$ denote~$\rho^{A'}$ (which also equals~$\tau^{\prime A^\prime}$).
We now establish Item 3 of \Cref{def:nice-form}. 
\suppress{
\Cref{lem:gauge_freedom} implies that $(\tau',(U_i V^*))$ is also a superdense coding protocol, so by \Cref{lem:orthog} we have that for all $i \neq j$,
	\[
		\Tr_{A''} \left ( U_i V^* \left ( \zeta \otimes \frac{\id}{d} \right ) (U_j V^*) ^* \right) = 0\;.
	\]
}
	Let~$ 
		\sum_k \lambda_k \Pi_k
	$ be the spectral decomposition of $\zeta$ where the $\{ \lambda_k\}$ are distinct and nonzero, and $\Pi_k$ is the orthogonal projector onto the eigenspace of $\zeta$ corresponding to eigenvalue $\lambda_k$. Since by \Cref{eq:nice_form1} we have
	\begin{equation}
	\label{eq:nice_form2}
		U_i V^* \left ( \zeta \otimes \frac{\id}{d}\right) (U_i V^*)^* = \Tr_{BR}(\xi_i) = \zeta_i \otimes \frac{\id}{d}
	\end{equation}
	for some density matrix $\zeta_i$, the states $\zeta_i$ and $\zeta$ have the same eigenvalues with the same multiplicities. That is, there are an orthogonal set of projectors $\{ \Pi_k^{(i)} \}_k$ such that
	\[
		\zeta_i = \sum_k \lambda_k \Pi_k^{(i)} \enspace,
	\]
	where $\dim(\Pi_k^{(i)}) = \dim(\Pi_k)$ for all $i \in [d^2]$. It follows that for all $i$,
	\[
		U_i V^* \left( \Pi_k \otimes \frac{\id}{d} \right) (U_i V^*)^* = \Pi_k^{(i)} \otimes \frac{\id}{d} \enspace.
	\]
	For $i \in [d^2]$ let $C_i$ be a unitary operator on $\Hilb_{A'}$ such that $C_i \Pi_k^{(i)} C_i^* = \Pi_k$ for all $k$. Since $\Pi_k^{(1)} = \Pi_k$, our choice of $C_1 = \id$ suffices. Let $U_i' = (C_i \tensor \id)  U_i V^*$. By \Cref{lem:gauge_freedom}, we have $(\tau', ( U_i') )$ is a superdense coding protocol. Furthermore, \Cref{eq:nice_form2} implies that
	\[
		U_i'  \Big( \sum_k \lambda_k \, \Pi_k \otimes \frac{\id}{d} \Big) (U_i')^* = \sum_k \lambda_k \, \Pi_k \otimes \frac{\id}{d} \enspace,
	\]
	which implies Item 3 of \Cref{def:nice-form}.
	
\Cref{lem:gauge_freedom} implies that $(\tau',(U_i'))$ is also a superdense coding protocol, so by \Cref{lem:orthog} we have that for all $i \neq j$,
	\[
		\Tr_{A''} \left ( U_i' \left ( \zeta \otimes \frac{\id}{d} \right ) (U_j') ^* \right) = 0\;.
	\]
Since~$\Pi_k$ commutes with the $(U_i')$ for all $k$, we have
	\begin{align*}
		0 & = \Tr_{A''} \left ( U_i' \left ( \rho \otimes \frac{\id}{d} \right ) (U_j')^* \right) \\
		&= \sum_k \lambda_k \Tr_{A''} \left ( U_i' \left ( \Pi_k \otimes \frac{\id}{d} \right ) (U_j')^* \right) \\
		& = \frac{1}{d} \sum_k \lambda_k \Tr_{A''} \left ( \left ( \Pi_k \otimes \id \right ) U_i' (U_j')^* \right) \enspace.
	\end{align*}
	Since the $\lambda_k$'s are positive, we have
	\[
	\Tr_{A''} \left ( (\Pi_k \otimes \id) U_i' (U_j')^* (\Pi_k \otimes \id) \right ) = 0
	\]
	for all $k$, and $i \neq j$.
This implies Item 4 of \Cref{def:nice-form}.
	
\end{proof}

\section{Rigidity for two-dimensional superdense coding}
\label{sec:d2}

In this section we prove \Cref{thm:rigidity}, that is, rigidity for $2$-dimensional superdense coding protocols (coding $2$ bits into one qubit, with no error). For the remainder of this section we drop the qualification ``$2$-dimensional'' for brevity.

The proof involves a number of steps. 
First, we invoke \Cref{lem:nice_form}, which states that every superdense coding protocol is locally equivalent to one that has a nice form. Then, we argue that up to local equivalence, in every nice form superdense coding protocol, the encoding operators $(U_i)$ can be block-diagonalized. That is, we can write $U_i = \sum_\ell Q_{i\ell} \otimes R_{i\ell}$ where the $\{Q_{i\ell}\}$ are a set of orthogonal projectors acting on $\Hilb_{A'}$ summing to the identity, and the $\{R_{i\ell}\}$ are a set of Hermitian unitary operators acting on $\Hilb_{A''}$. Next, we argue that (again up to local equivalence) across the different $i$'s, the projectors $\{Q_{i\ell} \}$ can be ``matched up'', and the corresponding operators $R_{i\ell}$ are all pairwise orthogonal. This implies that in fact $\{R_{1 \ell}, R_{2 \ell}, R_{3 \ell}, R_{4\ell} \}$ are unitarily equivalent to the standard Pauli matrices $\{\id, \PauliX, \PauliY, \PauliZ\}$. Using the property that local equivalence is a transitive relation, this concludes the argument. 

\Cref{lem:nice_form} is proven in \Cref{sec-nice-form}. We now proceed to prove the remaining steps in detail.

\subsection{Block-diagonalizing nice form protocols}

In this section, we analyze the structure of the encoding operators in a nice form superdense coding protocol in dimension two. We show that they apply a~$2 \times 2$ unitary operator on register~$A''$, controlled by the state in register~$A'$.
\begin{theorem}
\label{thm:block_diagonal}
Let $(\tau,(U_i))$ be a nice-form protocol. Then there exists a locally-equivalent protocol $(\tau',(U_i'))$ that has a nice form and for $i \in \{2,3,4\}$ we have
	\[
		U_i' =_{\tau'} \sum_\ell Q_{i\ell} \otimes R_{i\ell}\;.
	\]
	for some orthogonal projectors $\{Q_{i\ell}\}_\ell$ on $\Hilb_{A'}$ that sum to $\id$, and $2 \times 2$ traceless, Hermitian unitary matrices $\{R_{i\ell}\}_\ell$ on $\Hilb_{A''}$.

\end{theorem}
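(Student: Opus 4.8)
The plan is to start from a nice-form protocol $(\tau,(U_i))$ and analyze each encoding operator $U_i$ for $i \in \{2,3,4\}$ individually, using only the constraints from Items 2, 3, and 4 of \Cref{def:nice-form}. By Item 2 we may assume $\tau = \rho^{A'B'} \otimes \ketbra{\EPR}{\EPR}^{A''B''}$; since Bob's register is inert here, it suffices to understand how $U_i$ acts on $\zeta \otimes \tfrac{\id}{2}$ where $\zeta = \rho^{A'}$. First I would fix $i$ and work within a single eigenspace of $\zeta$: let $\Pi_k$ be the spectral projectors of $\zeta$ as in Item 4. Item 3 says $U_i$ commutes with $\zeta \otimes \tfrac{\id}{2}$, hence $U_i$ preserves each eigenspace $\Pi_k \otimes \C^2$ of $\zeta \otimes \tfrac{\id}{2}$; so $U_i$ restricted to this eigenspace is a unitary on $(\image \Pi_k) \otimes \C^2$. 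Writing this restricted unitary in block form with respect to the qubit factor, $U_i|_k = \begin{pmatrix} A & B \\ C & D\end{pmatrix}$ with $A,B,C,D$ operators on $\image \Pi_k$, I would then extract what Item 4 tells us: $\Tr_{A''}\bigl((\Pi_k \otimes \id)\,U_i U_1^*\,(\Pi_k \otimes \id)\bigr) = 0$, and since $U_1 = \id$ this reads $\Tr_{A''}(U_i|_k) = A + D = 0$, i.e. the restricted unitary is traceless over the qubit. Combined with unitarity ($AA^* + BB^* = \id$, etc.) and $D = -A$, a short argument on $2\times 2$ blocks shows that we can simultaneously (block-)diagonalize: there is an orthonormal decomposition of $\image \Pi_k$ into subspaces on which $U_i|_k$ acts as $P \otimes R$ with $R$ a traceless Hermitian unitary $2\times2$ matrix.

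Concretely, the key linear-algebra fact I would isolate and prove is: \emph{if $U$ is a unitary on $\Hilb' \otimes \C^2$ that commutes with $\id \otimes$ (any fixed operator proportional to $\id$)—trivially true—and additionally satisfies $\Tr_{\C^2}(U) = 0$, then $U$ is not automatically of product-controlled form; we need more.} So the real leverage must come from using Item 4 across \emph{all} pairs, or from an additional structural input. Rechecking: Item 4 with general $j$ (not just $j=1$) gives $\Tr_{A''}((\Pi_k \otimes \id) U_i U_j^* (\Pi_k \otimes \id)) = 0$ for all $i \neq j$; taking $j$ such that $U_j = \id$ isn't available for all, but taking $i$ vs. $j=1$ gives tracelessness of each $U_i|_k$, and more importantly we also get that $U_i|_k$ and $U_j|_k$ are "trace-orthogonal" over the qubit in the appropriate sense. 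The cleanest route: show each $U_i|_k$, being a traceless unitary, has the form $\sum_\ell Q_{i\ell} \otimes R_{i\ell}$ by diagonalizing the Hermitian unitary $H_i := $ (a suitable combination making $U_i|_k$ Hermitian)—actually a traceless unitary $2\times 2$-blocked operator need not be Hermitian, so instead I would use the spectral theorem for the \emph{unitary} $U_i|_k$ directly: its eigenvalues on $\image\Pi_k \otimes \C^2$ come in pairs $\{e^{i\theta}, -e^{i\theta}\}$? No. The correct statement is that tracelessness over the qubit plus unitarity forces, after grouping, blocks $Q_{i\ell} \otimes R_{i\ell}$ with $R_{i\ell}$ a traceless $2\times2$ unitary, hence (being traceless and unitary) of the form $e^{i\phi} \hat n \cdot \vec\sigma$; absorbing $e^{i\phi}$ into $Q_{i\ell}$ via the phase, and noting we are allowed to work up to the state $\tau'$, we get $R_{i\ell}$ traceless Hermitian unitary.

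The main obstacle I expect is precisely this last point—showing that a traceless unitary on $\Hilb' \otimes \C^2$ can be block-decomposed as $\sum_\ell Q_\ell \otimes R_\ell$. A random traceless unitary is \emph{not} of this form, so the argument must genuinely use more than Item 3 plus tracelessness. I anticipate the missing ingredient is that Item 4, applied with the roles of $i$ and $j$ both ranging over $\{2,3,4\}$, forces the $U_i|_k$ to mutually anticommute (after the $C_i$ adjustments from \Cref{lem:nice_form} that set things up), and mutually anticommuting traceless unitaries on a space with a tensor $\C^2$ factor can be simultaneously brought to the form $Q_\ell \otimes (\text{Pauli})$—this is a Clifford-algebra / representation-theory statement. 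So the key steps in order are: (1) reduce to the state $\zeta \otimes \tfrac{\id}{2}$ using nice form; (2) restrict to an eigenspace $\image \Pi_k$ and block $U_i$ over the qubit; (3) use Item 4 to get tracelessness (from $j=1$) and anticommutation relations among $\{U_i|_k\}$; (4) invoke the structure theory of a Clifford-type algebra generated by these operators to simultaneously block-diagonalize them, yielding $R_{i\ell}$ proportional to traceless $2\times2$ unitaries; (5) absorb phases and re-examine whether a further local unitary $C_i$ (only on $\Hilb_{A'}$) and the equality-up-to-$\tau'$ slack let us take $R_{i\ell}$ Hermitian; (6) reassemble across the eigenspaces $k$ by defining the global projectors $\{Q_{i\ell}\}$ refining $\{\Pi_k\}$, and verify the resulting protocol is still nice form and locally equivalent to the original via \Cref{lem:gauge_freedom}. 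I would flag step (4) as where the paper likely does the real work, and step (5) as a subtle bookkeeping point where the "$=_{\tau'}$" rather than literal equality is essential.
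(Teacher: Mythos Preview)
Your setup through the reduction to a single eigenspace~$\Pi_k$ and the block form
\[
\hat U_{ik} = \begin{pmatrix} F & G \\ H & -F \end{pmatrix}
\]
using Item~4 with~$j=1$ is exactly right, and you correctly diagnose that tracelessness plus unitarity alone does \emph{not} force a controlled-unitary decomposition. But your proposed remedy is the wrong one. Item~4 for~$i \neq j$ both in~$\{2,3,4\}$ gives
\[
\Tr_{A''}\bigl(\hat U_{ik}\,\hat U_{jk}^*\bigr) = 0,
\]
which is Hilbert--Schmidt orthogonality of the qubit partial trace, not anticommutation; there is no Clifford-algebra structure to invoke here, and the paper does not use the cross relations between different~$U_i$'s at this stage at all (those are deferred to the ``matching'' step, \Cref{thm:matching}).

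The missing ingredient is the one you relegated to bookkeeping in step~(5): the freedom to left-multiply by~$C_i \otimes \id$. The paper's argument works on each~$i$ and each~$k$ \emph{separately} and shows that there is a unitary~$S_{ik}$ on~$\image \Pi_k$ such that~$(S_{ik} \otimes \id)\hat U_{ik}$ is \emph{Hermitian}. This is the real content: take polar decompositions~$F = D_F T_F$, $G = D_G T_G$, $H = D_H T_H$; unitarity forces~$D_G = D_H = \sqrt{\Pi_k - D_F^2}$ and forces~$T_F^* T_G$, $T_F^* T_H$ to commute with the positive parts; one then constructs~$E$ (essentially a square root of~$T_H T_G^*$ on the kernel of~$D_F$) so that~$(E T_F^* \otimes \id)\hat U_{ik}$ has the symmetric form
\[
\begin{pmatrix} K & L \\ L^* & -K \end{pmatrix}
\]
with~$K$ positive. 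Now this operator is Hermitian, unitary, and qubit-traceless, so its spectral decomposition immediately yields~$\sum_\ell Q_\ell \otimes R_\ell$ with each~$R_\ell$ a traceless Hermitian~$2\times 2$ unitary. The Hermiticity of the~$R_{i\ell}$ is thus built in from the start, not recovered by absorbing phases at the end. Assembling the~$S_{ik}$ over~$k$ into~$S_i$ and setting~$U_i' \eqdef (S_i \otimes \id)U_i$ gives the locally equivalent nice-form protocol.
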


\begin{proof}
Fix an $i \in \{2,3,4\}$. Since $(\tau,(U_i))$ has a nice form (see \Cref{def:nice-form}), this means that $\tau^A = \rho^{A'} \otimes \frac{\id}{2}$ for some density matrix $\rho$, and $\Tr_{A''}((\Pi_k \otimes \id)U_i U_1^* (\Pi_k \otimes \id)) =\Tr_{A''}((\Pi_k \otimes \id)U_i (\Pi_k \otimes \id)) = 0$, where $\Pi_k$ is a non-zero eigenspace of $\rho$.

Fix a $k$. By property~\ref{p3} of a nice-form protocol, $U_i$ commutes with $\Pi_k \otimes \id$, and we can write 
\[
U_i \quad = \quad (\Pi_k \tensor \id) U_i (\Pi_k \tensor \id) 
    + (\id - \Pi_k \tensor \id) U_i (\id - \Pi_k \tensor \id) \enspace.  
\]
Let $\hat{U}_{ik} = (\Pi_k \tensor \id) U_i (\Pi_k \tensor \id)$, and note that $\hat{U}_{ik}$ is unitary on the image of $\Pi_k \otimes \id$, i.e.:
\[
	\hat{U}_{ik} \hat{U}_{ik}^* = \hat{U}_{ik}^* \hat{U}_{ik} = \Pi_k \otimes \id \;.
\]
For notational convenience we drop the subscripts $i$ and $k$ until the very end. Let $\hat{U} \eqdef \hat{U}_{ik}$ and let $\Pi \eqdef \Pi_k$.

The condition that $\Tr_{A''}(\hat{U}) = 0$ implies that we can write $\hat{U}$ as
\[
\hat{U}=
\left (
\begin{array}{c|c}
F & G \\
\hline
H & -F
\end{array}
\right ) \enspace,
\]
where $F,G,H$ are block matrices that act on the image of $\Pi$ and the block partitions are with respect to the tensor factor $\Hilb_{A''}$. 

Let 
\[
    F = D_F T_F~, \qquad G = D_G T_G~, \qquad H = D_H T_H
\]
give the polar decompositions of $F,G,H$ respectively where $D_F, D_G,D_H$ are positive semidefinite and $T_F,T_G,T_H$ are unitary on the image of $\Pi$. 

Then
\[
\hat{U} =
\left(
\begin{array}{c|c}
D_F T_F & D_G T_G \\
\hline
D_H T_H & - D_F T_F
\end{array}
\right).
\]
The relation $\hat{U} \hat{U}^* = \Pi \otimes \id$ implies that $D_F^2 = \Pi - D_G^2 = \Pi - D_H^2$. For notational brevity we write $D \eqdef D_F$ and $\wt{D} \eqdef D_G = D_H = \sqrt{\Pi - D^2}$. Note that~$D$ and~$\wt{D}$ have support in the image of~$\Pi$ and are simultaneously diagonalizable.
\suppress{
We can then write
\[
(T_F^* \otimes \id) \hat{U} =
\left(
\begin{array}{c|c}
T_F^* D_F T_F & T_F^*  D_G T_G  \\
\hline
T_F^*  D_H T_H & -T_F^*  D_F T_F
\end{array}
\right).
\]

Next we make use of the unitarity of $(T_F^* \otimes \id) \hat{U}$. The relation $\hat{U} \hat{U}^* = \Pi \otimes \id$ implies
\begin{gather}
	T_F^*  D_F^2 T_F + T_F^*  D_G^2 T_F = \Pi \enspace, \qquad \text{and}
	\label{eq:uud_1} \\
	T_F^*  D_F^2 T_F + T_F^*  D_H^2 T_F = \Pi \;.
	\label{eq:uud_2}	
\end{gather}
\Cref{eq:uud_1,eq:uud_2} imply that $D_F^2 = \Pi - D_G^2 = \Pi - D_H^2$, so for notational brevity we write $D = D_F$ and $\wt{D} = D_G = D_H = \sqrt{\Pi - D^2}$.
}
Write $K \eqdef T_F^*  D T_F $ and $\wt{K} \eqdef T_F^* \wt{D} T_F$. Note that $K$ and $\wt{K}$ are positive semidefinite, and also simultaneously diagonalizable. Write $W_G \eqdef T_F^* T_G$ and $W_H^* \eqdef T_F^* T_H$. Continuing our simplification, we see that
\[
\def\arraystretch{1.5}
(T_F^* \otimes \id) \hat{U}=
\left(
\begin{array}{c|c}
K & \wt{K} W_G \\
\hline
\wt{K} W_H^* & -K
\end{array}
\right).
\]

Now our goal is to find a unitary operator $E$ acting on $\Hilb_{A'}$ such that $(E T_F^* \otimes \id) \hat{U}$ is Hermitian. This is equivalent to the conditions that $EK = KE^*$ and $E\wt{K} W_G = (E \wt{K} W_H^*)^*$. We construct such an operator~$E$ using some relations between the operators~$K, \wt{K}, W_G, W_H$ that we derive below.

We use the unitarity relation $\hat{U}^* \hat{U} = \Pi \tensor \id$ to obtain the equations
\begin{align}
	K^2 + W_G^* \wt{K}^2 W_G & = \Pi \enspace, \qquad \text{and}
	\label{eq:uud_3} \\
	K^2 + W_H \wt{K}^2 W_H^* & = \Pi \enspace.
	\label{eq:uud_4}
\end{align}
These equations, along with the definitions of~$K$ and~$\wt{K}$, imply that the unitary operators $W_G,W_H$ are block-diagonal with respect to the eigenspaces of $K$ and $\wt{K}$ (and therefore commute with $K$ and $\wt{K}$).
Another relation we get from unitarity is $K \wt{K} W_G  = W_H \wt{K} K$, 
which via commutativity of $W_H, \wt{K}, K$ implies
\begin{equation}
\label{eq:ta_tb_2}
K \wt{K} W_G = K \wt{K} W_H\;.
\end{equation}
Let~$\Pi_+$ be the orthogonal projector onto~$\support(K)$. Since $W_G$ commutes with $K,\wt{K}$ (and so does $W_H$), we have that~$\Pi_+$ commutes with~$\wt{K}, W_G, W_H$. By \Cref{eq:ta_tb_2}, we have~$\Pi_+ \wt{K} W_G = \Pi_+ \wt{K} W_H = W_H \wt{K} \, \Pi_+$. Since~$K^2 + \wt{K}^2 = \Pi$, we also have that~$(\Pi - \Pi_+) \wt{K} = (\Pi - \Pi_+)$.
Note that \Cref{eq:ta_tb_2}, together with the fact that $W_H,W_G$ are block-diagonal with respect to the eigenspaces of the product $K \wt{K}$, implies that $W_G$ and $W_H$ must be equal on the support of $K \wt{K}$ (equivalently, the support of $\Pi_+$).

We now construct the desired unitary $E$. 
Let $\Pi_0$ denote $\Pi - \Pi_+$ (i.e., the projection onto the kernel of $K$ within the image of $\Pi$). Left-multiplying both sides of \Cref{eq:ta_tb_2} with $K^{+}$ (the pseudoinverse of $K$ on the image of $\Pi_+$) and right-multiplying both sides by $\Pi_+$ we get
\begin{equation}
\label{eq:pi_plus}
	\Pi_+ \wt{K} W_G \Pi_+ = \Pi_+  \wt{K} W_H  \Pi_+ \;.
\end{equation}
Recall that $\Pi_0 \wt{K} = \Pi_0$; combined with the fact that $W_G$ and $W_H$ are both block-diagonal with respect to the eigenspaces of $K$ and $\wt{K}$, we have
\begin{equation}
\label{eq:pi_all}
	\wt{K} W_G= \Pi_+ \wt{K} W_G \Pi_+ + \Pi_0 W_G \Pi_0  \qquad \text{and} \qquad \wt{K} W_H^* = \Pi_+ \wt{K}  W_H^* \Pi_+ + \Pi_0 W_H^* \Pi_0~.
\end{equation}
Furthermore, $V_G \eqdef \Pi_0 W_G \Pi_0$ and $V_H \eqdef \Pi_0 W_H^* \Pi_0$ are unitary on $\Pi_0$. Let $M \eqdef V_H^* V_G$, and consider its spectral decomposition $M = \sum_j e^{\complexi \theta_j} \ketbra{v_j}{v_j}$ where $\{ \ket{v_j} \}$ is an orthonormal basis for the support of $\Pi_0$. Let $M^{1/2} \eqdef \sum_j e^{\complexi \theta_j/2} \ketbra{v_j}{v_j}$ denote the principal square root of $M$. Define $E_0 \eqdef M^{1/2} V_G^\adjoint$. Observe that $E_0$ is supported only on $\Pi_0$ and satisfies
\begin{equation}
\label{eq:pi_0}
	E_0 V_G = M^{1/2} = (M^{1/2} M^\adjoint)^\adjoint = (E_0 V_H)^\adjoint.
\end{equation}
Consider the operator $E \eqdef \Pi_+ + E_0$ that is unitary on the support of $\Pi$, and acts non-trivially only on the support of $\Pi_0$. Combining Equations~\eqref{eq:pi_plus},~\eqref{eq:pi_all} and~\eqref{eq:pi_0} we get
\begin{align*}
 	E \wt{K} W_G &= \Pi_+ \wt{K} W_G \Pi_+ + E_0 V_G \\
						 &= \Pi_+ \wt{K} W_H \Pi_+ + (E_0 V_H)^*  \\
						 &= ( \Pi_+ W_H^* \wt{K} \Pi_+ + E_0 V_H)^*   \\
						 &= ( \Pi_+  \wt{K} W_H^* \Pi_+ + E_0 V_H)^*  \\
						 &= (E \wt{K} W_H^*)^* \;.
\end{align*}
where the second line follows from \Cref{eq:pi_plus} and \Cref{eq:pi_0}, the fourth line follows because $\wt{K}$ commutes with $W_H^*$, and the last line follows because of \Cref{eq:pi_all}. We also have that $E$ commutes with $\wt{K}$: this is because $\Pi_+$ commutes with $\wt{K}$ and also $E_0$ acts nontrivially only on the eigenspace of $\wt{K}$ with eigenvalue $1$. 

Let $L \eqdef E \wt{K} W_G$. Putting everything together, we have
\begin{equation}
\def\arraystretch{1.5}
\label{eq:u2_form}
(E T_F^* \otimes \id) \hat{U} =
\left(
\begin{array}{c|c}
E K & E \wt{K} W_G \\
\hline
E \wt{K} W_H^* & - E K
\end{array}
\right)
=
\left(
\begin{array}{c|c}
K & L \\
\hline
L^* & - K
\end{array}
\right).
\end{equation}
where we used the fact that $E K = K$. %

Let $K = \sum_r \alpha_r P_r$ and $\wt{K} = \sum_r \sqrt{1 - \alpha_r^2} \, P_r$ be spectral decompositions of $K$ and $\wt{K}$ where the reals $\alpha_r$'s are nonnegative and distinct, and the operators $P_r$ are orthogonal projectors summing to $\Pi$. The operator $\wt{K}$ has such a spectral decomposition because $K^2 + \wt{K}^2 = \Pi$. 
Next, since the unitary operators $E$ and $W_G$ commute with $\wt{K}$, they are block-diagonal with respect to the projectors $\{P_r\}$. Thus 
\begin{align*}
    P_r L P_r &= P_r E \wt{K} W_G P_r = P_r \sqrt{\wt{K}} E W_G \sqrt{\wt{K}} P_r = \sqrt{1 - \alpha_r^2} \,\, P_r E W_G P_r~.
\end{align*}
The operator $P_r E W_G P_r$ is unitary on $P_r$ and we can express it as $\sum_s \beta_{rs} Q_{rs}$ where the $\beta_{rs}$'s are complex numbers on the unit circle and $\{ Q_{rs} \}_s$ are orthogonal projectors that sum to $P_r$. Thus we can write $K = \sum_{r,s} \alpha_r \, Q_{rs}$ and $L = \sum_{r,s} \sqrt{1 - \alpha_r^2} \, \beta_{rs} \,  Q_{rs}$, and $(E T_F^* \otimes \id) \hat{U}$ can be written as
\begin{equation}
	(E T_F^* \otimes \id) \hat{U} = \sum_{r,s} Q_{rs} \otimes R_{rs}
\end{equation}
where $R_{rs}$ is the $2 \times 2$ matrix
\[
\begin{pmatrix}
	\alpha_r	& \sqrt{1 - \alpha_r^2}\cdot \beta_{rs} \\
	\sqrt{1 - \alpha_r^2} \cdot \beta_{rs}^* & -\alpha_r 
\end{pmatrix} \enspace.
\]
Notice that $R_{rs}$ has determinant $-1$ and is traceless, therefore its eigenvalues are $\{+1,-1\}$.

Re-introducing the indices $i \in \{2,3,4\}$ and $k$, we have deduced that for every block of $U_i$ corresponding to the eigenspace $\Pi_k$, there exists a map $S_{ik}$ that is unitary on the image of $\Pi_k$ such that 
\[
(S_{ik} \otimes \id) \hat{U}_{ik} = \sum_{r,s} Q_{ikrs} \otimes R_{ikrs}
\]
where the $(R_{ikrs})_{r,s}$ are $2 \times 2$ Hermitian unitary operators with trace $0$. Define the unitary operator $S_i$ on $\Hilb_{A'}$ as $S_i \eqdef (\id - \sum_k \Pi_k) + \sum_k S_{ik} $.
If we sum over $k$, we get
\[
	(S_i \otimes \id) \hat{U}_i = \sum_\ell Q_{i\ell} \otimes R_{i\ell} \enspace,
\]
where $\hat{U}_i \eqdef \sum_k \hat{U}_{ik}$ and we have re-indexed the sum over $k, r, s$ to be a sum over indices $\ell$.
Let $U_i' \eqdef (S_i \otimes \id)U_i$ for all $i \in \{2,3,4\}$. Then, letting $P \eqdef \sum_k \Pi_k$ denote the projector onto the support of $\rho$, we have
\[
	U_i' \tau (U_i')^* = (S_i U_i P) \, \tau \, (S_i U_i P)^* = (S_i \hat{U}_i P) \, \tau \,(S_i \hat{U}_i P)^* = S_i \hat{U}_i \, \tau \, \hat{U}_i^* S_i^* 
\]
where have suppressed the tensoring with identity that extends all the operators to the same space, and used the property that $(P \otimes \id) \tau (P \otimes \id) = \tau$, $\hat{U}_i P = U_i P$, and $\hat{U}_i P = \hat{U}_i$. Thus, the unitary operators $U_i'$ satisfy the conclusions of the theorem statement. Let $\tau' \eqdef \tau$, so that $(\tau',(U_i'))$ is a superdense coding protocol by \Cref{lem:gauge_freedom}. 

Furthermore, since $(\tau,(U_i))$ has a nice form, it can be verified that $(\tau',(U_i'))$ also has a nice form. First,  since $S_1 = \id$, we have that $U_1' = \id$ (and hence Item 1 of \Cref{def:nice-form} is satisfied). Item 2 of \Cref{def:nice-form} is satisfied since $\tau' = \tau$. Third, since $U_i$ commutes with $\rho \otimes \frac{\id}{2}$ and $S_i$ is block-diagonal with respect to the eigenspaces of $\rho$, it follows that $U_i'$ also commutes with $\rho \otimes \frac{\id}{2}$ (so Item 3 of \Cref{def:nice-form} is satisfied).  Finally, we have \begin{align*}
		\Tr_{A''} (( \Pi_k \otimes \id) U_i' (U_j')^* (\Pi_k \otimes \id)) &= 	\Tr_{A''} ( (S_{ik} \otimes \id) \hat{U}_{ik} \hat{U}_{jk}^* (S_{jk}^* \otimes \id)) \\
		&= S_{ik} \left[ \Tr_{A''} (\hat{U}_{ik} \hat{U}_{jk}^*) \right] S_{jk}^* \\
		&= S_{ik} \left[ \Tr_{A''} (( \Pi_k \otimes \id) U_i U_j^* ( \Pi_k \otimes \id)) \right] S_{jk}^* \\
		&= 0.
	\end{align*}
Thus, Item 4 of \Cref{def:nice-form} is satisfied. This completes the proof of the Theorem.
\end{proof}

\subsection{Matching the blocks of the encoding operators}

In the previous section we saw how, up to local equivalence of protocols, we can express the encoding operator $U_i$ in a two-dimensional superdense coding protocol as a block-diagonal matrix with $2 \times 2$ Hermitian unitary operators on the diagonal. In this section we relate the decompositions to each other. Ultimately, the conclusion is that the blocks ``line up'', so that the operators in the same diagonal block of the four encoding operators are the four single qubit Pauli operators.

\begin{theorem}
\label{thm:matching}
	Let $(\tau,(U_i))$ be a superdense coding protocol that has a nice form where for $i \in \{2,3,4\}$ we have
	\[
		U_i =_\tau \sum_k Q_{ik} \otimes R_{ik}\;,
	\]
	for some orthogonal projectors $\{Q_{ik}\}_k$ on $\Hilb_{A'}$ that sum to $\id$, and $2 \times 2$ traceless, Hermitian unitary matrices $\{R_{ik}\}_k$ on $\Hilb_{A''}$.	Then $(\tau,(U_i))$ is locally equivalent to a superdense coding protocol $(\tau',(U_i'))$ that has a nice form and satisfies the following: there exist orthogonal projectors $\{ K_r \}$ on $\Hilb_{A'}$ that sum to the identity and $2 \times 2$ traceless, Hermitian unitary operators $\{ R_{ir} \}$ such that for all $i \in \{2,3,4\}$,
	\[
		U_i' =_{\tau'} \sum_r K_r \otimes R_{ir}\;.
	\]
	Furthermore, for all $r, i \neq j$ we have
	\[
		\Tr( R_{ir} R_{j r}) = 0 \enspace.
	\]
\end{theorem}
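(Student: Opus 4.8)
The plan is to fix an eigenspace of $\rho^{A'}$, recast each encoding operator as a ``unit vector field'' on $\Hilb_{A'}$, distill a single perpendicularity relation from the orthogonality conditions, and then align the three block decompositions by a geometric argument.

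\emph{Reduction to one flat eigenspace.} Since $(\tau,(U_i))$ has a nice form, each $U_i$ commutes with $\rho^{A'}\otimes\tfrac{\id}{2}$, hence with every $\Pi_m\otimes\id$, where $\Pi_m$ runs over the eigenprojectors of $\rho^{A'}$. Grouping the blocks $Q_{ik}$ by the value of $R_{ik}$ and using that distinct $2\times2$ matrices are linearly independent, one sees that each $Q_{ik}$ may be refined so as to lie inside a single $\Pi_m$. Blocks from distinct eigenspaces are orthogonal, and every condition to be checked splits over the $\Pi_m$, so I would fix one eigenspace, restrict everything to $\mathcal K\eqdef\image(\Pi_m)$ (on which $\rho^{A'}$ is a multiple of $\id_{\mathcal K}$), carry out the argument there, and reassemble at the end. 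On $\mathcal K\otimes\Hilb_{A''}$ each $U_i$ with $i\in\{2,3,4\}$ is a Hermitian involution.

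\emph{The perpendicularity relation.} Write $R_{ik}=\hat r_{ik}\cdot\vec\sigma$ with $\hat r_{ik}\in\reals^3$ a unit vector and $\sigma_1,\sigma_2,\sigma_3$ the traceless Paulis, so $U_i=\sum_{\alpha=1}^{3}M^{(i)}_\alpha\otimes\sigma_\alpha$ with $M^{(i)}_\alpha\eqdef\sum_k(\hat r_{ik})_\alpha Q_{ik}$. The triple $\vec M^{(i)}=(M^{(i)}_1,M^{(i)}_2,M^{(i)}_3)$ has pairwise-commuting Hermitian components and $\sum_\alpha(M^{(i)}_\alpha)^2=\id_{\mathcal K}$ (because $\sum_k Q_{ik}=\id_{\mathcal K}$ and $|\hat r_{ik}|=1$). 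Since $\tfrac12\Tr_{A''}(U_iU_j)=\sum_\alpha M^{(i)}_\alpha M^{(j)}_\alpha$, Item~4 of \Cref{def:nice-form} (equivalently \Cref{lem:orthog}) gives $\sum_\alpha M^{(i)}_\alpha M^{(j)}_\alpha=0$ for $i\neq j$ in $\{2,3,4\}$; together with $U_1=\id$ this says the coefficient matrix built from $\{\id,M^{(i)}_\alpha\}$ is a unitary on $\C^4\otimes\mathcal K$, which incidentally also forces the dual identities $\sum_i M^{(i)}_\alpha M^{(i)}_\beta=\delta_{\alpha\beta}\,\id_{\mathcal K}$. Expanding $\Tr_{A''}((\Pi_m\otimes\id)U_iU_j(\Pi_m\otimes\id))=0$ in the block form and sandwiching by $Q_{ik}$ on the left and $Q_{jk'}$ on the right yields the single relation I will use:
\[
Q_{ik}Q_{jk'}\neq 0 \ \Longrightarrow\ \Tr(R_{ik}R_{jk'})=0 \ \Longleftrightarrow\ \hat r_{ik}\perp\hat r_{jk'}\qquad (i\neq j).
\]

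\emph{Aligning the decompositions.} This is the hard part. Because $\{Q_{jk'}\}_{k'}$ sums to $\id_{\mathcal K}$, every block $Q_{ik}$ is ``covered'' by the blocks of $U_j$ it meets, and the relation above then forces the values of those overlapping blocks to be mutually perpendicular unit vectors in $\reals^3$. Two facts drive the argument: (a) a unit vector perpendicular to two \emph{non-antipodal} unit vectors is determined up to sign, so the geometry of $S^2$ is ``rigid'' here; and (b) a local unitary $C_i$ that is block-diagonal with respect to $\{Q_{ik}\}$ and Hermitian on each block (eigenvalues $\pm1$) may split a block $Q_{ik}$ and replace $R_{ik}$ by $-R_{ik}$ on part of it, after which pieces carrying equal $R$'s can be merged. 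A case analysis --- on how many blocks of $U_j$ a given block $Q_{ik}$ meets, and on whether the associated unit vectors are coplanar --- lets me choose one such $C_i$ per $i$ so that the three \emph{resulting} measurements pairwise commute. I expect this step, converting the perpendicularity relation together with the covering property into commutativity of the sign-adjusted measurements, to be the main obstacle; it is exactly where the low dimension $d=2$ is used (orthogonal complements of a line and of a plane in $\reals^3$ being a great circle and an antipodal pair).

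\emph{Conclusion.} With the modified measurements commuting, let $\{K_r\}$ be their common refinement. Then $(C_i\otimes\id)U_i=_{\tau}\sum_r K_r\otimes R_{ir}$, where $R_{ir}=\pm R_{ik}$ for the unique block $Q_{ik}\supseteq K_r$; and since $K_r\subseteq Q_{ik}\cap Q_{jk'}$, the perpendicularity relation gives $\Tr(R_{ir}R_{jr})=0$ for $i\neq j$. Setting $\tau'\eqdef\tau$, $C_1\eqdef\id$, and $U_i'\eqdef(C_i\otimes\id)U_i$, \Cref{lem:gauge_freedom} shows $(\tau',(U_i'))$ is a superdense coding protocol; and since each $C_i$ is block-diagonal with respect to the eigenspaces of $\rho^{A'}$ and acts trivially on $\Hilb_{A''}$, the verification at the end of the proof of \Cref{thm:block_diagonal} applies verbatim to show $(\tau',(U_i'))$ again has a nice form. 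Re-indexing the union over eigenspaces of the projectors $\{K_r\}$ then completes the proof.
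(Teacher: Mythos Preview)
Your setup is sound and matches the paper: the perpendicularity relation $Q_{ik}Q_{jk'}\neq 0\Rightarrow\Tr(R_{ik}R_{jk'})=0$ is exactly the engine the paper uses, and the preliminary reduction to a single eigenspace (and to Hermitian involutions) is a harmless simplification. The paper also begins with a sign-flip step, but for a more modest goal: it defines $S_i=\sum_k s_{ik}Q_{ik}$ only to merge blocks whose $R$-values are equal up to sign, so that afterwards the $\{R_{ik}\}_k$ are pairwise \emph{inequivalent}.

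The gap is precisely where you flag it. You assert that a case analysis yields block-diagonal $C_i$'s making the three (refined) projective measurements pairwise commute, but you do not carry it out, and it is not clear such a global choice exists. Once the $R$'s are inequivalent, the constraint $\hat r_{ik}\perp\hat r_{jk'}$ whenever $Q_{ik}Q_{jk'}\neq 0$ does not by itself force the three block structures to be simultaneously diagonalizable, and sign flips do not move the projectors $Q_{ik}$ at all---they only relabel $R$-values---so the measurements $\{Q_{ik}\}_k$ before and after applying $C_i$ are literally the same PVM. Your ``pairwise commute'' goal is therefore a statement about the original PVMs, untouched by $C_i$, and you have given no mechanism to establish it.

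The paper bypasses commutativity entirely. It builds a tripartite \emph{overlap graph} on the projectors (edges record $Q_{ik}Q_{j\ell}\neq 0$) and proves a \emph{Reduction Lemma}: there is always a triangle $(Q_{2k},Q_{3\ell},Q_{4m})$ admitting a common unit eigenvector $\ket{v}$. Two ingredients drive this. First, \emph{no two triangles share an edge}: if $(k,\ell,m)$ and $(k,\ell,m')$ were both triangles, then $R_{4m}$ and $R_{4m'}$ would both be perpendicular to $R_{2k}$ and $R_{3\ell}$, forcing $R_{4m}=\pm R_{4m'}$ by \Cref{lem:mut_orthog_unitaries}, contradicting inequivalence. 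Second, every vertex lies in some triangle (insert $\sum_m Q_{4m}$ between $Q_{2k}$ and $Q_{3\ell}$). Combining these, for any triangle $(C,D,E)$ one gets $C(\Pi-D)E=0$, hence $CDE=(CDE)^2$ has norm~$1$, and any unit vector achieving the norm is a common $+1$-eigenvector. Peeling off $\ketbra{v}{v}$ from each of $C,D,E$ preserves the hypotheses, and iterating yields the rank-one projectors $K_r=\ketbra{v_r}{v_r}$ directly---no commutativity of the original PVMs is ever claimed or needed.
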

\begin{proof}

	The first step is to ``coarse-grain'' the projectors $\{Q_{ik} \}$ so that the associated operators $R_{ik}$ are all inequivalent in the following sense. For each $i$, we say that $k$ and~$k'$ are \emph{$i$-equivalent\/} if $R_{ik} = \pm R_{ik'}$. For every $i$, this forms an equivalence relation on the $k$'s. Let $p_i(k)$ denote the least $k'$ such that $k'$ and $k$ are $i$-equivalent. Define $s_{ik} \in \{\pm 1\}$ to be such that $R_{ik} = s_{ik} R_{i p_i(k)}$. 
	
	For every $i \in \{2,3,4\}$, for every $k$, define the unitary operator $S_i = \sum_k s_{ik} Q_{ik}$ which acts on $\Hilb_{A'}$ (and set $S_1 = \id$). Then if we define $U'_i = (S_i \otimes \id) U_i$ and $\tau' = \tau$, by \Cref{lem:gauge_freedom} we get that the pair $(\tau',(U_i'))$ is a superdense coding protocol, and furthermore the operators $(U_i')$ admit a block-diagonalization where for all $i$, the associated $2 \times 2$ unitary operators $R_{ik}$ are all inequivalent. It is also straightforward to check that $(\tau',(U_i'))$ has a nice form.

	Next, from Lemma~\ref{lem:nice_form} we have that for $i \neq j$ 
	\begin{equation}
	\label{eq-orth-operators}
		0 = \Tr_{A''} ( U_i' (U_j')^*) = \sum_{k, \ell} Q_{ik} Q_{j\ell } \cdot \Tr(R_{ik} R_{j\ell}).
	\end{equation}
	By left-multiplying the above expression by $Q_{ik}$ for some $k$ and right-multiplying by $Q_{j\ell}$ for some $\ell$ yields $Q_{ik} Q_{j\ell } \cdot \Tr(R_{ik} R_{j\ell}) = 0$. Therefore, if $Q_{ik} Q_{j\ell }\neq 0$, it follows that $\Tr(R_{ik} R_{j\ell}) = 0$.

	Given three sets of projectors $\{ Q_{2k} \}$, $\{Q_{3\ell} \}$, and $\{Q_{4m} \}$ we can define the following tripartite graph $G$, which we call the \emph{overlap graph}. Associate a vertex with every projector $Q_{ik}$ for $i \in \{2,3,4\}$. Include an edge between $Q_{ik}$ and $Q_{j\ell}$ if and only if $Q_{ik} Q_{j\ell } \neq 0$. 	A \emph{triangle} $T=(k,\ell,m)$ in the graph $G$ corresponds to a triple of projectors $Q_{2k}, Q_{3\ell}, Q_{4m}$ such that the pairwise products are all nonzero. Given encoding operators as in \Cref{eq-orth-operators}, we use triangles to match their blocks.
	
	\begin{lemma}[Reduction Lemma]
	\label{lem:reduction}
	Let $\{ Q_{2k} \}$, $\{Q_{3\ell} \}$, and $\{Q_{4m} \}$ be sets of orthogonal projectors with the following properties:
	\begin{enumerate}
		\item $\sum_k Q_{2k} = \sum_\ell Q_{3\ell} = \sum_m Q_{4m}$ 
		\item For $i \neq j$, for all $k,\ell$, $Q_{ik} Q_{j\ell} \neq 0$ implies that $\Tr(R_{ik} R_{j\ell}) = 0$. 
		\item For all $i \in \{2,3,4\}$, the $\{R_{ik} \}_k$ are inequivalent.
	\end{enumerate}
	Then there exists a triangle $T = (k,\ell,m)$ and a unit vector $\ket{v} \in \Hilb_{A'}$ such that 
	\[
		Q_{2k} \ket{v} = Q_{3\ell} \ket{v} = Q_{4m} \ket{v} = \ket{v}.
	\]
	\end{lemma}
	
	We shall assume for now that the Reduction Lemma holds. We show how this gives us an iterative decomposition procedure to construct the orthogonal projectors $\{K_r \}$ satisfying the conclusions of the Theorem.
	
	The sets $Q_2^{(0)} \eqdef \{ Q_{2k} \}$, $Q_3^{(0)} \eqdef \{Q_{3 \ell} \}$, and $Q_4^{(0)} \eqdef \{Q_{4 m} \}$ satisfy the required conditions of the Reduction Lemma with $\sum_k Q_{2k} = \sum_\ell Q_{3\ell} = \sum_m Q_{4m} = \id$. Thus there exists a triangle $T_0 = (k_0,\ell_0,m_0)$ and a vector $\ket{v_0}$ that is a common eigenvector of $Q_{2k_0}, Q_{3\ell_0}, Q_{4m_0}$. Thus we can write
	\[
		Q_{2k_0} = \ketbra{v_0}{v_0} + Q_{2k_0}'	\qquad Q_{3\ell_0} = \ketbra{v_0}{v_0} + Q_{3\ell_0}' \qquad Q_{4m_0} = \ketbra{v_0}{v_0} + Q_{4m_0}'
	\]
	where $Q_{2k_0}'$, $Q_{3\ell_0}'$, and $Q_{4m_0}'$ are orthogonal projectors with rank one smaller.

	Define the sets $Q_2^{(1)}, Q_3^{(1)}, Q_4^{(1)}$ to be the sets $Q_2^{(0)}, Q_3^{(0)}, Q_4^{(0)}$ with the projectors $Q_{2k_0}, Q_{3\ell_0}, Q_{4m_0}$ replaced by $Q_{2k_0}',Q_{3\ell_0}',Q_{4m_0}'$. %
	
	Observe that $Q_2^{(1)}, Q_3^{(1)}, Q_4^{(1)}$ satisfies the required conditions of the Reduction Lemma, with 
	\[
	\sum_{F \in Q_i^{(1)}} F \quad = \quad \id - \ketbra{v_0}{v_0} \enspace,
	\]
	for all~$i \in \set{2, 3, 4}$.
	Applying the Reduction Lemma again, we find another triangle $T_1$ and a common eigenvector $\ket{v_1}$ of the triangle. We continue this process of reducing the rank of at least one operator each in the sets $Q_2^{(r)}, Q_3^{(r)}, Q_4^{(r)}$ and finding common eigenvectors~$\ket{v_r}$ until we have fully expressed
	\[
		U_i' =_\tau \sum_r K_r \otimes R_{ir} \enspace,
	\]
	where $K_r \eqdef \ketbra{v_r}{v_r}$, and for every $r$, the pairwise inner products satisfy $\Tr(R_{2r} R_{3r}) = \Tr(R_{2r} R_{4r}) = \Tr(R_{3r} R_{4r}) = 0$. This concludes the proof of the Theorem. 
\end{proof}

Before proving the Reduction Lemma we establish the following lemma, which claims that up to conjugation by the same unitary operator, the only collection of $2 \times 2$ traceless, Hermitian, unitary, mutually orthogonal matrices are the single-qubit Pauli matrices.

\begin{lemma}
\label{lem:mut_orthog_unitaries}
	Let $R_2,R_3,R_4$ be $2 \times 2$ unitary matrices that are traceless, Hermitian, and satisfy $\Tr(R_i R_j) = 0$ for all $i \neq j$. Then there exists a $2 \times 2$ unitary operator $S$ such that
	\[
		R_2 = S \PauliZ S^*	\qquad R_3 = S \PauliX S^* \qquad R_4 = S \PauliY S^*.
	\]
\end{lemma}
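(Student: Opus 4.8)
\textbf{Proof plan for \Cref{lem:mut_orthog_unitaries}.}
The plan is to exploit the well-known correspondence between $2\times2$ traceless Hermitian matrices and vectors in $\reals^3$. First I would recall that any $2\times 2$ traceless Hermitian matrix $R$ can be written as $R = \bm{n}\cdot\bm{\sigma} \eqdef n_1\PauliX + n_2\PauliY + n_3\PauliZ$ for a unique real vector $\bm{n} = (n_1,n_2,n_3)$, and that $\Tr(R^2) = 2\,\bm{n}\cdot\bm{n}$. The unitarity condition $R^2 = \id$ is then equivalent to $\bm{n}$ being a unit vector, and the orthogonality condition $\Tr(R_iR_j) = 0$ is equivalent to $\bm{n}_i\cdot\bm{n}_j = 0$. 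Writing $R_2 = \bm{a}\cdot\bm{\sigma}$, $R_3 = \bm{b}\cdot\bm{\sigma}$, $R_4 = \bm{c}\cdot\bm{\sigma}$, the hypotheses say exactly that $(\bm{a},\bm{b},\bm{c})$ is an orthonormal triple in $\reals^3$.

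The second step is to pass from the orthonormal basis $(\bm{a},\bm{b},\bm{c})$ back to a unitary conjugation. I would invoke the standard two-to-one homomorphism from $\unitary(\complex^2)$ (equivalently $SU(2)$) onto $SO(3)$: for any rotation $O \in SO(3)$ there is a unitary $S$ such that $S(\bm{n}\cdot\bm{\sigma})S^* = (O\bm{n})\cdot\bm{\sigma}$ for all $\bm{n}$. One subtlety is orientation: an orthonormal triple $(\bm{a},\bm{b},\bm{c})$ need not be right-handed, so $O$ mapping the standard basis $(\bm{e}_3,\bm{e}_1,\bm{e}_2)$ (matching $\PauliZ,\PauliX,\PauliY$) to $(\bm{a},\bm{b},\bm{c})$ may have determinant $-1$. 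I would handle this by checking that in fact $(\bm{e}_3,\bm{e}_1,\bm{e}_2)$ is itself a right-handed (positively oriented) frame — indeed $\bm{e}_3\times\bm{e}_1 = \bm{e}_2$ — so $(\bm{a},\bm{b},\bm{c})$ being forced right-handed is \emph{not} automatic from the hypotheses, and this is the one genuine gap to address. Either the lemma needs the extra hypothesis that $(\bm a,\bm b,\bm c)$ is right-handed (which may hold by construction in the application, since it comes from a product structure), or one observes that if $(\bm a,\bm b,\bm c)$ is left-handed then no such $S$ exists and the lemma as stated is slightly too strong; in the application, the required orientation presumably holds, or can be arranged by a relabeling or a sign flip absorbed elsewhere.

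The main obstacle, then, is precisely this orientation issue: without it, the map $R_i \mapsto \bm n_i$ only gives an \emph{orthogonal} change of frame, and a reflection is not implementable by unitary conjugation on $2\times2$ matrices (conjugation preserves orientation since $\unitary(2)$ is connected and maps onto $SO(3)$, not $O(3)$). Assuming the orientation works out (as it should in the intended use, where $R_2,R_3,R_4$ arise from the block structure built in the previous theorems and inherit the handedness of the Pauli triple), the rest is routine: take $S$ realizing the rotation $O$ with $O\bm e_3 = \bm a$, $O\bm e_1 = \bm b$, $O\bm e_2 = \bm c$, and conclude $R_2 = S\PauliZ S^*$, $R_3 = S\PauliX S^*$, $R_4 = S\PauliY S^*$. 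I would present the $SU(2)\to SO(3)$ fact either by direct construction (writing $S$ as a product of the standard rotations $e^{-\complexi\theta\,\sigma_k/2}$) or by citing it as standard, and spend the bulk of the written proof making the vector-to-matrix dictionary precise and verifying the handedness claim in context.
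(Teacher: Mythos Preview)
Your Bloch-vector approach via the $SU(2)\to SO(3)$ covering is genuinely different from the paper's, which proceeds by direct computation: conjugate $R_2$ to $\PauliZ$, then use a diagonal phase to rotate $R_3'$ to $\PauliX$, then inspect $R_4''$. Both routes are short; yours is more conceptual, the paper's more hands-on. More importantly, your orientation worry is exactly right, and the lemma as stated is \emph{false}. Take $R_2=\PauliZ$, $R_3=\PauliX$, $R_4=-\PauliY$: these are traceless Hermitian unitaries, pairwise orthogonal under the trace inner product, yet any unitary $S$ with $S\PauliZ S^\adjoint=\PauliZ$ and $S\PauliX S^\adjoint=\PauliX$ commutes with both and is therefore a scalar, whence $S\PauliY S^\adjoint=\PauliY\neq-\PauliY$.

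The paper's own proof contains precisely this slip. In the left-handed case $R_4''=-\PauliY$ (the paper's ``$x=\complexi$''), the chosen $S_3=-\complexi\,\PauliZ$ does conjugate $-\PauliY$ to $\PauliY$, but it also sends $\PauliX$ to $-\PauliX$, so the asserted $R_3'''=\PauliX$ fails. The correct conclusion carries a single sign: there exist a unitary $S$ and $\epsilon\in\{\pm1\}$ with $R_2=S\PauliZ S^\adjoint$, $R_3=S\PauliX S^\adjoint$, $R_4=\epsilon\, S\PauliY S^\adjoint$. As you anticipated, this is harmless downstream. In the Reduction Lemma the lemma is only used to force $R_{4m'}=\pm R_{4m}$, which is insensitive to $\epsilon$; and in the final assembly of \Cref{thm:rigidity} the stray sign on one Pauli can be absorbed into the local operator $C_i$ on $\Hilb_{A'}$, exactly as $\pm$ signs were already absorbed in the coarse-graining step at the start of the proof of \Cref{thm:matching}. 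So your analysis is in fact sharper than the paper's on this point.
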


\begin{proof}

We find a sequence of unitary operators $S_1,S_2,S_3$ such that $S \eqdef S_1^* S_2^* S_3^*$ satisfies the conclusions of the lemma. 
	Because $R_2$ is unitary, Hermitian and traceless, we can unitarily diagonalize it as $R_2 = \ketbra{a}{a} - \ketbra{b}{b}$.
Define $S_1$ as the unitary operator with
\[
	S_1 \ket{a} = \ket{0} \qquad S_1 \ket{b} = \ket{1}.
\]
Let $R_i' \eqdef S_1 R_i S_1^*$ for $i \in \{2,3,4\}$. These are all traceless, Hermitian, pairwise orthogonal unitary matrices, and furthermore~$R_2' = \PauliZ$. Suppose
\suppress{
\[
	R_2' = \begin{pmatrix} 1 & 0 \\ 0 & -1 \end{pmatrix} \qquad R_3' = \begin{pmatrix} r & s \\ t & u \end{pmatrix} \;.
\]
}
\[
	R_3' = \begin{pmatrix} r & s \\ t & u \end{pmatrix} \;.
\]
Since $\Tr(R_2' R_3') = 0$ and $R_3'$ is traceless, we have that $r = u = 0$, and since $R_3'$ is Hermitian and unitary, we have $s = t^* = \e^{\complexi \theta}$ for some $\theta \in [0,2\pi)$. Let
\[
S_2 \eqdef \begin{pmatrix} \e^{-\complexi \theta/2} & 0 \\ 0 & \e^{\complexi \theta/2} \end{pmatrix} \enspace,
\]
and $R_i'' \eqdef S_2 R_i' S_2^*$ for $i \in \{2,3,4\}$. Again, the operators~$R_i''$ remain traceless, Hermitian unitary, and pairwise orthogonal, and furthermore~$R_2'' = \PauliZ$ and~$R_3'' = \PauliX$. Suppose
\suppress{
\[
R_2'' = \begin{pmatrix} 1 & 0 \\ 0 & -1 \end{pmatrix} \qquad R_3'' = \begin{pmatrix} 0 & 1 \\ 1 & 0 \end{pmatrix} \qquad R_4'' = \begin{pmatrix} w & x \\ y & z \end{pmatrix} \;.
\]
}
\[
R_4'' = \begin{pmatrix} w & x \\ y & z \end{pmatrix} \;.
\]
From $\Tr(R_2'' R_4'') = 0$, Hermiticity, unitarity, and tracelessness of $R_4''$ we have again that $w = z = 0$ and $x = y^* = \e^{\complexi \phi}$ for some $\phi \in [0,2\pi)$. From $\Tr(R_3'' R_4'') = 0$ we have that $x = -y$, which means that $x = \pm \complexi$. If $x = - \complexi$, then set $S_3 \eqdef \id$. Otherwise, set $S_3 \eqdef -\complexi \, \PauliZ$.
Let $R_i''' \eqdef S_3 R_i'' S_3^*$ for $i \in \{2,3,4\}$. We have that~$R_2''' = \PauliZ$, $R_3''' = \PauliX$, $R_4''' = \PauliY$.
\suppress{
\[
R_2''' = \begin{pmatrix} 1 & 0 \\ 0 & -1 \end{pmatrix} \qquad R_3''' = \begin{pmatrix} 0 & 1 \\ 1 & 0 \end{pmatrix} \qquad R_4''' = \begin{pmatrix} 0 & -i \\ i & 0 \end{pmatrix} \;.
\]
}
Thus, letting $S = S_1^* S_2^* S_3^*$, we obtain the desired conclusion of the lemma.
\end{proof}

We now turn to proving the Reduction Lemma.
	
	\begin{proof}[Proof of \Cref{lem:reduction}] Define $\Pi = \sum_k Q_{2k}$. 

\paragraph{No two triangles in $G$ share an edge.} Suppose we have two triangles corresponding to projectors $(Q_{2k}, Q_{3\ell}, Q_{4m})$ and $(Q_{2k}, Q_{3\ell}, Q_{4m'})$ for some~$k,l,m,m'$. We then have the equations 
\[	
\Tr(R_{2k} R_{3\ell}) = \Tr(R_{2k} R_{4m}) = \Tr(R_{3\ell} R_{4m}) =  \Tr(R_{2k} R_{4m'}) = \Tr(R_{3\ell} R_{4m'}) = 0.
\]
By \Cref{lem:mut_orthog_unitaries}, this implies that there exists a unitary operator $S$ such that
\[
R_{2k} = S \PauliZ S^*	\qquad R_{3\ell} = S \PauliX S^* \qquad R_{4m} = S \PauliY S^*\;.
\]
Therefore we obtain that 
\[
	\Tr(\PauliZ S^* R_{4m'} S) = \Tr(\PauliX S^* R_{4m'} S) = 0 \enspace.
\]
If
\[
S^* R_{4m'} S = \begin{pmatrix} a & b \\ b^* & d \end{pmatrix} \enspace,
\]
the above equation implies that $a = d = 0$ and $b = -b^*$, or equivalently that $b = \pm \complexi$. Thus $S^* R_{4m'} S = \pm \PauliY = \pm S^* R_{4m} S$, or in other words $R_{4m} = \pm R_{4m'}$, contradicting the assumption that $R_{4m}$ and $R_{4m'}$ are inequivalent.

\paragraph{Every vertex is in a triangle.} Consider $Q_{2k}$ for some~$k$. There exists an index $\ell$ such that $Q_{2k} Q_{3\ell} \neq 0$, because the operators $\{Q_{3\ell} \}$ form a resolution of $\Pi$. Since $\{ Q_{4m} \}$ also forms an orthogonal resolution of $\Pi $, we have that 
\[
	0 \neq Q_{2k} Q_{3\ell} = Q_{2k} \left ( \sum_{m} Q_{4m} \right) Q_{3\ell} \enspace.
\]
This implies that there exists an index $m$ such that $Q_{2k} Q_{4m} \neq 0$ and $Q_{4m} Q_{3\ell} \neq 0$.

\paragraph{Finding a common eigenvector of a triangle.} Fix a triangle $T = (k,\ell,m)$. For notational simplicity we shall write $C \eqdef Q_{2k}$, $D \eqdef Q_{3\ell}$, and $E \eqdef Q_{4m}$. First, observe that if $C(\Pi - D)E \neq 0$, then there exists some index~$\ell'$ such that $C Q_{3\ell'} E \neq 0$. This implies that $(k,\ell',m)$ forms a triangle in $G$. But this cannot happen as this triangle would share the edge $(k,m)$ with $T$. Therefore $C(\Pi - D)E = 0$, i.e., $CE = CDE$.
	
By symmetry, we also get that $CD = CED$ and $ED = ECD$.
Thus we have
\[
	0 \neq CE = CDE = CEDE = CECDE = CDECDE.
\]
Since $CDE$ is a product of three projectors, its spectral norm is at most $1$. 
Let $\ket{v}$ be a unit vector realizing the spectral norm of $CDE$, i.e. such that $\| CDE \ket{v} \| = \| CDE \| > 0$. Then
	\begin{align*}
		\| CDE \| = \| CDE \ket{v} \| = \| CDE CDE \ket{v} \| \leq \| CDE CDE \| \leq \|CDE \|^2.
	\end{align*}
The inequality $\| CDE \| \leq \| CDE \|^2$ implies that $\| CDE \| = 1$ (since it is not zero and is at most one). Therefore $\ket{v}$ is a vector such that $C\ket{v} = D\ket{v} = E\ket{v} = \ket{v}$. 
\end{proof}

We now put everything in this section together to prove \Cref{thm:rigidity}, which we restate here for convenience.

\rigidity*

\begin{proof}
	Putting together \Cref{lem:nice_form}, \Cref{thm:block_diagonal}, and \Cref{thm:matching}, we get that all superdense coding protocols $(\tau,(U_i))$ are locally equivalent to one that has a nice form and satisfies the conclusions of \Cref{thm:matching}. Finally, we apply \Cref{lem:mut_orthog_unitaries} to the conclusions of \Cref{thm:matching} to obtain the conclusions of \Cref{thm:rigidity}. 
\end{proof}

\section{Superdense coding and orthogonal unitary bases}
\label{sec-oub}

In this section, we prove that there are multiple non-equivalent 
superdense coding protocols for transmitting~$d^2$ messages 
for~$d \ge 3$, even when no ancilla is used in the encoding 
process, and there is no error in decoding. This implies that
rigidity of superdense coding protocols for~$d \ge 3$ may only
hold in a relaxed form: as we see in this section, rigidity
may hold only up to the choice of an orthogonal unitary basis 
for the space of linear operators~$\linear( \complex^d)$.

\subsection{The connection with unitary bases}
\label{sec-oub-connection}

We draw a connection between superdense coding and bases for the vector
space of~$d \times d$ complex matrices. Although this connection may be inferred 
from \Cref{lem:nice_form}, we give a simple and direct derivation here.

For any integer~$d > 1$, consider a protocol for superdense coding 
of~$d^2$ classical strings using a shared entangled state with local 
dimension~$d$, and a single~$d$-dimensional message.
Assume that the protocol does not use any ancilla in 
the encoding process, and that there is no decoding error. Such a 
protocol necessarily has a simple form, as we describe below.

First, we argue that the initial shared state is maximally entangled.
Bob's state after the message has support in
a~$d^2$-dimensional space. Since there are~$d^2$ strings, and these
are decoded without error, the corresponding states are orthogonal and
pure. So the mixed state of the entire encoded state, corresponding to a uniformly random string, 
is completely mixed. However, the marginal of this state on the register
initially held by Bob is be the same as the marginal for any fixed string.
Thus Bob's share of the initial state is also the~$d$-dimensional
completely mixed state. This implies that the initial shared state
is maximally entangled.

Any maximally entangled state with local dimension~$d$ is of the
form~$(U \tensor V) \ket{\mes_d}$, where~$U,V$ are unitary operators in~$ \unitary(\complex^d)$,
and~$\ket{\mes_d} \coloneqq \tfrac{1}{\sqrt{d}} \sum_{k  = 0}^{d - 1}
\ket{k}\ket{k}$. Therefore, without loss of generality, we may assume that 
Alice and Bob initially share the state~$\ket{\mes_d}$. When the dimension~$d$
is clear from the context, we omit it from the subscript.

Second, since the encoding of any message is pure, Alice's local 
operations satisfy the following properties. 
On input~$i \in [d^2]$, Alice applies a unitary 
operator~$U_i \in \unitary(\complex^d)$ to her share of the 
state~$\ket{\mes}$, and sends the share to the Bob. Since Bob can decode
the input~$i$ with probability~$1$, the states~$(U_i \tensor \id)
\ket{\mes}$ are all orthogonal, i.e., for all distinct~$i,j \in [d^2]$,
we have
\[
     \bra{\mes}  ( U_i^\adjoint U_j \tensor \id) \ket{\mes} \quad = \quad 0 \enspace.
\]
This condition is equivalent to the property that the operators~$U_i$ are mutually
orthogonal with respect to the Hilbert-Schmidt inner product:
\begin{equation}
\label{eq-orth}
    \trace(U_i^\adjoint U_j) \quad = \quad 0 \enspace, \qquad \textrm{ for all }
        i, j \in [d^2], ~ i \neq j \enspace.
\end{equation}
Thus, the operators form an orthogonal unitary basis for the space of
linear operators on~$\complex^d$.

It is straightforward to verify that any such basis
for~$\linear(\complex^d)$ leads to an errorless superdense coding 
protocol for~$d^2$ classical messages. Thus, the study of rigidity of
superdense coding protocols as above is equivalent to the study of
orthogonal unitary bases.

A well-known example of an orthogonal unitary basis in dimension~$d$ 
is generated by the ``clock'' and ``shift'' operators. The elements of 
this basis are also known as the generalized Pauli operators or the 
Heisenberg-Weyl operators.
Let~$\omega_d \coloneqq \exp \left( \tfrac{2 \pi \complexi}{d} \right)$ 
be a primitive~$d$th root of unity.
For~$i,j \in \set{0, 1, \dotsc, d - 1}$, the~$(i,j)$th 
operator~$P_{ij}$ in the basis is defined as~$P_{ij} \coloneqq
\PauliX_d^i \, \PauliZ_d^j$, where~$\PauliX_d \coloneqq \sum_{k = 0}^{d-1}
\ketbra{k+1 \pmod{d}}{k}$ is the shift (or Pauli~X) operator, 
and~$\PauliZ_d \coloneqq \sum_{k = 0}^{d-1} \omega_d^k \ketbra{k}{k}$ 
is the clock (or Pauli~Z) operator.

\subsection{Uniqueness of orthogonal unitary bases}
\label{sec-oub-uniqueness}

Given an orthogonal unitary basis~$B$ for~$\linear(\complex^d)$, we may 
derive other such bases by conjugating elements of~$B$ by a pair of 
unitary operators, and mutliplying each basis element by a potentially 
different complex number of unit modulus. Since this is a rather
straightforward method to derive new bases, we consider the new basis 
to be equivalent to~$B$.
\begin{definition}
Let~$B_1 \coloneqq \set{ U_i : i \in [d^2] }$ be an orthogonal unitary
basis for~$\linear(\complex^d)$. We say that an orthogonal unitary 
basis~$B_2$ \emph{is equivalent to\/}~$B_1$ if there exist unit complex 
numbers~$\alpha_i \in \unitary( \complex)$ and a pair of unitary 
operators~$V, W \in \unitary( \complex^d)$ such that
\begin{equation}
\label{eq-freedom}
    B_2 \quad = \quad \set{ \alpha_i V U_i W : ~~ i \in [d^2]  } \enspace.
\end{equation}
\end{definition}
We may verify that this defines an equivalence relation.

Another way to construct an orthogonal unitary basis is by
taking tensor products of bases in lower dimensions. Suppose~$d$ 
is composite, with~$d = d_1 d_2$ and~$1 < d_1, d_2 < d$, and~$\set{
U_i : i \in [d_1^2] }$ and~$\set{ V_j : j \in [d_2^2] }$ are orthogonal 
unitary bases for~$\linear(\complex^{d_1})$ and~$\linear(\complex^{d_2})$,
respectively. Then
\[
\set{ U_i \tensor V_j : i \in [d_1^2]; ~ j \in [d_2^2] }
\]
is an orthogonal unitary basis for~$\linear(\complex^{d})$. This hints
at the possibility that are bases that are not equivalent to each other 
under operations as in Eq.~(\ref{eq-freedom}).  The following 
proposition confirms this for dimensions which are powers of two.
\begin{proposition}
\label{prop-2tok-dim}
Suppose~$d = 2^k$ for an integer~$k > 1$. Let~$B_1$ be the basis 
for~$\linear(\complex^{d})$ obtained by taking tensor products 
of~$k$ two-dimensional Pauli~X and~Z operators, i.e., 
\[
B_1 \quad \coloneqq \quad \set{ \bigotimes_{i = 1}^k P_i : 
    P_i \in \set{ \id, \PauliX_2, \PauliZ_2, \PauliX_2 \PauliZ_2 } }
    \enspace.
\]
The basis~$B_1$ is \emph{not\/} equivalent to~$B_2$, the~$d$-dimensional 
clock and shift basis.
\end{proposition}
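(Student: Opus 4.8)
The plan is to find an invariant that distinguishes the tensor-power Pauli basis $B_1$ from the clock-and-shift basis $B_2$ and is preserved under the equivalence operations of Eq.~(\ref{eq-freedom}), namely conjugation by a fixed pair $V,W$ and scaling by unit phases $\alpha_i$. The natural candidate is a group-theoretic invariant: the clock-and-shift operators $\{P_{ij}\}$ form, up to phases, a group isomorphic to $\integers_d \times \integers_d$ under multiplication (since $\PauliX_d \PauliZ_d = \omega_d^{-1} \PauliZ_d \PauliX_d$), so the set $\{ \alpha\, P_{ij} : \alpha \in \unitary(\complex),\ i,j \}$ is a group under multiplication whose image in $\mathrm{PU}(\complex^d)$ is abelian of order $d^2$. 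The point is that for $d = 2^k$ with $k > 1$, the tensor-power basis $B_1$ is also of this ``commutative projective group'' type (tensor products of single-qubit Paulis, which all commute up to sign), so this particular invariant does \emph{not} separate them — meaning I must instead compare the two bases as projective groups more carefully, looking at the group structure itself.

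So the refined plan: first observe that both $B_1$ and $B_2$, up to phases, generate abelian groups modulo scalars of order $d^2$ — call these $\overline{G_1}$ and $\overline{G_2}$ in $\mathrm{PU}(\complex^d)$. An equivalence map $U_i \mapsto \alpha_i V U_i W$ does \emph{not} in general send a group to a group (because of the two-sided conjugation by distinct $V,W$), so the first real step is to show that if $B_2$ were equivalent to $B_1$, then the group structure would nonetheless be forced upon $B_1$'s image, or — cleaner — to pull the comparison over to a genuinely conjugation-invariant quantity. I would compute, for each pair $i \ne j$ in a basis $\{U_i\}$, the commutator-type quantity $U_i U_j U_i^{-1} U_j^{-1}$, or better the \emph{set of eigenvalues} (spectrum) of each $U_i$ and of each product $U_i^\adjoint U_j$. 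For $B_2$: each nontrivial $P_{ij} = \PauliX_d^i \PauliZ_d^j$ has order dividing $d$ and its spectrum is a full set of $d$th roots of unity (evenly distributed), whereas for $B_1$: each nontrivial tensor product of single-qubit Paulis is Hermitian (or $i$ times Hermitian), with eigenvalues only in $\{+1,-1\}$ (resp.\ $\{+i,-i\}$), each with multiplicity $d/2$. Since $V U_i W$ and $U_i$ need not be similar, I cannot directly compare spectra of individual elements; instead I compare spectra of the ``structure constants'' $U_i^\adjoint U_j$, which \emph{are} conjugation-covariant in a usable way: $(\alpha_i V U_i W)^\adjoint (\alpha_j V U_j W) = \overline{\alpha_i}\alpha_j W^\adjoint U_i^\adjoint U_j W$, so $U_i^\adjoint U_j$ and the transformed version are similar up to a global phase. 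Thus the multiset $\{\text{spectrum of } U_i^\adjoint U_j\}_{i\ne j}$, taken modulo overall rotation, is an equivalence invariant.

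The execution then runs: (1) for $B_2$, every $U_i^\adjoint U_j$ is again (a phase times) a Heisenberg–Weyl operator $P_{ab}$ with $(a,b) \ne (0,0)$, whose spectrum is, up to a global phase, $\{1,\omega_d,\ldots,\omega_d^{d-1}\}$ repeated $\gcd$-many times — in particular the spectrum takes $d/\gcd(a,b,d)$ distinct values; for a primitive choice it takes all $d$ values; (2) for $B_1$, a product of two distinct tensor-Pauli strings is again $\pm 1$ or $\pm i$ times a tensor-Pauli string, hence has at most $2$ distinct eigenvalues up to a global phase; (3) since $d = 2^k > 2$ gives $d \ge 4 > 2$, the invariant ``maximum number of distinct eigenvalues of $U_i^\adjoint U_j$ over $i \ne j$, counted after dividing out a global phase'' equals $d$ for $B_2$ but $2$ for $B_1$, a contradiction. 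I would state and prove one short lemma making precise that this invariant is preserved under Eq.~(\ref{eq-freedom}) — that is the only slightly delicate point, needing the observation that $\overline{\alpha_i}\alpha_j$ is a single scalar so it rotates the whole spectrum uniformly and does not change the count of distinct values.

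\medskip
\noindent\textbf{Main obstacle.} The crux is handling the two-sided, element-dependent freedom $U_i \mapsto \alpha_i V U_i W$: individual operators are not conjugated in a similarity-preserving way, so I cannot simply cite a spectral invariant of the $U_i$ themselves. The resolution — passing to the products $U_i^\adjoint U_j$, on which the action becomes a \emph{single} similarity (by $W$) together with a \emph{single} global phase $\overline{\alpha_i}\alpha_j$ — is the key idea, and verifying that this is genuinely invariant (and that the distinct-eigenvalue count survives both the similarity and the global-phase rotation) is where the care is needed. Everything else is a direct computation with Heisenberg–Weyl operators and tensor products of single-qubit Paulis.
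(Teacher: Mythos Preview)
Your proposal is correct and takes essentially the same approach as the paper: the paper also observes that under $U_i \mapsto \alpha_i V U_i W$ the product $U_i^\adjoint U_j$ is sent to a phase times $W^\adjoint U_i^\adjoint U_j W$, and then contrasts the (at most two) distinct eigenvalues of any $P^\adjoint Q$ in $B_1$ with the $d$ distinct eigenvalues of $\PauliZ_d$ in $B_2$. The only cosmetic difference is that the paper singles out the specific pair mapping to $(\id,\PauliZ_d)$ rather than phrasing the distinct-eigenvalue count as a general invariant of the basis.
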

\begin{proof}
The intuition behind the statement is that tensor products of the 
two-dimensional Pauli operators in~$B_1$ all have at most two distinct eigenvalues 
(either~$1$, or~$\pm 1$, or~$\pm \complexi$), whereas some 
of the operators in the clock and shift basis have~$d$ distinct
complex eigenvalues. Due to the freedom available in generating 
equivalent bases, we need additional arguments to formalize this
intuition.

Suppose that~$B_1$ and~$B_2$ \emph{are\/} equivalent and consider unitary 
operators~$V, W \in \unitary( \complex^d )$ which show their equivalence.
Consider the operator~$P \in B_1$ that is mapped to the identity
in~$B_2$ under the equivalence. Let~$\alpha$ be a complex number of unit 
modulus such that~$\alpha V P W = \id$. Then~$V = \alpha^*
W^\adjoint P^\adjoint$. 

Suppose the operator~$Q \in B_1$ is mapped to the clock
operator~$\PauliZ_d \in B_2$, and that~$\PauliZ_d = \beta V Q W$ for some
complex number~$\beta$. Then~$\PauliZ_d = \beta \alpha^* W^\adjoint P^\adjoint
Q W$. The operator on the right hand side has at most two eigenvalues 
(either~$\beta \alpha^*$, or~$\pm \beta \alpha^*$, or~$\pm \complexi \beta \alpha^*$)
as~$P^\adjoint Q$ has eigenvalues~$1$, or~$\pm 1$, or~$\pm \complexi$. 
However, the clock operator~$\PauliZ_d$ has~$d$ distinct eigenvalues, 
the~$d$th complex roots of unity. Since~$d \ge 4$, we get a contradiction, 
and we conclude that~$B_1$ and~$B_2$ are not equivalent.
\end{proof}
\suppress{
A different construction for even dimensions was shown to us by Vern
Paulsen in personal communication.
}

It is then natural to wonder if there is a unique orthogonal unitary
basis in \emph{prime\/} dimensions, up to the equivalence defined above.
In \Cref{sec-oub-construction} we show that even this does not hold, by giving an explicit construction
of a basis in any dimension~$d \ge 5$ that is not equivalent to the clock 
and shift basis.

After our discovery of non-equivalent bases, we learned that the question 
of uniqueness has been studied before by Vollbrecht
and Werner~\cite{VW00-Pauli-operators}. They prove the uniqueness of the
basis consisting of the Pauli operators in dimension two, and state that
the problem of characterizing orthogonal unitary bases in dimensions
larger than two is open. They also give a construction of
``shift-and-multiply'' bases from a collection of~$d$ complex Hadamard
matrices of dimension~$d \times d$ and a~$d \times d$ Latin square. This
construction and non-equivalent bases are discussed in more detail by 
Werner in subsequent work~\cite{Werner01-teleportation-dense-coding}, 
although the notion of equivalence there does not include multiplication by
phases (complex numbers of unit modulus). Werner states without
proof that the existence of non-equivalent bases in dimension at least five
follows from the existence of non-equivalent Hadamard matrices or 
non-equivalent Latin squares, even when the dimension is prime. In 
dimension three, Werner describes how we may construct non-equivalent bases, 
but does not explicitly present them. We present a concrete 
instance of this construction in \Cref{prop-dim-3}. Altogether, we have the following result.

\begin{theorem}
\label{thm-ineq-oub}
For every dimension~$d \ge 3$, there are orthogonal unitary bases that
are not equivalent to the clock and shift basis.
\end{theorem}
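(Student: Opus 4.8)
\textbf{Proof plan for Theorem~\ref{thm-ineq-oub}.} The plan is to split into two regimes: dimensions $d \geq 4$ and the single case $d = 3$. For $d \geq 4$ I would give an explicit ``shift-and-multiply''-type basis obtained by modifying the clock and shift basis, and distinguish it from the clock and shift basis by a spectral/algebraic invariant that is preserved under the equivalence relation of \Cref{eq-freedom}. The key observation, already flagged in the introduction, is that the powers $\{\PauliX_d^i : 0 \le i < d\}$ of the shift operator correspond to a partition of the edge set of the complete bipartite graph $K_{d,d}$ into $d$ disjoint perfect matchings (the matching associated to $\PauliX_d^i$ connects $k$ to $k+i \bmod d$). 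Replacing this decomposition of $K_{d,d}$ by another, cleverly chosen decomposition into $d$ perfect matchings — and then multiplying the rows appropriately by roots of unity so that orthogonality in the Hilbert--Schmidt inner product is retained — yields a new orthogonal unitary basis $B_1$. The first step is therefore to write down this alternative matching decomposition concretely (for instance, keeping most matchings as cyclic shifts but ``twisting'' two of them), verify it is a valid edge-partition of $K_{d,d}$, choose the accompanying phases, and check condition \Cref{eq-orth} directly.

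The second step for $d \ge 4$ is the non-equivalence argument, which is where the real work lies. Suppose $B_1$ were equivalent to the clock and shift basis $B_2$ via $\alpha_i, V, W$ as in \Cref{eq-freedom}. As in the proof of \Cref{prop-2tok-dim}, I would normalize: let $P \in B_1$ be the element sent to $\id$, so that $V = \alpha^* W^\adjoint P^\adjoint$, and hence every basis element $Q \in B_1$ is sent to $\beta\, W^\adjoint (P^\adjoint Q)\, W$ for some unit phase $\beta$. Thus the multiset of operators $\{ P^\adjoint Q : Q \in B_1 \}$ must agree, up to unitary conjugation and individual unit-phase scalars, with the clock and shift basis $B_2$. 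The strategy is to compare \emph{spectra}: conjugation by $W$ does not change eigenvalues, and multiplying by a phase $\beta$ just rotates the whole spectrum, so for each $Q$ the eigenvalue multiset of $P^\adjoint Q$ must equal $\beta^{-1}$ times the eigenvalue multiset of some Heisenberg--Weyl operator $P_{kl}$. The Heisenberg--Weyl operators $P_{kl}$ with $\gcd$ data making them order-$d$ have spectrum equal to \emph{all} $d$-th roots of unity (each once), and this is rigid under rotation only in a limited way; on the other hand, the products $P^\adjoint Q$ coming from the twisted construction can be arranged so that their spectra — taking into account the phases introduced in step one — cannot be simultaneously matched (under a single global rotation per element is fine, but the combinatorics of which roots of unity appear, with multiplicity, will not line up). Concretely I would isolate one or two elements $Q \in B_1$ for which $P^\adjoint Q$ is block-structured (a permutation-like matrix supported on the ``twisted'' matchings) and compute that its characteristic polynomial is, say, a product of cyclotomic-type factors of the wrong degrees to be a rotation of $x^d - c$; since every operator in $B_2$ other than multiples of $\id$ is conjugate to a product of a clock power and a shift power whose spectra are completely understood, this yields the contradiction.

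For $d = 3$ the spectral trick is too coarse (all the relevant operators there have $3$ distinct eigenvalues), so I would instead follow Werner's construction and use a \emph{group-theoretic} invariant, as hinted in the introduction: the clock and shift basis forms a ``commutative projective group'' — i.e., modulo phases the basis elements form an abelian group under multiplication — and this property is preserved under the equivalence \Cref{eq-freedom}. The plan is to exhibit an explicit orthogonal unitary basis in dimension $3$ (built from a nontrivial complex Hadamard matrix of order $3$ together with the $3 \times 3$ circulant Latin square, or a small ``shift-and-multiply'' variant thereof) and show that the set of its elements, considered modulo scalar phases, is \emph{not} closed under multiplication, hence not a projective group. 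Since any basis equivalent to the clock and shift basis must be a commutative projective group, this establishes non-equivalence. The step I expect to be the main obstacle is the non-equivalence proof for $d \ge 4$: the equivalence relation allows independent phases $\alpha_i$ on each basis element plus a two-sided unitary conjugation, so one must pin down an invariant robust to all of that simultaneously, and carefully track how the phases chosen in the construction interact with the freedom to rotate each spectrum by an arbitrary $\beta_i$ — getting the bookkeeping of roots of unity exactly right (so that no assignment of the $\beta_i$ can reconcile the two bases) is the delicate part.
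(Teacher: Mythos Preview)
Your plan is essentially the paper's: a matching-based variant of the clock and shift basis plus a spectral obstruction for large~$d$, and the commutative-projective-group invariant for~$d=3$. Two points are worth flagging.

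First, a genuine gap: you fold~$d=4$ into the~$d\ge4$ case, but the matching-plus-spectrum argument does not go through there. Any perfect matching of~$K_{4,4}$ disjoint from the identity matching is a derangement of four points, hence has cycle type~$(4)$ or~$(2,2)$; in either case every cycle length divides~$4$, so the eigenvalues of the ``twisted'' permutation~$P_1$ are already $4$th (or $2$nd) roots of unity---precisely the shape allowed by Lemma~\ref{lem-Pauli-eigenvalues}. The paper therefore treats~$d=4$ separately, via Proposition~\ref{prop-2tok-dim} (the tensor Pauli basis, whose elements all have at most two distinct eigenvalues, versus~$\PauliZ_4$ with four). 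For~$d\ge5$ your plan matches the paper exactly: Lemma~\ref{lem-divisors} supplies a~$k\in[2,d-2]$ with~$k\nmid d$, one takes~$P_1$ to contain a~$k$-cycle and a~$(d-k)$-cycle, and the eigenvalue~$\omega_k$ of~$P_1$ cannot equal~$\gamma\,\omega_d^{\,l}$ for any~$\gamma,l$.

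Second, two minor simplifications relative to your sketch. No extra row-phases are needed in the construction: the basis is just~$\{P_i\,\PauliZ_d^{\,j}\}$, and orthogonality follows because disjoint matchings make the diagonal of~$P_i^{-1}P_k$ vanish when~$i\ne k$, while the~$\PauliZ_d^{\,j}$ factor handles~$i=k$. And for~$d=3$ the paper tests \emph{commutativity} up to phase rather than closure: after your normalization, equivalence to the clock and shift basis would force~$\{V U_{ij} V^\adjoint\}$ to commute projectively, but for the Werner-style perturbation~$U_{i1}=\PauliX_3\PauliZ_3^{\,i}M$ with~$M=\beta\ketbra{0}{0}+\ketbra{1}{1}+\ketbra{2}{2}$, $\beta\ne1$, one checks directly that~$U_{01}U_{02}\ne\gamma\,U_{02}U_{01}$ for any unit~$\gamma$.
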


The theorem implies that for any~$d \ge 3$, there are non-equivalent 
superdense coding protocols for transmitting~$d^2$ messages, even when
no ancilla is used in the encoding process, and there is no error in 
decoding.

Orthogonal unitary bases have also been studied in the context of quantum
error-correction under the name ``unitary error bases'' 
(see, e.g., Ref.~\cite{MV16-unitary-error-bases} and
the references therein). In addition to the shift-and-multiply
construction, several other methods such as the ``Hadamard
method'' and the ``algebraic method'' have been proposed for their
construction. The ``quantum shift-and-multiply'' method due to Musto 
and Vicary~\cite{MV16-unitary-error-bases} simultaneously
generalizes the shift-and-multiply and Hadamard methods. Musto and
Vicary give examples of orthogonal unitary bases resulting from
this method that are not equivalent to those derived from any of the other
methods mentioned above. However, they give explicit examples only in 
dimension~4. As far as we can tell, earlier explicit constructions, for
example those due to Klappenecker and
R{\"o}tteler~\cite{KR03-unitary-error-bases}, were also for a few small
dimensions.

\subsection{Some useful properties}
\label{sec-oub-prelims}

Here we present two properties that are used in an explicit construction 
leading to Theorem~\ref{thm-ineq-oub}.
The following property of the eigenvalues of the clock and shift 
operators helps in proving non-equivalence to another basis. Recall 
that~$\omega_d \eqdef \exp \left( \tfrac{2 \pi \complexi}{d} \right)$ is
a primitive~$d$th root of unity.
\begin{lemma}
\label{lem-Pauli-eigenvalues}
Let~$d > 1$ be an integer, and let~$a, b \in \set{ 0, 1, \dotsc, d -
1}$.  The eigenvalues of the operator~$\PauliX_d^a \, \PauliZ_d^b$ are 
all of the form
\[
    \omega_d^l \cdot \exp \!\left( \frac{ab (d - 1) \pi \complexi}{d} \right)  \enspace,
\]
for some~$l \in \set{ 0, 1, \dotsc, d - 1}$.
\end{lemma}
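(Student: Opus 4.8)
The plan is to compute the characteristic polynomial of $\PauliX_d^a \PauliZ_d^b$ directly, or equivalently to identify a nice structure that lets us read off the eigenvalues. First I would reduce to understanding $\PauliX_d^a$. Since $\PauliX_d$ is the cyclic shift $\ket{k} \mapsto \ket{k+1 \bmod d}$, the operator $\PauliX_d^a$ is the shift by $a$, and its orbit structure on $\{0,1,\dots,d-1\}$ is determined by $g \eqdef \gcd(a,d)$: the index set breaks into $g$ cycles each of length $d/g$. Correspondingly $\PauliZ_d^b = \sum_k \omega_d^{bk}\ketbra{k}{k}$. So $M \eqdef \PauliX_d^a \PauliZ_d^b$ acts on each of these $g$ invariant cyclic subspaces as a ``weighted cyclic shift'' of length $m \eqdef d/g$: on the cycle $\{c, c+a, c+2a, \dots\}$ (indices mod $d$) it is a companion-type matrix whose only nonzero entries are the subdiagonal weights $\omega_d^{b(c+ja)}$ for $j = 0,\dots,m-1$ wrapping around.

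Next I would diagonalize each such weighted $m$-cycle. A cyclic shift of length $m$ with weights $w_0,\dots,w_{m-1}$ (product $\Pi \eqdef \prod_j w_j$) has characteristic polynomial $\lambda^m - \Pi$, hence eigenvalues $\Pi^{1/m}\zeta$ over all $m$th roots of unity $\zeta$. So I need the product of the weights around the cycle through $c$: that product is $\prod_{j=0}^{m-1}\omega_d^{b(c+ja)} = \omega_d^{b(mc + a\binom{m}{2})} = \omega_d^{bmc}\,\omega_d^{ab m(m-1)/2}$. Now $\omega_d^{bmc} = \omega_d^{(d/g)bc}$; taking $m$th roots, the factor $\omega_d^{bmc}$ contributes $m$th roots that, combined with the free choice of $\zeta$, just shift which $m$th roots of $\omega_d^{abm(m-1)/2}$ appear — one checks that across all $c$ and all $\zeta$ the set of eigenvalues is contained in $\{\,\mu\zeta' : \zeta'^{\,d}=\text{(appropriate power)}\,\}$. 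The cleaner route: all eigenvalues of $M$ are $d$th (or $m$th) roots of a single scalar $\omega_d^{abm(m-1)/2}$ times a root of unity, and one reconciles the exponents to show every eigenvalue equals $\omega_d^l \cdot \exp(ab(d-1)\pi\complexi/d)$ for some $l$. Concretely, $\exp(ab(d-1)\pi\complexi/d) = \omega_d^{ab(d-1)/2} = \omega_{2d}^{ab(d-1)}$ is a fixed $2d$th root of unity, and I would verify that $M^d$ is a scalar multiple of the identity equal to $\big(\exp(ab(d-1)\pi\complexi/d)\big)^d$ times a $d$th root of unity — actually the slickest argument is: $M^d = (\PauliX_d^a\PauliZ_d^b)^d = \omega_d^{-ab\binom{d}{2}}\PauliX_d^{ad}\PauliZ_d^{bd} = \omega_d^{-abd(d-1)/2}\id = \big(\exp(ab(d-1)\pi\complexi/d)\big)^{-d}\id$, using the commutation relation $\PauliZ_d\PauliX_d = \omega_d\PauliX_d\PauliZ_d$ repeatedly to normal-order. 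Hence every eigenvalue $\lambda$ of $M$ satisfies $\lambda^d = \big(\exp(ab(d-1)\pi\complexi/d)\big)^{-d}$, wait — I must be careful with signs; I would fix the exact power by direct normal-ordering, but the shape $\lambda^d = (\text{fixed }2d\text{th root of unity})$ is what matters, and it forces $\lambda = \omega_d^l\cdot\exp(ab(d-1)\pi\complexi/d)$ for some $l\in\{0,\dots,d-1\}$ once the constant is pinned down.

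So the real computational core is the single identity $(\PauliX_d^a\PauliZ_d^b)^d = \omega_d^{-ab\binom d2}\,\id$, obtained by the commutation relation $\PauliZ_d^b\PauliX_d^a = \omega_d^{ab}\PauliX_d^a\PauliZ_d^b$ together with $\PauliX_d^d = \PauliZ_d^d = \id$; then $\lambda^d = \omega_d^{-ab\binom d2} = \omega_d^{-abd(d-1)/2} = \exp(-ab(d-1)\pi\complexi)= \exp(ab(d-1)\pi\complexi)$ (since $e^{-i\pi t}=e^{i\pi t}$ for integer... no) — here is exactly where I expect the main obstacle: bookkeeping the $\bmod\ 2d$ exponent so the stated form $\exp(ab(d-1)\pi\complexi/d)$ comes out with the correct branch, especially disentangling the parity cases of $d$ and of $ab$. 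I would handle this by writing everything as a $2d$th root of unity $\omega_{2d} = e^{\pi\complexi/d}$, noting $\omega_d = \omega_{2d}^2$, so $\lambda^d = \omega_{2d}^{-abd(d-1)}$ and therefore $\lambda \in \omega_{2d}^{-ab(d-1)}\cdot\{\,d\text{th roots of unity}\,\} = \{\,\omega_d^l\,\omega_{2d}^{-ab(d-1)} : l\}$, and finally observe $\omega_{2d}^{-ab(d-1)} = \omega_{2d}^{ab(d-1)} \cdot \omega_{2d}^{-2ab(d-1)} = \exp(ab(d-1)\pi\complexi/d)\cdot\omega_d^{-ab(d-1)}$, which is again of the form $\omega_d^{l'}\exp(ab(d-1)\pi\complexi/d)$. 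Absorbing the $\omega_d$ factor into the free index $l$ yields precisely the claim. (An alternative, slightly more elementary but more tedious route is the explicit block decomposition into $g$ weighted $m$-cycles described above; I would present the normal-ordering argument as the main proof and mention the block picture only if a reader wants the eigenvectors as well.)
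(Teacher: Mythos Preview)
Your proposal is correct and, after the detour through the orbit decomposition, lands on exactly the paper's argument: use the commutation relation to compute~$(\PauliX_d^a \PauliZ_d^b)^d = \omega_d^{ab d(d-1)/2}\,\id$, whence every eigenvalue is a~$d$th root of this scalar, i.e., of the form~$\omega_d^l \exp(ab(d-1)\pi\complexi/d)$. The block-diagonalization into weighted cycles is unnecessary here (though it would give the eigenvectors), and your sign wobble is harmless since~$\omega_d^{\pm abd(d-1)/2}$ coincide; the paper's proof is just the two-line version of your ``slickest argument.''
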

\begin{proof}
Since~$\PauliX_d \, \PauliZ_d = \omega^*_d \, \PauliZ_d \PauliX_d$, we
have~$\left( \PauliX_d^a \, \PauliZ_d^b \right)^d = \omega_d^{ab d (d - 1) / 2}
\, \PauliX_d^{ ad} \, \PauliZ_d^{ bd} = \omega_d^{ab d (d - 1) / 2} \, \id$.
So the eigenvalues of~$\PauliX_d^a \, \PauliZ_d^b$ are~$d$th complex
roots of~$\omega_d^{ab d (d - 1) / 2}$, and the lemma follows.
\end{proof}

We also use the following simple number-theoretic property in the
construction of new orthogonal unitary bases.
\begin{lemma}
\label{lem-divisors}
Any integer~$d \ge 5$ has at most~$d - 2$ positive integer divisors. 
\end{lemma}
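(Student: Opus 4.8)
\textbf{Proof proposal for Lemma~\ref{lem-divisors}.}

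The plan is to bound the set of positive divisors of~$d$ by exhibiting two integers in~$\set{1, 2, \dotsc, d}$ that cannot divide~$d$ when~$d \ge 5$. Every positive divisor of~$d$ lies in~$\set{1, 2, \dotsc, d}$, so it suffices to show that this set of~$d$ candidates contains at least two non-divisors of~$d$, and that these two are distinct.

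First I would observe that~$d - 1$ does not divide~$d$ for any~$d \ge 3$: since~$d = (d-1) + 1$, we have~$d \equiv 1 \pmod{d-1}$, and~$d - 1 > 1$, so~$d-1 \nmid d$. Next, I would observe that~$d - 2$ does not divide~$d$ for any~$d \ge 5$: since~$d = (d-2) + 2$, we have~$d \equiv 2 \pmod{d-2}$, so~$d - 2 \mid d$ would force~$d - 2 \mid 2$, i.e.\ $d - 2 \in \set{1, 2}$, i.e.\ $d \in \set{3, 4}$, contradicting~$d \ge 5$. Finally, for~$d \ge 5$ the integers~$d - 1$ and~$d - 2$ are distinct and both belong to~$\set{1, 2, \dotsc, d}$. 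Hence the positive divisors of~$d$ form a subset of~$\set{1, 2, \dotsc, d} \setminus \set{d-1, d-2}$, which has exactly~$d - 2$ elements, proving the claim.

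There is no real obstacle here; the only thing to be careful about is the case analysis at the boundary~$d = 5$ (where the bound~$d - 2 = 3$ is tight, realized by~$d$ prime), which is already handled by requiring~$d \ge 5$ in the second observation. As an alternative route, one could instead note that every divisor of~$d$ other than~$d$ itself is at most~$d/2$, giving at most~$\floor{d/2} + 1$ divisors, and then check~$\floor{d/2} + 1 \le d - 2$ for all~$d \ge 5$ (with equality only at~$d = 5$); but the two-non-divisors argument above is cleaner and avoids the floor function.
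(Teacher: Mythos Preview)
Your proof is correct and takes a genuinely different, more elementary route than the paper. The paper writes~$d = p_1^{a_1} \dotsb p_k^{a_k}$ and bounds the divisor count~$(a_1 + 1) \dotsb (a_k + 1)$ by a case analysis on whether~$d$ is prime, a prime power, or has at least two distinct prime factors, using inequalities like~$n + 1 \le q^n$ in each case. Your argument instead avoids the divisor function entirely by exhibiting two explicit non-divisors~$d-1$ and~$d-2$ in~$\set{1,\dotsc,d}$. Your approach is shorter and requires no structural information about~$d$; the paper's approach is more work but would generalize to much stronger asymptotic bounds on the divisor function if those were needed (they are not, here). One minor slip in your parenthetical about the alternative route: the inequality~$\floor{d/2} + 1 \le d - 2$ is an equality at~$d = 6$ as well as~$d = 5$, not only at~$d = 5$; this does not affect your main argument.
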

\begin{proof}
If~$d$ is prime, then it has exactly two positive integer
divisors,~$1,d$, and the lemma holds.

Suppose~$d$ is composite and has
prime factorization~$p_1^{a_1} p_2^{a_2} \dotsb p_k^{a_k}$,
where~$k$ and~$a_1, a_2, \dotsc, a_k$ are positive integers, 
and~$p_1, p_2, \dotsc, p_k$ are distinct prime numbers arranged in
increasing order. The number of positive integer divisors 
of~$d$ equals~$(a_1 + 1) (a_2 + 1) \dotsb (a_k +1)$.

Since~$d$ is composite, either~$k = 1$ and~$a_1 \ge 2$, or~$k \ge 2$.

Suppose~$k = 1$. If~$p_1 \ge 3$, the lemma follows since~$n + 1 \le q^n
- 2$ for all positive integers~$n \ge 2$, for any~$q \ge 3$.
If~$p_1 = 2$, we have~$a_1 \ge 3$ since~$d = p_1^{a_1}
\ge 5$. Since~$n + 1 \le 2^n - 2$ for all integers~$n \ge 3$, the lemma
again follows.

Now suppose~$k \ge 2$. Since~$n + 1 \le q^n$ and~$n + 1 \le r^n - 1$ for
all~$n \ge 1$ whenever~$q \ge 2$ and~$r \ge 3$, the number of
divisors of~$d$ is bounded as
\begin{eqnarray*}
(a_1 + 1) (a_2 + 1) \dotsb (a_k +1)
    & \le & p_1^{a_1} (p_2^{a_2} - 1) \, p_3^{a_3} \dotsb p_k^{a_k} \\
    & \le & p_1^{a_1} p_2^{a_2} p_3^{a_3} \dotsb p_k^{a_k} - p_1^{a_1}
        \\
    & \le & d - 2 \enspace,
\end{eqnarray*}
as claimed.
\end{proof}

\suppress{
\subsection{An explicit construction for $d = 3$}
\label{sec-oub-construction-d3}

Note: The argument below does not take into account the possible permutation of basis elements by the equivalence operation, therefore does not prove non-equivalence.

We first prove \Cref{thm-ineq-oub} for the case of $d = 3$. In what follows, we let $\PauliZ$ and $\PauliX$ denote the Heisenberg-Weyl operators for $d = 3$.  Let $B_1$ denote the standard clock-and-shift basis for $\linear(\C^3)$, i.e., the message $(i,j) \in \{0,1,2\} \times \{0,1,2\}$ is associated with the unitary $\PauliZ^i \PauliX^j$. Define the basis $B_2$ to be identical, except we swap the encoding of the message $(1,0)$ with the encoding of message $(2,0)$. In other words, we swap the $\PauliZ$ with the $\PauliZ^2$ operator, and all other operators are left unchanged.

We now show that basis $B_1$ and $B_2$ are not equivalent. Suppose for contradiction they were. Then there exists complex numbers $\alpha_{ij} \in \unitary(\C)$ and unitaries $V,W \in \unitary(\C^3)$ such that, in particular, satisfy the following equations:
\begin{align*}
	\alpha_{00} \cdot V \cdot \id \cdot W &= \id \\
	\alpha_{10} \cdot V \cdot \PauliZ \cdot W &= \PauliZ^2 \\
	\alpha_{01} \cdot V \cdot \PauliX \cdot W &= \PauliX
\end{align*}
The first equation implies that $W^\adjoint = \alpha_{00} V$. Combined with the observation that $Z^\adjoint = Z^2$, we get from the second equation that
\begin{gather*}
	\alpha_{10} \alpha_{00}^* \cdot V \cdot \PauliZ \cdot V^\adjoint = \PauliZ^\adjoint
\end{gather*}
Thus $V$ is forced to permute the eigenspaces of $\PauliZ$, which are simply the standard basis states $\{ \ket{0}, \ket{1}, \ket{2} \}$. In particular, there exist complex numbers $\beta_j \in \unitary(\C)$ such that
\[
	V \ket{0} = \beta_0 \ket{0} \qquad \qquad V \ket{1} = \beta_1 \ket{2} \qquad \qquad V \ket{2} = \beta_2 \ket{1}\;.
\]
However, the third equation implies that
\[
	\alpha_{01} \alpha_{00}^* \cdot V \cdot \PauliX \cdot V^\adjoint = \PauliX.
\]
But this cannot hold, because $V\PauliX V^\adjoint \ket{0} = \beta_1 \beta_0^* \ket{2}$, whereas $\PauliX \ket{0} = \ket{1}$. The complex phase $\alpha_{01} \alpha_{00}^*$ cannot make both sides equal, so the bases $B_1$ and $B_2$ cannot be equivalent.
}

\subsection{Explicit constructions}
\label{sec-oub-construction}

We now proceed to describe an explicit construction for all dimensions $d \geq 5$. 
The construction we give has the same form as the shift-and-multiply construction 
due to Vollbrecht and Werner~\cite{VW00-Pauli-operators}. In particular, 
the bases we present correspond to the construction with the Fourier transform over the 
cyclic group of order~$d$ as the Hadamard matrix, and certain Latin squares
that are not equivalent to the one generated by~$\PauliX_d$.

We construct bases that are not equivalent to the clock and shift basis by 
introducing a modification. In particular, we replace the
operators generated by~$\PauliX_d$ by another sequence. Note that the
operators~$\PauliX_d^i$ correspond to permutations on~$\integers/ d
\integers$, i.e., they permute the standard basis of~$\complex^d$. Conversely 
any permutation~$P$ on~$\integers/ d \integers$ corresponds to the
operator~$\sum_{a \in \integers/ d \integers} \ketbra{P(a)}{a}$ which
permutes standard basis elements of~$\complex^d$. So this is a bijection.

It is also helpful to view a permutation~$P$ on~$\integers/ d
\integers$ as a perfect matching in the complete bipartite 
graph~$K_{d,d}$, with vertex~$a$ in one part being matched with the
vertex~$P(a)$ in the other. This mapping defines a
bijection between permutations and perfect matchings. The construction
we give relies on these three equivalent views of a permutation, and uses 
the same letter to refer to the corresponding matching and the linear 
operator on~$\complex^d$.

We start with the following observation.
\begin{lemma}
Let~$P_0, P_1, \dotsc, P_{d-1}$ be a sequence of~$d$ \emph{disjoint\/}
matchings in the graph~$K_{d,d}$. Then the matrices
\[
\set{ P_i \PauliZ_d^j : 0 \le i,j < d }
\]
form an orthogonal unitary basis.
\end{lemma}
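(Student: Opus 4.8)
The statement to prove is that for any sequence of $d$ disjoint perfect matchings $P_0, \dotsc, P_{d-1}$ in $K_{d,d}$, viewed as permutation operators on $\complex^d$, the family $\set{ P_i \PauliZ_d^j : 0 \le i,j < d}$ is an orthogonal unitary basis for $\linear(\complex^d)$. There are two things to check: that each $P_i \PauliZ_d^j$ is unitary, and that they are pairwise orthogonal with respect to the Hilbert--Schmidt inner product. Since there are exactly $d^2$ of them and $\dim \linear(\complex^d) = d^2$, pairwise orthogonality (together with each being nonzero) automatically upgrades to being a basis.

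\textbf{Unitarity.} Each $P_i$ is a permutation matrix, hence unitary; $\PauliZ_d$ is diagonal with unit-modulus entries, hence unitary; and products of unitaries are unitary. So $P_i \PauliZ_d^j \in \unitary(\complex^d)$ for all $i, j$.

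\textbf{Orthogonality.} I would compute $\trace\bigl( (P_i \PauliZ_d^j)^\adjoint (P_k \PauliZ_d^l) \bigr) = \trace\bigl( \PauliZ_d^{-j} P_i^\adjoint P_k \PauliZ_d^l \bigr)$. The key structural fact is that $P_i^\adjoint P_k$ is itself a permutation matrix, say corresponding to the permutation $\pi_{ik}$ on $\integers / d\integers$; moreover, because the matchings $P_i$ and $P_k$ are \emph{disjoint} when $i \neq k$, the permutation $\pi_{ik}$ has \emph{no fixed points} for $i \neq k$ (if $\pi_{ik}(a) = a$ then edge $(a, P_i(a)) = (a, P_k(a))$ would lie in both matchings), whereas $\pi_{ii} = \mathrm{id}$. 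Writing everything in the standard basis, $\trace\bigl( \PauliZ_d^{-j} \, \pi_{ik} \, \PauliZ_d^l \bigr) = \sum_{a} \bra{a} \PauliZ_d^{-j} \ketbra{\pi_{ik}(a)}{a} \PauliZ_d^l \ket{a}$, and the only terms that survive the trace are those with $\pi_{ik}(a) = a$. Hence if $i \neq k$ the sum is empty and the inner product is $0$. If $i = k$, the expression collapses to $\trace(\PauliZ_d^{l-j}) = \sum_{a=0}^{d-1} \omega_d^{a(l-j)}$, which equals $d$ when $l = j$ and $0$ otherwise (geometric series, standard Pauli $\PauliZ$ orthogonality). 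So in all cases where $(i,j) \neq (k,l)$ the Hilbert--Schmidt inner product vanishes, and the family is orthogonal.

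\textbf{Conclusion and main obstacle.} Having established that the $d^2$ operators $P_i \PauliZ_d^j$ are pairwise orthogonal nonzero elements of the $d^2$-dimensional space $\linear(\complex^d)$, they are linearly independent and hence form an orthogonal unitary basis, as claimed. I do not expect any serious obstacle here — the proof is essentially bookkeeping. The one point that deserves care is the observation that disjointness of the matchings is exactly what forces $P_i^\adjoint P_k$ to be fixed-point-free for $i \neq k$; this is the only place the hypothesis is used, and it is what makes the off-diagonal ($i \neq k$) inner products vanish regardless of $j$ and $l$. Everything else reduces to the standard fact that distinct powers of $\PauliZ_d$ are Hilbert--Schmidt orthogonal.
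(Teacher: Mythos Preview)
Your proposal is correct and follows essentially the same approach as the paper: both verify unitarity trivially and compute the Hilbert--Schmidt inner product $\trace(\PauliZ_d^{-j} P_i^{-1} P_k \PauliZ_d^l)$ in the standard basis, using disjointness of the matchings to kill the $i \neq k$ terms and the root-of-unity sum to handle $i = k$, $j \neq l$. The only cosmetic difference is that the paper writes the surviving factor as $\braket{P_i(m)}{P_k(m)}$ directly, whereas you phrase it via the fixed-point-freeness of the composite permutation $\pi_{ik} = P_i^{-1} P_k$; these are the same observation.
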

\begin{proof}
Since~$P_i$ permutes the standard basis vectors of~$\complex^d$, it is a
unitary operator. Therefore~$ P_i \PauliZ_d^j$ is also unitary.

Now consider the inner product of~$ P_i \PauliZ_d^j $ and~$P_k \PauliZ_d^l $ for
two pairs~$(i,j)$ and~$ (k,l)$.
We have
\[
\trace( \PauliZ_d^{-j} P_i^{-1} P_k \PauliZ_d^l ) 
    \quad = \quad \sum_{m = 0}^{d - 1} \omega_d^{(l-j)m} \braket{ P_i(m) }{ P_k(m) } \enspace.
\]
If~$i \neq k$, the inner product is~$0$ since for any~$m$, 
the disjoint matchings~$P_i$ and~$P_k$ match the vertex~$m$ to distinct 
vertices. If~$i = k$, but~$l \neq j$, the 
inner product is again~$0$ as~$\omega_d$ is a~$d$th root of unity.
\end{proof}

We show that to derive non-equivalent bases, it suffices to 
have two of the matchings satisfy simple properties.
\begin{lemma}
\label{lemma-new-oub}
Let~$d \ge 2$, and~$P_0, P_1, \dotsc, P_{d-1}$ be a sequence of~$d$ \emph{disjoint\/}
matchings in~$K_{d,d}$ such that
\begin{enumerate}
\item
$P_0$ is the identity permutation, i.e., matches vertex~$i$ in one part
to vertex~$i$ in the other part; and
\item
the permutation~$P_1$ has a cycle of length~$k$ such that~$k$ does not
divide~$d$.
\end{enumerate}
Then the basis~$B \coloneqq \set{ P_i \PauliZ_d^j : 0 \le i,j < d }$
is not equivalent to the clock and shift basis.
\end{lemma}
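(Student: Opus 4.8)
The plan is to suppose, for contradiction, that the basis $B = \set{P_i \PauliZ_d^j}$ is equivalent to the clock and shift basis $\set{\PauliX_d^a \PauliZ_d^b}$, and then extract a contradiction from the two structural hypotheses on $P_0$ and $P_1$. As in the proof of \Cref{prop-2tok-dim}, an equivalence gives unitary operators $V, W$ and unit-modulus phases such that every element of $B$ is, up to a phase, of the form $V(\PauliX_d^a \PauliZ_d^b)W$. First I would use hypothesis (1): the operator $P_0 \PauliZ_d^0 = \id$ lies in $B$, so some clock-shift operator $\PauliX_d^a \PauliZ_d^b$ is mapped (up to a phase) to $\id$, i.e.\ $\id = \alpha\, V \PauliX_d^a \PauliZ_d^b W$, whence $V = \alpha^* W^\adjoint (\PauliX_d^a \PauliZ_d^b)^\adjoint$. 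Substituting this back, every element of $B$ equals, up to a phase, $W^\adjoint (\PauliX_d^a \PauliZ_d^b)^\adjoint (\PauliX_d^{a'} \PauliZ_d^{b'}) W$ — that is, up to a phase and a conjugation by the \emph{single} unitary $W$, every element of $B$ is a clock-shift operator. In particular the \emph{set of eigenvalues} of each $P_i \PauliZ_d^j$ must, up to multiplication by a global phase, coincide with the eigenvalue set of some $\PauliX_d^{a''} \PauliZ_d^{b''}$.

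Next I would bring in \Cref{lem-Pauli-eigenvalues}: the eigenvalues of any clock-shift operator $\PauliX_d^{a}\PauliZ_d^{b}$ are $\set{\omega_d^l \cdot \zeta : 0 \le l < d}$ for a fixed root-of-unity-type factor $\zeta$ — so, up to a global phase, the eigenvalue multiset of every clock-shift operator is exactly the full set of $d$-th roots of unity $\set{1, \omega_d, \dots, \omega_d^{d-1}}$, each with multiplicity one. Therefore, for the equivalence to exist, the eigenvalue multiset of \emph{every} $P_i \PauliZ_d^j$ must also be a global-phase rotate of $\set{\omega_d^l : 0 \le l < d}$. Now I apply hypothesis (2) to the specific operator $P_1 = P_1 \PauliZ_d^0$, which is a permutation matrix. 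The eigenvalues of a permutation matrix are determined by its cycle structure: a cycle of length $k$ contributes the $k$-th roots of unity. If $P_1$ has a cycle of length $k$ with $k \nmid d$, then $\omega_k = e^{2\pi \complexi/k}$ is among its eigenvalues, and I would argue that the full multiset of eigenvalues of $P_1$ cannot be a global-phase rotate of the $d$-th roots of unity. The cleanest way: a permutation matrix always has $1$ as an eigenvalue (from its action being a genuine permutation with at least one cycle — every cycle, including fixed points, contributes $1$), so if its eigenvalue set were $\gamma \cdot \set{\omega_d^l}$ for a phase $\gamma$, then $\gamma$ itself must be a $d$-th root of unity, forcing the eigenvalue set to be exactly $\set{\omega_d^l : 0 \le l < d}$; but then $\omega_k$ would have to equal some $\omega_d^l$, i.e.\ $e^{2\pi\complexi/k} = e^{2\pi\complexi l/d}$, which forces $d/k = d/(kl) \cdot l \in \integers$ after clearing — more simply, $\omega_k \in \set{\omega_d^l}$ iff $k \mid d$. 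This contradicts $k \nmid d$.

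The step I expect to require the most care is pinning down the eigenvalue argument for $P_1$ rigorously — in particular handling the global phase factor $\gamma$ correctly and being precise about multiplicities. One must note that $P_1$, being a permutation of $d$ elements, has $d$ eigenvalues counted with multiplicity, so the comparison is between two multisets of size $d$; the $d$-th-roots-of-unity multiset is \emph{multiplicity-free}, which additionally forces $P_1$ to be a single $d$-cycle once we know $\gamma = 1$ — but we only need the weaker conclusion that $\omega_k$ appears yet is not a $d$-th root of unity. A subtlety is that $P_1 \PauliZ_d^j$ for $j \neq 0$ is not a permutation matrix, so I deliberately use only $j = 0$; the hypotheses were chosen precisely so that the two ``diagnostic'' operators $P_0 = \id$ and $P_1$ are both among the $P_i \PauliZ_d^0$ and have easily computed spectra. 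I would also double check the edge case $k = d$: then $k \mid d$ trivially and the hypothesis is vacuous, consistent with $\PauliX_d$ itself (a single $d$-cycle) giving the clock-and-shift basis. Assembling these pieces yields the contradiction and proves the lemma.
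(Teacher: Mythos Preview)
Your approach is essentially identical to the paper's: use $P_0 = \id$ to reduce the equivalence to a similarity (conjugation by a single unitary $W$) up to a phase, then compare the spectrum of $P_1$ with that of a clock-shift operator via \Cref{lem-Pauli-eigenvalues}, using that $P_1$ has eigenvalues $1$ and $\omega_k$ while every clock-shift operator has eigenvalues contained in a single coset $\gamma\cdot\{\omega_d^l : 0 \le l < d\}$. One small correction: \Cref{lem-Pauli-eigenvalues} only says the eigenvalues of $\PauliX_d^a\PauliZ_d^b$ lie \emph{among} $\{\zeta\,\omega_d^l\}$, not that they are exactly this multiplicity-free set (e.g., $\id = \PauliX_d^0\PauliZ_d^0$ has all eigenvalues equal to $1$); fortunately your argument only uses the containment, so it goes through unchanged once you state the weaker claim.
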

\begin{proof}
The intuition here is the following. The operator corresponding to 
the permutation~$P_1$ has the~$k$ distinct~$k$th roots of unity as
eigenvalues. In particular, the eigenvalues include~$1$ and~$\omega_k$.
On the other hand, the eigenvalues of any operator in the clock and 
shift basis are of the form~$\gamma \omega_d^l$ for some integer~$l$, 
and a fixed unit complex number~$\gamma \in \unitary(\complex)$
depending only on the operator. Since~$k$ does not divide~$d$, the 
operator~$P_1$ does not belong to the clock and shift basis.

Formally,
suppose that~$B$ is equivalent to the clock and shift basis, and the
equivalence is given by unitary operators~$U,W$. Suppose that the
identity operator~$P_0 \in B$ is mapped to~$\PauliX_d^i \, \PauliZ_d^j$ 
and~$P_1$ is mapped to~$\PauliX_d^k \, \PauliZ_d^l$ under this 
equivalence. That is,~$\PauliX_d^i \, \PauliZ_d^j = \alpha V P_0 W 
= \alpha V W$ and~$\PauliX_d^k
\, \PauliZ_d^l = \beta \, V P_1 W$ for some~$\alpha, \beta \in
\unitary(\complex)$.

From the equation for~$P_0$ we have~$V = \alpha^* \, \PauliX_d^i
\, \PauliZ_d^j W^\adjoint$, so that~$ \PauliX_d^k \, \PauliZ_d^l = 
\beta \alpha^* \, \PauliX_d^i \, \PauliZ_d^j W^\adjoint P_1 W$.
Equivalently, we have
\begin{equation}
\label{eq-contra}
\alpha \beta^* \omega_d^m \, \PauliX_d^{k - i} \, \PauliZ_d^{l - j} 
    \quad = \quad W^\adjoint P_1 W \enspace,
\end{equation}
where~$m = -(k-i) j$.
By Lemma~\ref{lem-Pauli-eigenvalues}, there is a fixed~$\gamma 
\in \unitary(\complex)$ such that the eigenvalues of the 
operator on the left hand side of Eq.~(\ref{eq-contra}) are of the 
form~$\gamma \omega_d^l$ for some integer~$l$.
On the other hand, the operator on the right hand side of the equation 
is similar to~$P_1$. Since~$P_1$ has eigenvalues~$1$ and~$ \omega_k$, 
we have~$1 = \gamma \omega_d^m$ and~$\omega_k = \gamma \omega_d^n$ for
some integers~$m,n$. Eliminating~$\gamma$, we get~$\omega_k =
\omega_d^{n-m}$. This implies that
\[
    \frac{2 \pi \complexi}{ k}
        \quad = \quad \frac{2 \pi \complexi (n - m) }{ d}
            + 2 \pi \complexi p \enspace,
\]
for some integer~$p$, or equivalently that~$d = (n - m + pd) k$. This is
a contradiction, as~$k$ does not divide~$d$.
\end{proof}

Finally, we prove that matchings as in the hypothesis of
Lemma~\ref{lemma-new-oub} exist.
\begin{lemma}
\label{lemma-matching-existence}
For any integer~$d \ge 5$,
there is a sequence of~$d$ \emph{disjoint\/} matchings~$P_0, P_1,
\dotsc, P_{d-1}$ in~$K_{d,d}$ such that
\begin{enumerate}
\item
$P_0$ is the identity permutation, i.e., matches vertex~$i$ in
one part with vertex~$i$ in the other part; and
\item
the permutation~$P_1$ has a cycle of length~$k$ such that~$k$ does not
divide~$d$.
\end{enumerate}
\end{lemma}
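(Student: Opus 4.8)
The plan is to construct $P_1$ explicitly as a fixed-point-free permutation with the required cycle structure, and then to extend $\set{P_0, P_1}$ to a full family of $d$ pairwise edge-disjoint perfect matchings by a standard graph-theoretic argument.

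First I would isolate the arithmetic content: there is an integer $k$ with $2 \le k \le d - 2$ that does not divide $d$. This is where Lemma~\ref{lem-divisors} is used. Since $d \ge 5$ has at most $d - 2$ positive divisors, and two of them (namely $1$ and $d$) lie outside the $(d-2)$-element set $\set{2, 3, \dotsc, d - 1}$, at most $d - 4$ divisors of $d$ can lie inside it; hence at least two elements of $\set{2, \dotsc, d-1}$ fail to divide $d$. As $d - 1$ is coprime to $d$ and larger than $1$, it is one such non-divisor, so discarding it still leaves a non-divisor $k \in \set{2, \dotsc, d - 2}$. Note that then $2 \le k$ and $d - k \ge 2$.

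With such a $k$ in hand, let $P_0$ be the identity permutation and let $P_1$ be the permutation of $\integers/d\integers$ that acts as a single $k$-cycle on $\set{0, 1, \dotsc, k-1}$ and as a single $(d-k)$-cycle on $\set{k, k+1, \dotsc, d-1}$. Both blocks have size at least two, so both are genuine cycles and $P_1$ has no fixed point; equivalently, the perfect matching $P_1 = \set{(i, P_1(i)) : i}$ is edge-disjoint from $P_0 = \set{(i,i) : i}$. Since $P_1$ has a cycle of length $k$ and $k$ does not divide $d$, the pair $(P_0, P_1)$ already meets conditions~1 and~2 of the lemma. It remains to fill out the family. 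Delete the edges of $P_0$ and $P_1$ from $K_{d,d}$; since these are edge-disjoint perfect matchings, the result is a $(d-2)$-regular bipartite graph. A $t$-regular bipartite graph decomposes into $t$ pairwise edge-disjoint perfect matchings (repeatedly extract a perfect matching using Hall's theorem and induct on the degree; this is K\"onig's edge-coloring theorem), so taking $t = d - 2$ yields matchings $P_2, \dotsc, P_{d-1}$, pairwise edge-disjoint and disjoint from $P_0$ and $P_1$. Then $P_0, P_1, \dotsc, P_{d-1}$ are $d$ disjoint perfect matchings of $K_{d,d}$ satisfying conditions~1 and~2.

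I expect the one genuinely delicate point to be the counting in the arithmetic step --- specifically that the forced non-divisor can be taken to be at most $d - 2$, not merely at most $d - 1$. This matters because a permutation with a cycle of length $d - 1$ necessarily fixes the remaining point, which would make $P_1$ share an edge with $P_0$ and violate disjointness. Everything else (the explicit cycle structure and the completion of two disjoint matchings to a $1$-factorization of $K_{d,d}$) is standard.
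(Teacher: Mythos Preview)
Your proof is correct and follows essentially the same approach as the paper: choose~$k \in \{2,\dotsc,d-2\}$ not dividing~$d$ via Lemma~\ref{lem-divisors}, set~$P_1 = (0,1,\dotsc,k-1)(k,\dotsc,d-1)$, and complete the pair~$\{P_0,P_1\}$ to a $1$-factorization of~$K_{d,d}$ by decomposing the residual $(d-2)$-regular bipartite graph via Hall's theorem. You are more explicit than the paper in justifying that the non-divisor can be taken~$\le d-2$ (and in flagging why this matters for disjointness from~$P_0$), but the argument is otherwise identical.
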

\begin{proof}
By Lemma~\ref{lem-divisors}, for any~$d \ge 5$, there is an integer~$k
\in [2, d - 2]$ that does not divide~$d$.
\comment{
Note that~$1,d$ are divisors.
Moreover~$d - 1$ is not a divisor of~$d$ for~$d \ge 3$. If it were, we
would have~$d = (d - 1)k$ with~$k \ge 2$, so~$d = 1 + 1/(k-1) \le 2$.
}

Let~$P_0$ be the identity permutation, and let~$P_1 \coloneqq (0, 1,
\dotsc, k - 1) (k, k + 1, \dotsc, d - 1)$ be a permutation consisting 
of two cycles of length~$k$ and~$d - k$, respectively.
The perfect matchings corresponding to~$P_0$ and~$P_1$ are disjoint,
as~$P_0$ maps each element to itself while~$P_1$ cyclically shifts every
element within each of its two cycles (both of which are of length at 
least two).

Consider the graph~$G$ obtained by deleting the edges in the 
matchings~$P_0$ and~$P_1$ from~$K_{d,d}$. The graph~$G$ is 
a~$(d-2)$-regular bipartite graph. Thus, by the Hall theorem~\cite{Hall35-matching},
$G$ can be decomposed into~$(d - 2)$ disjoint perfect matchings.
\end{proof}

Lemma~\ref{lemma-matching-existence} and Lemma~\ref{lemma-new-oub} 
together imply that for any dimension~$d \ge 5$,
there are multiple non-equivalent orthogonal unitary bases. The same property 
holds for~$d = 4$ due to \Cref{prop-2tok-dim}, and for~$d = 3$ due to 
\Cref{prop-dim-3} below.
This proves Theorem~\ref{thm-ineq-oub}. 

\begin{proposition}
\label{prop-dim-3}
There are orthogonal unitary bases for~$\linear( \complex^3)$ that
are not equivalent to the clock and shift basis.
\end{proposition}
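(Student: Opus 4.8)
The plan is to exhibit one explicit orthogonal unitary basis for~$\linear(\complex^3)$ and then certify its non-equivalence to the clock and shift basis via an equivalence-invariant that detects the ``group-like'' structure of the latter. For the construction I would perturb the Heisenberg--Weyl basis in a single shift layer: keep the six operators~$\set{ \PauliX_3^i \PauliZ_3^j : i \in \set{0,1},\ 0 \le j \le 2 }$ and replace the layer~$\set{ \PauliX_3^2 \PauliZ_3^j }$ by~$\set{ \PauliX_3^2 D \PauliZ_3^j : 0 \le j \le 2 }$, where~$D \coloneqq \diag(1,1,\zeta)$ with~$\zeta$ a unit complex number that is not a root of unity (for concreteness~$\zeta = e^{\complexi}$). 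Orthogonality is a direct computation: within each layer one reduces traces of the form~$\trace(\PauliZ_3^{-j} \PauliX_3^{k} D' \PauliZ_3^{j'})$ using~$D^\adjoint D = \id$ together with the fact that~$\PauliX_3^k$ with~$k \not\equiv 0 \pmod 3$ has zero diagonal (so any monomial matrix supported off the diagonal is traceless); across the two kinds of operators the support patterns force the product to be off-diagonal, hence traceless. This yields an orthogonal unitary basis~$B = \set{ b_m }_{m \in [9]}$ of~$\linear(\complex^3)$, which reduces to the clock and shift basis precisely when~$D$ is a scalar multiple of a power of~$\PauliZ_3$; the claim is that for our choice of~$\zeta$ it is not equivalent to it.

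For the invariant I would use the following property of a unitary basis~$\set{ U_m }$, which is the working form of being a ``commutative projective group'': the set~$\set{ U_m U_{m'}^\adjoint : m, m' }$, taken modulo scalar multiples of the identity, is an abelian subgroup of the projective unitary group~$PU(3)$. (Since this reduced set is finite, being a subgroup is equivalent to being closed under multiplication.) The clock and shift basis has this property: $(\PauliX_3^a \PauliZ_3^b)(\PauliX_3^c \PauliZ_3^d)^\adjoint$ equals, up to a phase, $\PauliX_3^{a-c} \PauliZ_3^{b-d}$, so the reduced set of products is exactly the image of~$\set{ \PauliX_3^e \PauliZ_3^f }$ in~$PU(3)$, a copy of~$\integers/3\integers \times \integers/3\integers$.

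Next I would verify that this property is invariant under the equivalence of Eq.~\eqref{eq-freedom}. If~$U_m' = \alpha_m V U_m W$ with~$V, W \in \unitary(\complex^3)$ and~$\abs{\alpha_m} = 1$, then~$U_m' (U_{m'}')^\adjoint = \alpha_m \overline{\alpha_{m'}}\, V (U_m U_{m'}^\adjoint) V^\adjoint$ since~$W W^\adjoint = \id$; hence the reduced set of products for~$\set{ U_m' }$ is the image of that for~$\set{ U_m }$ under the automorphism~$\mathrm{Ad}_V$ of~$PU(3)$, which carries subgroups to subgroups and preserves commutativity. Consequently, if~$B$ fails to be a commutative projective group in this sense, then~$B$ cannot be equivalent to the clock and shift basis.

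Finally I would show that for~$B$ the reduced set of products~$\set{ b_m b_{m'}^\adjoint }$ is not even closed under multiplication. Grouping the products by type, those among~$\set{ \PauliX_3^i \PauliZ_3^j }$ and those among~$\set{ \PauliX_3^2 D \PauliZ_3^j }$ land (up to a phase) back in~$\set{ \PauliX_3^e \PauliZ_3^f }$, using~$D^\adjoint D = \id$ and~$\PauliX_3^{-1} \PauliZ_3 \PauliX_3 = \omega_3 \PauliZ_3$; the cross products are monomial matrices of the form~$\PauliX_3 D'$ or~$\PauliX_3^2 D'$ with~$D'$ a diagonal unitary whose entries lie in~$\set{1, \zeta, \overline{\zeta}}$ times powers of~$\omega_3$. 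But composing two cross terms, e.g.\ $\bigl(\PauliX_3 \diag(1, \overline{\zeta}, 1)\bigr)\bigl(\PauliX_3 \diag(1, \zeta, 1)\bigr) = \PauliX_3^2 \diag(\overline{\zeta}, \zeta, 1)$, which modulo a scalar is~$\PauliX_3^2 \diag(1, \zeta^2, \zeta)$, produces an operator with a diagonal entry~$\zeta^2$; since~$\zeta$ is not a root of unity, $\zeta^2$ is not of the form~$\omega_3^k$, $\zeta \omega_3^k$, or~$\overline{\zeta}\, \omega_3^k$, so this operator is not a scalar multiple of any element of the product set. Hence the set is not a subgroup of~$PU(3)$, $B$ is not a commutative projective group, and by the invariance established above~$B$ is not equivalent to the clock and shift basis, proving the proposition. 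I expect the main obstacle to be the bookkeeping in this last step: one must enumerate the finitely many product operators precisely and argue cleanly that the operator obtained by composing two cross terms genuinely escapes this list for the chosen~$\zeta$, being careful throughout that ``up to a phase'' is handled correctly and that no accidental coincidence arises among the few families of monomial operators involved.
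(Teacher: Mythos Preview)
Your approach is essentially the paper's: both perturb a single shift layer of the Heisenberg--Weyl basis by a diagonal unitary (the paper uses~$M = \diag(\beta,1,1)$ on the~$\PauliX_3$ layer, you use~$D = \diag(1,1,\zeta)$ on the~$\PauliX_3^2$ layer), and both certify non-equivalence via the ``commutative projective group'' invariant. Your phrasing of the invariant in terms of the product set~$\set{ b_m b_{m'}^\adjoint }$ in~$PU(3)$ is clean, since the~$W$ in the equivalence map cancels directly.

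There is a concrete slip in your final example, however: the second factor~$\PauliX_3\,\diag(1,\zeta,1)$ is \emph{not} in the product set. The~$\PauliX_3$-layer of~$\set{ b_m b_{m'}^\adjoint }$ consists (up to phase) of the operators~$\PauliX_3 \PauliZ_3^k$, $\PauliX_3\,\diag(\omega_3^k,\omega_3^{2k}\bar\zeta,1)$, and~$\PauliX_3\,\diag(\zeta\omega_3^{2k},1,\omega_3^k)$, and a ratio check shows none of these is a scalar multiple of~$\PauliX_3\,\diag(1,\zeta,1)$ unless~$\zeta^2$ is a cube root of unity. The fix is easy: take instead the two genuine cross terms~$b_0 b_6^\adjoint = \PauliX_3\,\diag(1,\bar\zeta,1)$ and~$b_6 b_3^\adjoint = \PauliX_3\,\diag(\zeta,1,1)$. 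Their two products are~$\PauliX_3^2$ and~$\PauliX_3^2\,\diag(1,\bar\zeta,\zeta)$, which differ by a non-scalar factor as soon as~$\zeta \neq \pm 1$; this already shows the product set is not abelian in~$PU(3)$, and you are done. (This is exactly how the paper finishes: it checks directly that two basis elements fail to commute up to a phase, rather than chasing closure.)
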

\begin{proof}
Denote the clock and shift basis by~$B$. Note that~$B$ is a commutative projective group under operator composition, i.e., it is closed under taking products of the operators and the operators commute, all up to some phase (a unit complex number) that may depend on the operators. We construct a basis~$B'$ such that the equivalence of~$B$ and~$B'$ implies that~$B'$ also is a commutative projective group. However, the basis~$B'$ has elements that do not commute even up to a phase, which is a contradiction.

We construct~$B'$ following an idea due to Werner~\cite{Werner01-teleportation-dense-coding}; see the discussion after Proposition~9 in the paper. Let~$ M \eqdef \beta \ketbra{0}{0} + \ketbra{1}{1} + \ketbra{2}{2} $, where~$\beta \in \unitary( \complex)$ is a unit complex number such that~$\beta \neq 1$. Let~$B' \eqdef \set{ U_{ij} : 0 \le i,j \le 2}$, where for any~$i \in \set{0, 1, 2}$
\begin{align*}
U_{ij} \quad \eqdef & \quad 
    \begin{cases}
        \PauliX_3^j \, \PauliZ_3^i & \quad j \in \set{ 0, 2} \enspace, \quad \text{and} \\
        \PauliX_3 \, \PauliZ_3^i M & \quad j = 1 \enspace.
    \end{cases}
\end{align*}
We may verify that this is an orthogonal unitary basis for any choice of~$\beta \in \unitary( \complex)$.

Suppose the basis~$B'$ is equivalent to~$B$, and the equivalence is given by the operators~$V,W \in \unitary( \complex^3)$ and unit complex numbers~$\alpha_{ij} \in \unitary( \complex)$. Consider the element~$\PauliX_3^a \, \PauliZ_3^b $ of~$B$ that corresponds to the operator~$U_{00} \in B'$. We have~$\PauliX_3^a \, \PauliZ_3^b = \alpha_{00} V U_{00} W =  \alpha_{00} V W$. Then~$W = \alpha_{00}^* V^\adjoint \PauliX_3^a \PauliZ_3^b $, and
\[
B \quad = \quad \set{ \alpha_{ij} \, \alpha_{00}^* V U_{ij} V^\adjoint \PauliX_3^a \, \PauliZ_3^b ~:~ 0 \le i,j \le 2} \enspace.
\]
Since~$B$ is closed under right multiplication by~$( \PauliX_3^a \,\PauliZ_3^b )^\adjoint $ up to phases, the set of operators
\[
\set{ V U_{ij} M^\adjoint V^\adjoint : 0 \le i,j \le 2}
\]
is also a commutative projective group, as is the basis~$B'$.

We show next that not all operators in the set~$B'$ commute, even up to a phase. Consider the operators~$U_{01}$ and~$U_{02}$. These operators commute up to a phase if and only if there is a unit complex number~$\gamma$ such that
\[
\begin{array}{rrl}
    & U_{01} U_{02} \quad = \quad & \gamma \, U_{02} U_{01} \\
    \Longleftrightarrow \qquad & \PauliX_3 M \, \PauliX_3^2 \quad = \quad & \gamma \, \PauliX_3^2 \, \PauliX_3 M \\
    \Longleftrightarrow \qquad & \ketbra{0}{0} + \beta \ketbra{1}{1} + \ketbra{2}{2}
        \quad = \quad & \gamma \left( \beta \ketbra{0}{0} + \ketbra{1}{1} + \ketbra{2}{2} \right) \enspace.
\end{array}
\]
This implies that~$\gamma = \beta = 1$. As we chose~$\beta \neq 1$, this is a contradiction, and~$B$ and~$B'$ are not equivalent.
\end{proof} 
\section{Random superdense coding protocols}

In this section we study a random protocol for approximate superdense coding. Its analysis draws heavily on  results in high-dimensional probability. We present these results in Section~\ref{sec-random-matrix-theory} and develop some properties of random entangled vectors in Section~\ref{sec:random_mes}, before proceeding to the analysis in Section~\ref{sec-random-protocol-analysis}. Finally, in Section~\ref{sec-expectation-limit}, we address a subtle issue that we encounter in the analysis.

\subsection{Background from random matrix theory}
\label{sec-random-matrix-theory}

In this section, we present some useful results from random matrix theory.

\begin{definition}[Isotropic vector]
We say a random vector~$\ket{\bxi} \in \complex^n$ is \emph{isotropic\/}
if~$\expct \density{\bxi} = \id$.
\end{definition}

Random variables which have tails that decay as fast as the normal distribution play an important role in high dimensional probability. Let~$S^{n-1}$ denote the set of unit vectors in~$\complex^n$.
\begin{definition}[Sub-gaussian random variables and vectors]
A random variable~$\bx \in \complex$ is \emph{sub-gaussian\/} if there exists a parameter~$\kappa > 0$ such that 
\[
	\Pr ( |\bx| \geq t ) \quad \leq \quad 2 \exp \! \big( -t^2/ \kappa^2 \big)
\]
for all $t \geq 0$. The \emph{sub-gaussian norm\/} of~$\bx$, denoted by $\| \bx \|_{\sg}$, is defined as 
\[
	\| \bx \|_{\sg} \quad \eqdef \quad \inf \set{ t > 0 : \expct \exp ( \Abs{ \bx}^2 / t^2) \leq 2 } \enspace.
\]
A random vector $\ket{\bv} \in \complex^n$ is \emph{sub-gaussian\/} if for all unit vectors $\ket{u} \in S^{n-1}$, the inner product $\braket{ u}{ \bv}$ is sub-gaussian. The sub-gaussian norm of $\ket{\bv}$ is defined as
\[
	\norm{ \bv }_{\sg} \quad \eqdef \quad \sup_{u \in S^{n-1}} \norm{ \braket{ u}{ \bv} }_{\sg} \enspace.
\]
\end{definition}

Sub-gaussian norm can be characterized in multiple ways. The following lemma describes two of them; see~\cite[Proposition~2.5.2, Section~2.5.1]{Vershynin18-high-dim-probability}.
\begin{lemma}%
\label{lem-equiv-sub-gaussian}
There are positive universal constants~$c_1, c_2$ such that for any random variable~$\bx \in \complex$,
\begin{enumerate}
	\item If for some parameter~$\kappa_1 > 0$, 
	\[ 
	    \Pr ( |\bx| \geq t) \quad \leq \quad 2 \exp \! \big( - t^2/ \kappa_1^2 \big)
	\]
	for all~$t \geq 0$, then~$ \bx$ is sub-gaussian and~$\norm{ \bx}_\sg \le c_1 \kappa_1$.
	\item If~$ \bx$ is sub-gaussian with~$ \norm{ \bx}_\sg  \le \kappa_2$ for some parameter~$\kappa_2 > 0$, then
	\[
	    \Pr ( |\bx| \geq t) \quad \leq \quad 2 \exp \! \big( - t^2/ c_2 \kappa_2^2 \big)
	\]
	for all~$t \geq 0$.
\end{enumerate}
\suppress{
The following hold for a sub-gaussian random variable $\bx$ with norm $\norm{ \bx}_{\sg}$:
\begin{enumerate}
	\item $\Pr ( |X| \geq t) \leq 2 \exp( -ct^2/ \norm{X}_{\sg}^2)$ for all $t \geq 0$
	\item If $\expct X = 0$ then $\expct \exp( \lambda X) \leq \exp( C\lambda^2 \norm{X}_{\sg}^2)$ for all $\lambda \in \R$.
\end{enumerate}
}
\end{lemma}

\suppress{
\begin{lemma}[Sum of sub-gaussian random variables]
\label{lem-sums-of-sub-gaussians}
Let $X_1,X_2$ be sub-gaussian random variables with norms $\|X_1\|_{\sg}$ and $\|X_2\|_{\sg}$, respectively. Then $X_1 + X_2$ is sub-gaussian with norm at most $\|X_1\|_{\sg} + \|X_2\|_{\sg}$. 
\end{lemma}

\begin{theorem}[Azuma's inequality for sums of sub-gaussian variables]
\label{thm-azuma}
Let $\{ \cal{F}_i \}$ be any filtration, and let $D_1,\ldots,D_n$ be random variables satisfying
\begin{enumerate}
	\item $D_i$ is $\cal{F}_i$-measurable and $\expct[D_i | \cal{F}_{i-1}] = 0$
	\item $\expct[ \exp(\lambda D_i) \mid \cal{F}_{i-1} ] \leq \exp(\lambda^2 \sigma_i^2/2)$ almost surely.
\end{enumerate}
Then the sum $\sum_i D_i$ is sub-gaussian with norm $\sqrt{\sum_i \sigma_i^2}$.
\end{theorem}

\begin{lemma}[Sums of independent sub-gaussian random variables]
\label{lem-sums-of-indp-sub-gaussians}
Let $X_1,\ldots,X_n$ denote independent, mean zero, sub-gaussian random variables. Then $\sum_i X_i$ is also a sub-gaussian random variable, and
\[
	\norm{ \sum_{i=1}^n X_i }^2_{\sg} \leq C \sum_{i=1}^n \norm{ X_i }^2_{\sg}
\]
for some universal constant $C > 0$.
\end{lemma}

\begin{lemma}[Sub-gaussianity of unit vectors]
\label{lem-sub-gaussian-unit-vector}
	Let $\ket{u}$ be a Haar-random unit vector in $\R^d$. Then $\ket{u}$ is sub-gaussian, with norm at most
	\[
		\norm{\ket{u}}_{\sg} \leq \frac{C}{\sqrt{d}}
	\]
	for some universal constant $C > 0$.
\end{lemma}
}

The following theorem gives a sharp bound on the largest singular value of a class of random matrices; see the text by Verhsynin~\cite{Vershynin18-high-dim-probability} for this and related results. Vershynin states the result for real matrices, but the proof extends to complex matrices in a straightforward manner.
\begin{theorem}[\cite{Vershynin18-high-dim-probability}, Theorem 4.6.1]
\label{thm-sub-gaussian}
Let~$\bA \eqdef \sum_{i = 1}^m \ketbra{i}{\bx_i}$ be a complex~$m \times n$ matrix whose rows $\ket{ \bx_i}$ are independent, mean zero, sub-gaussian isotropic vectors in~$\complex^n$. Then there is a universal constant~$c > 0$ such that for all $t \geq 0$, we have
\[
	\| \bA \| \quad \leq \quad \sqrt{m} + c \kappa^2 ( \sqrt{n} + t)
\]
with probability at least~$1 - 2\exp(-t^2)$, where~$ \kappa \eqdef \max_i \| \bx_i \|_{\sg}$.
\end{theorem}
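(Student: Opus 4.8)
The plan is to carry out the standard $\eps$-net argument for the largest singular value of a random matrix, checking that each ingredient survives the passage from real to complex scalars. Write $S \eqdef \frac{1}{m}\bA^\adjoint \bA = \frac{1}{m}\sum_{i=1}^m \density{\bx_i}$, so that $\norm{\bA}^2 = m\,\norm{S}$. By isotropy, $\expct S = \id$, so it suffices to show that with probability at least $1 - 2\exp(-t^2)$ one has $\norm{S - \id} \le \max(\delta,\delta^2)$ with $\delta \eqdef C\kappa^2\bigl(\sqrt{n/m}+t/\sqrt{m}\bigr)$ for a suitable universal constant $C$ (which produces the constant $c$ in the theorem): given this, $\norm{S}\le 1+\max(\delta,\delta^2)\le(1+\delta)^2$, hence $\norm{\bA}\le \sqrt{m}\,(1+\delta) = \sqrt{m}+C\kappa^2(\sqrt{n}+t)$, which is the claim.

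To bound $\norm{S-\id}$ I would first discretize. Since $S-\id$ is Hermitian, for any $\tfrac14$-net $\mathcal{N}$ of the complex unit sphere $S^{n-1}\subset\complex^n$ one has $\norm{S-\id}\le 2\max_{x\in\mathcal{N}}\bigabs{\bra{x}(S-\id)\ket{x}}$; this, together with the net cardinality, is the only place the real-versus-complex distinction enters, and it costs only a constant, since a $\tfrac14$-net of $S^{n-1}\cong S^{2n-1}_{\reals}$ can be taken with $\abs{\mathcal{N}}\le 9^{2n}$ — the exponent $n$ of the real case is merely replaced by $2n$. For a fixed unit vector $\ket{x}$, the quantity $\bra{x}(S-\id)\ket{x} = \frac1m\sum_{i=1}^m\bigl(\abs{\braket{\bx_i}{x}}^2 - 1\bigr)$ is an average of independent, centered random variables. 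Each scalar $\braket{\bx_i}{x}$ is sub-gaussian with $\norm{\braket{\bx_i}{x}}_{\sg}\le\kappa$ by definition of $\kappa$, so $\abs{\braket{\bx_i}{x}}^2$ is a nonnegative sub-exponential random variable with sub-exponential norm $O(\kappa^2)$, and isotropy gives $\expct\abs{\braket{\bx_i}{x}}^2 = 1$, so the centered variable is sub-exponential with norm $O(\kappa^2)$ as well. Bernstein's inequality for sums of independent sub-exponential random variables then yields a universal $c_0>0$ with
\[
    \Pr\Bigl(\bigabs{\bra{x}(S-\id)\ket{x}}\ge s\Bigr) \;\le\; 2\exp\!\Bigl(-c_0\, m \min\bigl(s^2/\kappa^4,\; s/\kappa^2\bigr)\Bigr)
\]
for every $s\ge 0$.

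Finally I would take a union bound over $\mathcal{N}$, obtaining
\[
    \Pr\bigl(\norm{S-\id}\ge 2s\bigr) \;\le\; 9^{2n}\cdot 2\exp\!\Bigl(-c_0\, m \min\bigl(s^2/\kappa^4,\; s/\kappa^2\bigr)\Bigr),
\]
and choose $s\asymp\kappa^2\max(\delta,\delta^2)$ — equivalently, choose it so that $m\min(s^2/\kappa^4,\, s/\kappa^2)\asymp n + t^2$ — so that the net cardinality $9^{2n}=\exp(O(n))$ is absorbed and the residual exponent is at least $t^2+\ln 2$. Tracking the constants through this choice (splitting the cross terms via $\sqrt{nm}\le n+m$ and $t\sqrt{m}$ analogously) gives exactly $\norm{S-\id}\le\max(\delta,\delta^2)$ with probability $\ge 1-2\exp(-t^2)$, and hence the theorem.

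The work here is bookkeeping rather than conceptual: one must simply confirm that the scalar tools invoked — the $\tfrac14$-net bound for the operator norm of a Hermitian matrix, the implication ``sub-gaussian scalar $\Rightarrow$ sub-exponential square'' with the stated norm control, and Bernstein's inequality — hold verbatim for $\complex$-valued random variables (they do, since $\abs{\braket{\bx_i}{x}}^2$ is a nonnegative real random variable regardless of the ground field), and that doubling the real dimension of the sphere is harmless, as it only inflates a constant inside an exponent that the choice of $s$ already dominates. All of this mirrors~\cite[Chapter~4]{Vershynin18-high-dim-probability}, with the single change $n\mapsto 2n$ in the net estimate; the ``mean zero'' hypothesis in the statement is not even needed for this singular-value bound.
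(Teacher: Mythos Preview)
Your argument is correct and follows the standard $\eps$-net approach from Vershynin's Chapter~4, with the only adaptation being the replacement $n \mapsto 2n$ in the covering number of the sphere when passing from $\reals$ to $\complex$; this is exactly the ``straightforward'' extension the paper alludes to. Note, however, that the paper does not give its own proof of this statement: it is quoted as a known result from~\cite{Vershynin18-high-dim-probability} with the one-line remark that ``Vershynin states the result for real matrices, but the proof extends to complex matrices in a straightforward manner.'' Your proposal is therefore not competing with a proof in the paper but rather supplying the details behind that remark, and it does so faithfully.
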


Let~$\| \cdot \|_2$ denote the Hilbert-Schmidt norm on~$\linear( \complex^d)$:
\[
    \| A \|_2 \quad \eqdef \quad \sqrt{ \Tr(A^\adjoint A)} \enspace.
\]
This norm induces the following~$\ell_2$-sum metric on~$ \big( \unitary( \complex^d) \big)^m$:
\[
    \norm{ (U_1, U_2, \dotsc, U_m) - (V_1, V_2, \dotsc, V_m) }_2 
        \quad \eqdef \quad \left( \sum_{i = 1}^m \norm{ U_i - V_i }_2^2 \right)^{1/2} \enspace.
\]
Let~$f : \big( \unitary( \complex^d) \big)^m \rightarrow \reals$ be a continuous function.
We say~$f$ is $\kappa$-Lipschitz with respect to the~$\ell_2$-sum of Hilbert-Schmidt metrics if for all~$(U_i), (V_i) \in \big( \unitary(\complex^d) \big)^m $,
we have
\[
\size{ f(U_1, U_2, \dotsc, U_m) - f(V_1, V_2, \dotsc, V_m) } 
    \quad \le \quad \kappa \norm{ (U_1, U_2, \dotsc, U_m) - (V_1, V_2, \dotsc, V_m) }_2 \enspace.
\]
Let~$\bU_i \in \unitary(\complex^d)$, $1 \le i \le m$ be i.i.d.\ Haar-random unitary operators. If~$\kappa$ is sufficiently smaller than the dimension~$d$, with high probability, the random variable~$f(\bU_1, \bU_2, \dotsc, \bU_m)$ is close to its expectation.  This concentration of measure
property is formalized by the following theorem, which is a special case
of Theorem~5.17 in the book on random matrix theory by Meckes~\cite{M19-random-matrix-theory}.
\begin{theorem}[\cite{M19-random-matrix-theory}, Theorem~5.17, page~159]
\label{thm-Haar-concentration}
Let~$ \bU_i \in \unitary(\complex^d) $, $i \in [m]$, be i.i.d.\ random unitary operators chosen according to the Haar measure. Suppose the function~$f : \big( \unitary(\complex^d) \big)^m \rightarrow \reals$ is~$\kappa$-Lipschitz with respect to the~$\ell_2$-sum of Hilbert-Schmidt metrics, with~$\kappa > 0$. 
Then for every positive real number~$t$, we have
\[
\Pr \! \left( f(\bU_1, \bU_2, \dotsc, \bU_m) \ge \varphi + t \right)
    \quad \leq \quad \exp \! \left( - \frac{ (d-2) t^{2} }{ 24 \kappa^2 } \right) \enspace,
\]
where~$\varphi \eqdef \expct f(\bU_1, \bU_2, \dotsc, \bU_m)$.
\end{theorem}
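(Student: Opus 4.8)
The plan is to obtain the statement as a direct instance of the concentration-of-measure machinery for the classical compact groups, i.e., Meckes' Theorem~5.17. The essential single-factor input is that Haar measure on $\unitary(\complex^d)$ satisfies a sub-Gaussian concentration inequality: any function that is $1$-Lipschitz with respect to the Hilbert--Schmidt metric (equivalently, the geodesic metric, since the former is dominated by the latter) concentrates around its mean with variance proxy of order $1/d$. This follows from the Bakry--\'Emery criterion applied to the bi-invariant metric on $\unitary(\complex^d)$, whose Ricci curvature is bounded below; the appearance of $(d-2)$ rather than $d$ is the standard correction for the one-dimensional abelian (determinant) direction, which carries no positive curvature and is handled in Meckes' treatment by fibering $\unitary(\complex^d)$ over the circle and combining the sharp concentration on the $\mathrm{SU}(d)$-type fibers with the trivial behavior along the base.

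Second, I would tensorize. A log-Sobolev (equivalently, sub-Gaussian concentration) inequality for a single factor passes to the $m$-fold product equipped with the $\ell_2$-sum of the factor metrics, with the same constant; this is exactly the metric in the hypothesis of the theorem, and Meckes states Theorem~5.17 for precisely such products of classical compact groups. Applying it to $\big(\unitary(\complex^d)\big)^m$ and a $\kappa$-Lipschitz $f$ --- after rescaling by $\kappa$ to reduce to the $1$-Lipschitz case --- yields a bound of the form $\exp\!\big(-c\,(d-2)\,t^2/\kappa^2\big)$, and tracking the numerical constant through Meckes' normalization of the Hilbert--Schmidt metric together with the Herbst argument fixes $c = 1/24$. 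The inequality is stated one-sided only because that is all that is needed downstream; the two-sided version is immediate by applying it to $-f$.

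The only genuinely delicate point, and the one I would check carefully, is the matching of normalizations between the statement here and the reference: whether Meckes measures the Lipschitz constant against the geodesic distance or the chordal/Hilbert--Schmidt distance, and how her metric on $\unitary(\complex^d)$ is scaled relative to the unnormalized Hilbert--Schmidt metric used in this paper. Because the Hilbert--Schmidt distance is dominated by the geodesic distance, a function that is $\kappa$-Lipschitz for the former is also $\kappa$-Lipschitz for the latter, so no loss is incurred in passing between the two, and the constant $24$ in Theorem~5.17 transfers without change. Everything else --- verifying that $\big(\unitary(\complex^d)\big)^m$ is a legitimate instance of the product groups covered by the reference, and that $\varphi = \expct f$ plays the role of the mean --- is routine.
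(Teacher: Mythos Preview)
The paper does not prove this statement at all: it is quoted verbatim as a special case of Theorem~5.17 from Meckes' monograph~\cite{M19-random-matrix-theory}, with no argument given beyond the citation. So there is no ``paper's own proof'' to compare your proposal against.

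Your sketch is a reasonable high-level account of what lies behind the cited result --- Ricci curvature lower bounds on $\mathrm{SU}(d)$ via Bakry--\'Emery, passage to $\unitary(\complex^d)$ by handling the circle factor, tensorization of the log-Sobolev inequality to the $m$-fold product with the $\ell_2$-sum metric, and the Herbst argument to convert to sub-Gaussian tails --- and that is indeed the architecture of Meckes' proof. But for the purposes of this paper the theorem is a black box, and your outline, while not wrong, is supplying content the authors deliberately delegated to the reference. If anything, the one point worth flagging is the one you yourself identify: matching the metric normalization between this paper's unnormalized Hilbert--Schmidt distance and Meckes' conventions, so that the constant~$24$ and the factor~$(d-2)$ come out as stated.
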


The Mar\v{c}enko–Pastur theorem characterises the spectrum of a wide class of random matrices in the limit of large dimension. We rely on a version of the theorem due to Yaskov~\cite{Yaskov16-MP-short-proof} that applies to matrices whose entries need not all be independent. While Yaskov states the result for \emph{real\/} matrices, the proof extends to \emph{complex\/} matrices with straightforward modifications. We sketch the observations and the modifications which enable this extension after the statement of the theorem.

The columns of the random matrices we consider satisfy a certain asymptotic isotropy condition.
\begin{definition}
\label{def-pseudo-isotropy}
Let~$m(n)$ be a sequence of positive integers such that~$m \to \infty$ as~$n \to \infty$.
Let~$(\ket{\bx_m})$ be a sequence of random vectors with~$\ket{\bx_m} \in \complex^m$. We say that the sequence~$(\ket{\bx_m})$ is pseudo-isotropic if for all sequences of complex matrices~$(A_m )$ with~$A_m \in \complex^{m \times m}$ and with uniformly bounded spectral norm (i.e., $\|A_m \| \leq \kappa$ for all~$m$ for a universal constant~$\kappa$),
\[
    \frac{1}{m} \Paren{ \bra{\bx_m} A_m \ket{\bx_m} - \Tr(A_m) }  \overset{\rP}{\longrightarrow} 0
\]
as~$m \to \infty$.
\end{definition}
Define the \emph{empirical spectral distribution\/}~(ESD) of an~$m \times m$ positive semi-definite matrix~$A$ as~$\tfrac{1}{m} \sum_{i = 1}^m \diracdelta(x - \lambda_i)$, where~$(\lambda_i : i \in [m])$ are the eigenvalues of~$A$, and~$\diracdelta$ is the Dirac-delta function. This is the probability density function of a uniformly random eigenvalue of~$A$.
\begin{theorem}[Mar\v{c}enko-Pastur law~\cite{Yaskov16-MP-short-proof}]
\label{thm:marcenko-pastur}
Fix an $r > 0$, and let~$m,n$ be integers with~$n, m \ge 1$ and~$m$ a function of~$n$ such that~$m/n \to r$ as~$n \to \infty$. For each~$m$, let~$\ket{\bx_m}$ in~$\complex^m$ be a random vector such that the sequence of vectors~$( \ket{\bx_m})$ is pseudo-isotropic.
Let~$( \bM_{n,m} )$ be a sequence of~$m \times n$ random matrices whose columns are i.i.d.\ copies of the random vector~$\ket{\bx_m}$, and let~$\bmu_{n,m}$ be the ESD of the matrix $\frac{1}{n} \bM_{n,m} \bM_{n,m}^\adjoint$.
Then, as~$n \to \infty$, the ESD~$\bmu_{n,m}$ converges weakly to the density~$\MP_r$ almost surely, where
\[
	\MP_r(x) \quad \eqdef \quad \max \set{ 0, 1 - 1/r } \diracdelta(x)
	    + \frac{ \sqrt{(x - a)(b - x)}}{2\pi r x} \; \Ind(a \le x \le b) \enspace,
\]
with~$a \eqdef (1 - \sqrt{r} \,)^2$, $b \eqdef (1 + \sqrt{r} \,)^2$.
\end{theorem}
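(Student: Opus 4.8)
The plan is to transcribe Yaskov's Stieltjes-transform proof of the real Mar\v{c}enko--Pastur theorem, replacing transposes by conjugate transposes throughout, and to check that its one genuinely probabilistic input---a concentration estimate for quadratic forms---is supplied in the complex setting by the pseudo-isotropy hypothesis of \Cref{def-pseudo-isotropy}, which we have already phrased for complex vectors and complex test matrices. Fix $z$ in the upper half-plane $\set{w \in \complex : \Im w > 0}$, let $\ket{\bv_1}, \dotsc, \ket{\bv_n}$ denote the columns of $\bM_{n,m}$ (i.i.d.\ copies of $\ket{\bx_m}$), and put
\[
    \bR \eqdef \Bigparen{ \tfrac1n \bM_{n,m} \bM_{n,m}^\dagger - z\,\id }^{-1}\;,
    \qquad \bs_{n,m}(z) \eqdef \tfrac1m \Tr \bR\;,
\]
so that $\bs_{n,m}(z)$ is the Stieltjes transform of $\bmu_{n,m}$. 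Writing $\bR_k$ for the resolvent of $\tfrac1n \bM_{n,m}\bM_{n,m}^\dagger - \tfrac1n \density{\bv_k}$, the Hermitian rank-one (Sherman--Morrison) identity
\[
    \bR \;=\; \bR_k \;-\; \frac{ \tfrac1n\, \bR_k \density{\bv_k} \bR_k }{ 1 + \tfrac1n \bra{\bv_k} \bR_k \ket{\bv_k} }
\]
holds over $\complex$ exactly as over $\reals$, and supplies the deterministic bounds $\norm{\bR}, \norm{\bR_k} \le 1/\Im z$ and $\abs{\Tr \bR - \Tr \bR_k} \le 1/\Im z$; these are statements about operator norms and traces of rank-one perturbations, insensitive to the field.

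The only probabilistic ingredient is that $\tfrac1n \bra{\bv_k} \bR_k \ket{\bv_k}$ is close to $\tfrac1n \Tr \bR_k = \tfrac{m}{n}\,\bs_{n,m}(z) + o(1)$. Since $\bR_k$ is independent of $\ket{\bv_k}$ and $\norm{(\Im z)\,\bR_k} \le 1$, this is exactly \Cref{def-pseudo-isotropy} invoked---conditionally on $(\ket{\bv_j})_{j \ne k}$---with the uniformly norm-bounded test matrix $(\Im z)\,\bR_k$; should one prefer to apply pseudo-isotropy only against Hermitian matrices, split $\bR_k$ into its Hermitian and anti-Hermitian parts, each of norm at most $1/\Im z$. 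Feeding this back into the rank-one identity, averaging over $k$, and letting $m/n \to r$ produces the self-consistent equation characterising the Stieltjes transform of $\MP_r$, namely (with the convention $\bs(z) = \int (x - z)^{-1}\, d\MP_r(x)$)
\[
    r\, z\, \bs(z)^2 \;+\; (z + r - 1)\, \bs(z) \;+\; 1 \;=\; 0\;,
\]
whose branch that is analytic on $\complex^+$ and maps it into $\complex^+$ is the sought transform; this computation is word-for-word the same over $\complex$.

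Finally, one upgrades convergence of $\expct\, \bs_{n,m}(z)$ to almost-sure convergence of $\bs_{n,m}(z)$, and then to weak convergence of $\bmu_{n,m}$. Decomposing $\bs_{n,m}(z) - \expct\, \bs_{n,m}(z)$ into the martingale differences obtained by revealing $\ket{\bv_1}, \dotsc, \ket{\bv_n}$ one column at a time, each difference has modulus at most $\tfrac{2}{m\,\Im z}$ by the rank-one bound above, so Azuma--Hoeffding applied to real and imaginary parts gives a tail of order $\exp\!\big(-c\, m\, (\Im z)^2 \eps^2\big)$ on the event $\abs{\bs_{n,m}(z) - \expct\, \bs_{n,m}(z)} \ge \eps$, hence almost-sure convergence by Borel--Cantelli. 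Running this for all $z$ in a countable dense subset of $\complex^+$ and invoking local uniform boundedness and analyticity of Stieltjes transforms (Vitali's theorem) extends $\bs_{n,m}(z) \to \bs(z)$ to all of $\complex^+$ almost surely, and the standard correspondence between pointwise convergence of Stieltjes transforms on $\complex^+$ and weak convergence of the associated probability measures then yields $\bmu_{n,m} \Rightarrow \MP_r$ almost surely. The step deserving the most care is the conditioning argument that legitimises applying \Cref{def-pseudo-isotropy} with the \emph{random} resolvent $\bR_k$ in place of a deterministic sequence $(A_m)$; once that is settled, the rest is a routine transcription of Yaskov's proof under $\transpose \mapsto \dagger$, every remaining step being either a norm bound or an identity valid over any field with an involution.
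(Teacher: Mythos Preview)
Your proposal is correct and follows essentially the same route as the paper: both transcribe Yaskov's Stieltjes-transform argument from the real to the complex setting by replacing transposes with conjugate transposes, invoking the Sherman--Morrison rank-one update, feeding in pseudo-isotropy for the quadratic-form concentration, and upgrading convergence in expectation to almost-sure convergence via a martingale-difference bound on~$\bs_{n,m}(z)$. The only cosmetic difference is that the paper delegates the almost-sure step to Step~1 of the proof of Theorem~1.1 in Bai--Zhou~\cite{BZ08-covariance-matrices}, whereas you spell out the Azuma--Hoeffding argument directly; these are the same mechanism, and your caution about applying \Cref{def-pseudo-isotropy} with the random resolvent~$\bR_k$ is well placed---it is exactly the point that needs care in Yaskov's original as well.
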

In other words, as~$n \to \infty$, with probability~$1$, the cumulative distribution function of a uniformly random eigenvalue of the matrix~$\frac{1}{n} \bM_{n,m} \bM_{n,m}^\adjoint$ converges point-wise to that given by the probability density function~$p_r$.

Theorem~\ref{thm:marcenko-pastur} follows from the proof of Theorem~2.1 in Ref.~\cite{Yaskov16-MP-short-proof} by noting the following points. The eigenvalues of the matrix~$\frac{1}{n} \bM_{n,m} \bM_{n,m}^\adjoint$ are all real, and therefore the Stieltjes continuity theorem applies to~$\bmu_{n,m}$. Further, the Sherman-Morrison formula also extends to the sum~$A + \ketbra{u}{v}$, where~$A$ is an invertible~$m \times m$ complex matrix, and~$\ket{u}, \ket{v}$ are in~$\complex^m$: the matrix~$A + \ketbra{u}{v}$ is invertible if and only if~$1 + \bra{v} A^{-1} \ket{u} \neq 0$, and if the latter condition holds,
\[
(A + \ketbra{u}{v})^{-1} \quad = \quad A^{-1} - \frac{ A^{-1} \ketbra{u}{v} A^{-1} }{ 1 + \bra{v} A^{-1} \ket{u} } \enspace.
\]
We can prove that the Stieltjes transform~$\bs_n(z)$ of~$\bmu_{n,m}$ tends to its expectation~$\expct \bs_n(z)$ almost surely as~$n \to \infty$, following Step~1 in the proof of Theorem~1.1 in Ref.~\cite{BZ08-covariance-matrices}.
The rest of the proof in Ref.~\cite{Yaskov16-MP-short-proof} now extends to the case of interest to us by replacing all instances of the transpose of a real vector by the conjugate transpose of the corresponding complex vector. 

\subsection{Pseudo-isotropy of random maximally entangled vectors}
\label{sec:random_mes}

In this section, we develop properties of linear operators with certain symmetries, and use these to prove that a sequence of random maximally entangled vectors is pseudo-isotropic. This property is later used in the analysis of a random superdense coding protocol.

We consider operators on~$\complex^d \tensor \complex^d \tensor \complex^d \tensor \complex^d$, and label the four tensor factors with~$A,B,C,D$, respectively. As is the convention in quantum information, we use superscripts to indicate the tensor factors on which an operator acts. Let~$\Swap \eqdef \sum_{i,j = 1}^d \ketbra{i,j}{j,i}$ be the \emph{swap\/} operator on~$\complex^d \tensor \complex^d$; it permutes the two tensor factors.
\begin{lemma}
\label{lem-sum-perm}
Let~$W \in \linear(\complex^d \tensor \complex^d \tensor \complex^d \tensor \complex^d )$. Suppose~$W$ commutes with~$\id^{AB} \tensor U^C \tensor U^D$ for all unitary operators~$U \in \unitary( \complex^d)$, as well as with~$U^A \tensor U^B \tensor \id^{CD}$. Then~$W$ is a linear combination of operators of the form~$P^{AB} \tensor Q^{CD}$, where~$P,Q \in \set{\id, \Swap}$.
\end{lemma}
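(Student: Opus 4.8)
The statement is a double-commutant / Schur–Weyl-type computation. The plan is to use the standard fact that the commutant of the diagonal action $U \mapsto U \otimes U$ of $\unitary(\complex^d)$ on $\complex^d \otimes \complex^d$ is spanned by $\{\id, \Swap\}$ (this is Schur–Weyl duality for two tensor factors, equivalently the statement that the only operators commuting with all $U \otimes U$ are linear combinations of permutation operators $S_\pi$ for $\pi \in S_2$). I would first treat the $CD$ factors: viewing $W$ as an element of $\linear(\complex^d \otimes \complex^d)^{AB} \otimes \linear(\complex^d \otimes \complex^d)^{CD}$, the hypothesis that $W$ commutes with $\id^{AB} \otimes U^C \otimes U^D$ for all $U$ says that, slot by slot, $W$ lies in $\linear(\complex^d\otimes\complex^d)^{AB} \otimes \mathcal{C}^{CD}$, where $\mathcal{C}^{CD}$ is the commutant of the diagonal action on $CD$. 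By Schur–Weyl, $\mathcal{C}^{CD} = \Span\{\id^{CD}, \Swap^{CD}\}$. Hence we may write $W = X_0^{AB} \otimes \id^{CD} + X_1^{AB} \otimes \Swap^{CD}$ for some operators $X_0, X_1 \in \linear(\complex^d \otimes \complex^d)$.

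**Second step.** Now impose the other symmetry, $[W, U^A \otimes U^B \otimes \id^{CD}] = 0$ for all $U$. Since $\id^{CD}$ and $\Swap^{CD}$ are linearly independent, commuting with $U^A \otimes U^B \otimes \id^{CD}$ forces $[X_0, U \otimes U] = 0$ and $[X_1, U \otimes U] = 0$ separately (one can see this by, e.g., pairing against a dual basis of $\linear(\complex^d\otimes\complex^d)^{CD}$, or simply noting that the coefficient operators in a fixed linearly independent expansion must each be invariant). Applying Schur–Weyl again on the $AB$ factors gives $X_0 = a\,\id^{AB} + b\,\Swap^{AB}$ and $X_1 = c\,\id^{AB} + e\,\Swap^{AB}$ for scalars $a,b,c,e \in \complex$. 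Substituting back yields
\[
    W \;=\; a\,\id^{AB}\otimes\id^{CD} \;+\; b\,\Swap^{AB}\otimes\id^{CD} \;+\; c\,\id^{AB}\otimes\Swap^{CD} \;+\; e\,\Swap^{AB}\otimes\Swap^{CD},
\]
which is exactly the claimed form with $P, Q \in \{\id, \Swap\}$.

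**Main obstacle.** The only genuinely nontrivial input is Schur–Weyl duality for two copies, i.e. that the commutant of $\{U\otimes U : U \in \unitary(\complex^d)\}$ in $\linear(\complex^d\otimes\complex^d)$ is $\Span\{\id,\Swap\}$; everything else is bookkeeping about tensor decompositions and linear independence. I would either cite this (it is completely standard — the commutant is the span of the image of the group algebra of $S_2$) or, if a self-contained argument is preferred, prove it directly: expand $M = \sum_{ijkl} M_{ij,kl}\,\ketbra{i}{k}\otimes\ketbra{j}{l}$, impose invariance under diagonal unitaries (it suffices to use diagonal phase unitaries and permutation unitaries), and deduce that $M_{ij,kl}$ is supported on $\{i=k, j=l\}$ and $\{i=l, j=k\}$ with constant values on each, which gives $M = a\,\id + b\,\Swap$. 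The one subtlety to be careful about is the step where I split the two symmetry conditions across the fixed $\{\id^{CD},\Swap^{CD}\}$ expansion: this is valid precisely because $\id^{CD}$ and $\Swap^{CD}$ are linearly independent in $\linear(\complex^d\otimes\complex^d)$ (true for all $d \geq 2$, since $\Swap$ is not a scalar), so the coefficient operators $X_0, X_1$ are uniquely determined and each must individually be invariant under the $AB$-action.
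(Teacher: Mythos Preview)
Your proposal is correct and follows essentially the same approach as the paper: expand $W$ across the $AB$/$CD$ tensor split, use the commutant of the diagonal $U\otimes U$ action (which the paper cites via the von Neumann double commutant theorem rather than Schur--Weyl, but this is the same fact) to reduce the $CD$ part to $\Span\{\id,\Swap\}$, then use linear independence of $\id$ and $\Swap$ to push the $AB$ symmetry onto the coefficient operators and apply the commutant fact again. The only cosmetic difference is that the paper begins by fixing an explicit basis $(E_i)$ for $\linear(\complex^d\otimes\complex^d)^{AB}$ to extract the $W_i$, whereas you phrase the first step more abstractly as membership in $\linear^{AB}\otimes\mathcal{C}^{CD}$.
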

\begin{proof}
Let~$(E_i)$ be a basis for the vector space~$\linear( \complex^d \tensor \complex^d)$. We may express~$W$ as~$W = \sum_{i = 1}^{d^2} E_i \tensor W_i$ for some operators~$W_i \in \linear( \complex^d \tensor \complex^d)$. Since~$W$ commutes with~$\id^{AB} \tensor U^C \tensor U^D$, we have
\[
\sum_{i = 1}^{d^2} E_i \tensor W_i \quad = \quad \sum_{i = 1}^{d^2} E_i \tensor (U \tensor U) W_i (U^\adjoint \tensor U^\adjoint) \enspace,
\]
for all~$U \in \unitary( \complex^d)$. Since the operators~$E_i$ form a basis, we conclude that
\[
W_i \quad = \quad (U \tensor U) W_i (U^\adjoint \tensor U^\adjoint) \enspace,
\]
i.e., the operator~$W_i$ commutes with~$U \tensor U$ for every~$i$. As a consequence of the von Neumann double commutant theorem~\cite[Theorem~7.15, Section~7.1]{Watrous18-TQI}, each operator~$W_i$ may be written as a linear combination of~$\set{ \id, \Swap}$. So
\[
W \quad = \quad \sum_{i = 1}^{d^2} E_i \tensor (\alpha_i \id + \beta_i \Swap)  \enspace,
\]
for some complex numbers~$\alpha_i, \beta_i$. Rearranging the sum, we get that
\[
W \quad = \quad G \tensor \id + H \tensor \Swap  \enspace,
\]
for some operators~$G,H \in \linear( \complex^d \tensor \complex^d)$. Since~$W$ commutes with~$U^A \tensor U^B \tensor \id^{CD}$ as well, and~$\id$ and~$\Swap$ are linearly independent, by~\cite[Theorem~7.15, Section~7.1]{Watrous18-TQI} we similarly get that~$G$ and~$H$ are also linear combinations of~$\set{ \id, \Swap}$. The lemma follows.
\end{proof}

Consider the random vector~$\ket{\bpsi}$ defined as~$\ket{\bpsi} \eqdef (\bU \tensor \id) \ket{\mes_d}$, where~$\bU \in \unitary(\complex^d)$ is a Haar-random unitary operator and~$\ket{\mes_d}$ is the maximally entangled state~$\tfrac{1}{\sqrt{d}} \sum_{k = 1}^d \ket{k}\ket{k}$ with local dimension~$d$. We would like to compute a closed form expression for the operator~$M$ on~$\C^d \otimes \C^d \otimes \C^d \otimes \C^d$ defined as:
\begin{equation}
\label{eq-M}
	 M \quad \eqdef \quad \Ex \ketbra{\bpsi}{\bpsi}^{\otimes 2} \enspace.
\end{equation}
We use the symmetries of~$M$ in order to do so.
\begin{lemma}
\label{clm:lambda}
Let~$M$ be the operator defined in Eq.~\eqref{eq-M}. Then
\[
M \quad = \quad \beta \left [ \id + \Swap^{AC} \otimes \Swap^{BD} \right ] + \gamma \left [ \id^{AC} \tensor \Swap^{BD} + \Swap^{AC} \tensor \id^{BD} \right ] \enspace,
\]
where~$\beta \eqdef d^{-2} (d^2 - 1)^{-1}$ and $\gamma \eqdef -d^{-3} (d^2 - 1)^{-1}$.
\end{lemma}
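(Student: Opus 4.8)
The plan is to first exploit the symmetries of $M$ to confine it to the four-dimensional space spanned by $\id$, $\Swap^{AC}\tensor\Swap^{BD}$, $\id^{AC}\tensor\Swap^{BD}$, and $\Swap^{AC}\tensor\id^{BD}$, and then to pin down the four coefficients by evaluating the Hilbert--Schmidt inner product of $M$ with each of these operators and solving the resulting linear system. This is essentially a Haar-twirl computation followed by a small piece of linear algebra.

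For the confinement step, note that both copies in $\density{\bpsi}^{\otimes 2}$ use the same $\bU$, so $\ket{\bpsi}^{\otimes 2} = (\bU\tensor\bU)^{AC}\big(\ket{\mes}^{AB}\tensor\ket{\mes}^{CD}\big)$, where on the right the two copies of $\bU$ act on the tensor factors $A$ and $C$. Hence $M = \Ex_{\bU}\,(\bU\tensor\bU)^{AC}\,\Omega\,\big((\bU\tensor\bU)^{AC}\big)^{\adjoint}$ with $\Omega \eqdef \density{\mes}^{AB}\tensor\density{\mes}^{CD}$; replacing $\bU$ by $V\bU$ and invoking invariance of the Haar measure shows that $M$ commutes with $V^A\tensor V^C$ for every unitary $V$ (acting trivially on $BD$). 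Using the standard identity $(\bU\tensor\id)\ket{\mes} = (\id\tensor\bU^{\transpose})\ket{\mes}$, we may instead write $\ket{\bpsi}^{\otimes 2} = (\bU^{\transpose}\tensor\bU^{\transpose})^{BD}\big(\ket{\mes}^{AB}\tensor\ket{\mes}^{CD}\big)$; since $\bU^{\transpose}$ is also Haar-distributed, the same argument yields that $M$ commutes with $W^B\tensor W^D$ for every unitary $W$. Thus $M$ satisfies the hypotheses of \Cref{lem-sum-perm} after relabelling the tensor factors so that the pairs $\{A,C\}$ and $\{B,D\}$ play the roles of $\{A,B\}$ and $\{C,D\}$, and we conclude that
\[
    M \quad = \quad \sum_{P,Q\in\{\id,\Swap\}} c_{PQ}\, P^{AC}\tensor Q^{BD}
\]
for some scalars $c_{\id\id}, c_{\id\Swap}, c_{\Swap\id}, c_{\Swap\Swap}$.

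For the coefficient step, I would compute $\Tr\!\big(M\,(P^{AC}\tensor Q^{BD})\big)$ for each of the four choices. First, $\Tr(M) = \Ex\Tr\density{\bpsi}^{\otimes 2} = 1$. Second, $\Swap^{AC}\tensor\Swap^{BD}$ is precisely the swap of the two copies $AB$ and $CD$, so by the swap trick $\Tr\!\big(M(\Swap^{AC}\tensor\Swap^{BD})\big) = \Ex\Tr\!\big((\density{\bpsi})^{2}\big) = 1$. Third, applying the swap trick to a single factor of each copy gives $\Tr\!\big(M(\Swap^{AC}\tensor\id^{BD})\big) = \Ex\Tr(\sigma^2)$ and $\Tr\!\big(M(\id^{AC}\tensor\Swap^{BD})\big) = \Ex\Tr\!\big((\sigma')^{2}\big)$, where $\sigma,\sigma'$ are the reduced states of $\density{\bpsi}$ on its first and second tensor factor; since $\ket{\bpsi}$ is maximally entangled these both equal $\tfrac1d\id$ regardless of $\bU$, so each of these two traces equals $\tfrac1d$. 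These four values, together with the Gram matrix of the four operators (whose entries are products of $\Tr\id = d^2$ and $\Tr\Swap = d$ on $\complex^d\tensor\complex^d$, e.g.\ $\Tr(\Swap^{AC}\tensor\Swap^{BD}) = d^2$, $\Tr(\Swap^{AC}\tensor\id^{BD}) = d^3$, $\Tr\id = d^4$), determine the $c_{PQ}$ uniquely. Solving the $4\times 4$ system one finds $c_{\id\id} = c_{\Swap\Swap} = \beta = d^{-2}(d^2-1)^{-1}$ and $c_{\id\Swap} = c_{\Swap\id} = \gamma = -d^{-3}(d^2-1)^{-1}$; as a sanity check these satisfy $\beta d^2(d^2+1) + 2\gamma d^3 = \Tr M = 1$.

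I do not anticipate a genuine difficulty here. The only place needing care is the tensor-factor bookkeeping: identifying $\Swap^{AC}\tensor\Swap^{BD}$ with the copy swap, tracking which half of $\ket{\bpsi}$ is traced out in the partial swap-trick evaluations, and keeping straight that the relevant unitary symmetries act on the pairs $(A,C)$ and $(B,D)$ rather than $(A,B)$ and $(C,D)$ (hence the relabelling when invoking \Cref{lem-sum-perm}). It is precisely the fact that both reductions of $\ket{\bpsi}$ are maximally mixed independently of $\bU$ that makes all four traces deterministic, so that the expectation over $\bU$ disappears entirely once the symmetry step is done.
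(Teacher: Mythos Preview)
Your proposal is correct and follows essentially the same approach as the paper: establish the two commutation symmetries of $M$ (via Haar invariance on the $AC$ factors and the transpose trick on the $BD$ factors), invoke \Cref{lem-sum-perm} after the appropriate relabelling, and then determine the four coefficients by taking traces against $\id$, $\Swap^{AC}\tensor\Swap^{BD}$, $\Swap^{AC}\tensor\id^{BD}$, and $\id^{AC}\tensor\Swap^{BD}$. Your use of the swap trick and the purity of the reduced states of $\ket{\bpsi}$ to read off the left-hand-side values $1,1,1/d,1/d$ is exactly the computation the paper carries out ``directly from the definition of $M$''.
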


\begin{proof}
Since
\[
M \quad = \quad \Ex ~ (\bU^A \tensor \id^B) \ketbra{\mes_d}{\mes_d} (\bU^{\adjoint A} \tensor \id^B) \otimes (\bU^C \tensor \id^D) \ketbra{\mes_d}{\mes_d} (\bU^{\adjoint C} \tensor \id^D) \enspace,
\]
and~$\bU$ is Haar random, the operator~$M$ commutes with~$V^A \tensor V^C \tensor \id^{BD}$ for all~$V \in \unitary( \complex^d)$. Further, since~$( \id \tensor V) \ket{\mes_d} = ( V^\transpose \tensor \id) \ket{\mes_d}$, the operator~$M$ also commutes with~$\id^{AC} \tensor V^B \tensor V^D$. By Lemma~\ref{lem-sum-perm}, we have
\begin{equation}
\label{eq:lincomb}
	M \quad = \quad \alpha \id + \beta (\Swap^{AC} \otimes \Swap^{BD}) + \gamma (\id^{AC} \tensor \Swap^{BD}) + \delta (\Swap^{AC} \tensor \id^{BD}) \enspace. 
\end{equation}
Consider the following linear functionals:
\begin{enumerate}
	\item $X \mapsto \Tr(X)$
	\item $X \mapsto \Tr((\Swap^{AC} \otimes \Swap^{BD}) X)$
	\item $X \mapsto \Tr((\Swap^{AC} \tensor \id^{BD}) X)$
	\item $X \mapsto \Tr((\id^{AC} \tensor \Swap^{BD}) X)$
\end{enumerate}
We apply these functionals to both sides of Eq.~\eqref{eq:lincomb}. We calculate the value of the functional on the left hand side directly from the definition of~$M$, i.e., Eq.~\eqref{eq-M}, and on the right hand side from Eq.~\eqref{eq:lincomb}. We thus obtain the following linear equations:
\begin{enumerate}
	\item $1 = \alpha d^4 + \beta d^2 + \gamma d^3 + \delta d^3$,
	\item $1 = \alpha d^2 + \beta d^4 + \gamma d^3 + \delta d^3$,
	\item $1/d = \alpha d^3 + \beta d^3 + \gamma d^2 + \delta d^4$, and
	\item $1/d = \alpha d^3 + \beta d^3 + \gamma d^4 + \delta d^2$, respectively.
\end{enumerate}
Solving for~$\alpha, \beta, \gamma, \delta$, we get the unique solution
\[
	\alpha = \beta = \frac{1}{d^2 (d^2 - 1)} \enspace, \quad \text{and} \quad \gamma = \delta = \frac{-1}{d^3(d^2 - 1)} \enspace.
\]
\end{proof}

Let~$n \eqdef d^2$. Consider the random vector~$\ket{\bxi_n} \in \complex^d \tensor \complex^d$ defined as~$\ket{\bxi_n} \eqdef d \ket{\bpsi} = d (\bU \tensor \id) \ket{\mes_d}$. We prove that the sequence of these vectors is pseudo-isotropic.

\suppress{
\begin{claim}
\label{clm:variance}
	Let $A$ denote an operator acting on $\C^d \otimes \C^d$ such that $\| A \| \leq 1$. Then
	\[
		\Abs{ \Tr( M A \otimes A) - \frac{1}{d^4} \Tr(A)^2 } \leq \frac{2}{d - 1}.
	\]
\end{claim}
\begin{proof}
	Using Claim~\ref{clm:lambda}, we get that
	\[
	 \Tr( M A \otimes A) =  \beta (\Tr(A)^2 + \Tr(A^2)) + \gamma (\Tr(F^{BD} A \otimes A) + \Tr(F^{AC} A \otimes A)).
	\]
	Using that $\Tr(A)^2 \leq d^4$, $\Tr(A^2) \leq d^2$, and both $\Tr(F^{BD} A \otimes A) \leq d^4$ and $\Tr(F^{AC} A \otimes A)) \leq d^4$ (the latter two follow because $\| F A \otimes A \| \leq 1$), we get
	\begin{align}
		\Abs{ \Tr( M A \otimes A) - \frac{1}{d^4} \Tr(A)^2 } \leq \Abs{ \beta - \frac{1}{d^4} } d^4 + \beta d^2 + 2 |\gamma| d^4  \leq  \frac{2}{d - 1}.
	\end{align}
\end{proof}
}

\begin{lemma}
\label{lem-pseudo-isotropy}
The sequence of vectors~$( \ket{ \bxi_n} )$ is pseudo-isotropic.
\end{lemma}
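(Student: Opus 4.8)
The plan is to establish pseudo-isotropy by a second-moment (Chebyshev) argument. Fix a sequence $(A_n)$ of complex $d^2 \times d^2$ matrices with $\norm{A_n} \le \kappa$ for all $n = d^2$, and set $Z_n \eqdef \bra{\bxi_n} A_n \ket{\bxi_n}$. I will show that $\Ex Z_n = \Tr(A_n)$ exactly and that $\operatorname{Var} Z_n = O(\kappa^2 d^2)$; since the ambient dimension of $\ket{\bxi_n}$ is $m = d^2$, Chebyshev's inequality then gives, for every fixed $\eps > 0$,
\[
\Pr\!\Bigl[\,\bigabs{Z_n - \Tr(A_n)} \ge \eps\, d^2\,\Bigr] \quad \le \quad \frac{\operatorname{Var} Z_n}{\eps^2 d^4} \quad = \quad O\!\Paren{\frac{\kappa^2}{\eps^2 d^2}} \enspace,
\]
which tends to $0$ as $d \to \infty$ --- exactly the convergence $\tfrac1m\bigl(\bra{\bxi_m}A_m\ket{\bxi_m} - \Tr(A_m)\bigr) \overset{\rP}{\longrightarrow} 0$ required by \Cref{def-pseudo-isotropy}. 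The first-moment claim is immediate: $\Ex\density{\bxi_n} = d^2\,\Ex\density{\bpsi}$, and averaging $(\bU \tensor \id)\density{\mes_d}(\bU^\adjoint \tensor \id)$ over the Haar measure (using $\Ex\,\bU\ketbra{i}{j}\bU^\adjoint = \tfrac{\delta_{ij}}{d}\id$) gives $\Ex\density{\bpsi} = \id/d^2$, so $\Ex\density{\bxi_n} = \id$ and $\Ex Z_n = \Tr(A_n)$.

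For the second moment, view $\complex^{d^2}$ as $\complex^d \tensor \complex^d$ with the two factors labelled $A, B$ (so $\bU$ acts on $A$), and introduce a second copy labelled $C, D$. Since $\overline{Z_n} = \bra{\bxi_n} A_n^\adjoint \ket{\bxi_n}$,
\[
\Ex\bigabs{Z_n}^2 \quad = \quad \Tr\!\Bigl[\,\bigl(A_n^{AB} \tensor (A_n^\adjoint)^{CD}\bigr)\,\Ex\!\bigl(\density{\bxi_n}^{AB} \tensor \density{\bxi_n}^{CD}\bigr)\Bigr] \quad = \quad d^4\,\Tr\!\Bigl[\,\bigl(A_n^{AB} \tensor (A_n^\adjoint)^{CD}\bigr)\,M\Bigr],
\]
where $M = \Ex\density{\bpsi}^{\tensor 2}$ is precisely the operator whose closed form is computed in \Cref{clm:lambda}. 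Substituting that closed form and evaluating the four resulting traces by the standard ``swap-trick'' identities --- $\Tr[(X \tensor Y)\,\id] = \Tr(X)\Tr(Y)$; $\Tr[(X^{AB}\tensor Y^{CD})(\Swap^{AC}\tensor\Swap^{BD})] = \Tr(XY)$; $\Tr[(A_n^{AB}\tensor (A_n^\adjoint)^{CD})(\id^{AC}\tensor\Swap^{BD})] = \norm{\Tr_A A_n}_2^2$, and symmetrically $\norm{\Tr_B A_n}_2^2$ for $\Swap^{AC}\tensor\id^{BD}$ --- yields
\[
\Ex\bigabs{Z_n}^2 \quad = \quad \frac{d^2}{d^2-1}\Bigl(\bigabs{\Tr A_n}^2 + \norm{A_n}_2^2\Bigr) \;-\; \frac{d}{d^2-1}\Bigl(\norm{\Tr_A A_n}_2^2 + \norm{\Tr_B A_n}_2^2\Bigr),
\]
and subtracting $\bigabs{\Ex Z_n}^2 = \bigabs{\Tr A_n}^2$ gives
\[
\operatorname{Var} Z_n \quad = \quad \frac{1}{d^2-1}\Bigl(\bigabs{\Tr A_n}^2 + d^2\,\norm{A_n}_2^2 \;-\; d\,\norm{\Tr_A A_n}_2^2 \;-\; d\,\norm{\Tr_B A_n}_2^2\Bigr).
\]

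To conclude, discard the two non-positive terms and use $\norm{A_n} \le \kappa$: as $A_n$ is $d^2 \times d^2$ we have $\bigabs{\Tr A_n} \le d^2\kappa$ and $\norm{A_n}_2^2 = \sum_i \sigma_i(A_n)^2 \le d^2\kappa^2$, hence $\operatorname{Var} Z_n \le \tfrac{1}{d^2-1}\bigl(d^4\kappa^2 + d^4\kappa^2\bigr) \le 4\kappa^2 d^2$ for $d \ge 2$; the Chebyshev bound above then finishes the proof. There is no deep obstacle here --- the heavy lifting, namely the closed form for $M$, is already done in \Cref{clm:lambda}, and the rest is bookkeeping. The two points that do need care are (i) tracking which tensor factor each operator and partial trace lives on when expanding $\Tr[(A_n^{AB}\tensor(A_n^\adjoint)^{CD})M]$, and (ii) allowing $A_n$ to be non-Hermitian, so that $Z_n$ is complex-valued and one must work with $\bigabs{Z_n}^2 = Z_n\,\overline{Z_n}$ and $\overline{Z_n} = \bra{\bxi_n}A_n^\adjoint\ket{\bxi_n}$. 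As a consistency check, $A_n = \id$ makes the formula give $\operatorname{Var} Z_n = 0$, matching the fact that then $Z_n = \norm{\bxi_n}^2 = d^2$ is deterministic.
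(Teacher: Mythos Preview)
Your proposal is correct and follows essentially the same approach as the paper: compute the mean via $\Ex\density{\bxi_n}=\id$, compute the second moment by invoking the closed form for $M$ from \Cref{clm:lambda}, bound the variance as $o(n^2)$, and conclude by Chebyshev. The only difference is cosmetic: you evaluate the two $\gamma$-terms exactly as $\norm{\Tr_A A_n}_2^2$ and $\norm{\Tr_B A_n}_2^2$ and then discard them (since $\gamma<0$ and they are nonnegative), whereas the paper bounds their absolute value via H\"older; your treatment is slightly sharper and makes the $A_n=\id$ sanity check transparent, but both routes yield the required $O(\kappa^2 d^2)$ variance bound.
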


\begin{proof}
Let~$(A_n \in \linear( \complex^d \tensor \complex^d) : n \ge 1)$ be a sequence of complex matrices with spectral norm~$\norm{A_n}$ bounded by a constant~$\kappa$, for each~$n$. We use the Chebyshev Inequality to show that
\begin{align}
    \label{eq-xin-pseudo-isotropic}
    \frac{1}{n} \Paren{ \bra{\bxi_n} A_n \ket{\bxi_n} - \Tr(A_n) }  \overset{\rP}{\longrightarrow} 0
\end{align}
as~$n \to \infty$. Let~$\bx_n$ be the complex random variable defined as~$\bx_n \eqdef \bra{\bxi_n} A_n \ket{\bxi_n}$. We may verify that~$\expct \density{\bxi_n} = \id$, so that~$\expct \bx_n = \expct \trace( \density{\bxi_n} A_n)  = \trace(A_n)$. Eq.~\eqref{eq-xin-pseudo-isotropic} is equivalent to showing that for every~$\epsilon > 0$, $\Pr( \Abs{\bx_n - \expct \bx_n} > \epsilon n) \to 0$ as~$n \to \infty$. By the Chebyshev Inequality, 
\[
\Pr( \Abs{\bx_n - \expct \bx_n} > \epsilon n) \quad \le \quad \frac{1}{ \epsilon^2 n^2} \expct \Abs{\bx_n - \expct \bx_n}^2 \enspace. 
\]
So it suffices to show that the variance of~$\bx_n$ is~$\order(n^2)$.

The variance~$\expct \Abs{\bx_n - \expct \bx_n}^2 = \expct \Abs{\bx_n}^2 - \Abs{ \expct \bx_n}^2 = \expct \Abs{\bx_n}^2 - \Abs{ \trace( A_n)}^2$. To calculate the second moment of~$\bx_n$, we rewrite it as follows.
\begin{align*}
    \expct \Abs{ \bx_n}^2 \quad 
        & = \quad \expct ~ \bra{\bxi_n} A_n \ket{\bxi_n} \bra{\bxi_n} A_n^\adjoint \ket{\bxi_n} \\
        & = \quad \expct ~ \trace \big[ (\density{\bxi_n} \tensor \density{\bxi_n}) 
            (A_n \tensor A_n^\adjoint) \big] \\
        & = \quad n^2 \trace \big[ M (A_n \tensor A_n^\adjoint) \big] \enspace,
\end{align*}
where~$M$ is the matrix defined in Eq.~\eqref{eq-M}. By Lemma~\ref{clm:lambda}, and the H{\"o}lder Inequality (namely, $\Abs{ \trace(AB)} \le \trnorm{A} \norm{B}$),
\begin{align*}
    \expct \Abs{ \bx_n}^2 \quad 
        & = \quad n^2 \Big[ \beta \big( \Abs{\trace( A_n)}^2 + \trace( A_n^\adjoint A_n) \big) \\
        & \qquad \mbox{} + \gamma \big( \trace \big( (  \Swap^{AC} \tensor \id^{BD}) 
            ( A_n \tensor A_n^\adjoint) \big) + \trace \big( ( \id^{AC} \tensor \Swap^{BD}) ( A_n \tensor A_n^\adjoint)  \big) \big) \Big] \\
        & \le \quad n^2 \Big[ \beta \big( \Abs{\trace( A_n)}^2 + \kappa^2 n \big) + 2 \Abs{\gamma} \kappa^2 n^2 \Big] \enspace,
\end{align*}    
where~$\beta = 1/n(n - 1)$ and~$\gamma = -1/n^{3/2} (n - 1)$.
Thus the variance is bounded as
\begin{align*}
    \expct \Abs{ \bx_n}^2 - \Abs{ \trace( A_n)}^2 \quad 
        & \le \quad \frac{1}{ n - 1} \Abs{\trace( A_n)}^2 + \frac{ \kappa^2 n^2}{ n - 1} + \frac{ 2 \kappa^2 n^{5/2} }{ n - 1} \enspace,
\end{align*}
which is~$\order( n^2)$ as~$ \Abs{\trace( A_n)} \le \kappa n$. This proves that the sequence~$(\ket{\bxi_n})$ is pseudo-isotropic.
\end{proof}

\subsection{Analysis of a random protocol}
\label{sec-random-protocol-analysis}

Consider the following random protocol~$\bPi_d$. Let~$d$ be an integer~$\ge 2$, and~$n \eqdef d^2$.
Alice and Bob agree on a choice of~$n$ independently chosen Haar-random unitary operators~$\bU_1, \dotsc, \bU_n \in \unitary(\complex^d)$. They also share the maximally entangled state $\ket{\mes_d} \coloneqq \tfrac{1}{\sqrt{d}} \sum_{k = 1}^d \ket{k}\ket{k}$ with local dimension~$d$.
When Alice gets message $i \in [n]$, she applies $\bU_i$ to her half of $\ket{\mes_d}$, and sends it over to Bob. Bob now holds the state~$\ket{\bpsi_i} \eqdef (\bU_i \otimes \id ) \ket{\mes_d}$. He performs an optimal measurement to identify~$i$, given that the state is drawn from the ensemble~$\bcE_d \eqdef \big( \ket{\bpsi_j} : j \in [n] \big)$.

Aram Harrow (personal communication) suggested the protocol~$\bPi_d$ as a candidate for an approximate~$(d, \epsilon)$-superdense coding protocol with vanishing error~$\epsilon$ in the limit of large dimension. If this random construction of superdense coding protocols did indeed have error that vanishes rapidly as a function of~$d$, then this could potentially refute \Cref{conj:robust-rigidity}. This is formalized by the following proposition (which was stated in \Cref{sec-robust-rigidity}, and is reproduced here for convenience).

\counterexample*

\begin{proof}
Let $\bPi_d$ be the random protocol and let $(\bU_i)$ be the ensemble of random unitaries specified in the Proposition statement. Suppose for contradiction that \Cref{conj:robust-rigidity} were true and the error~$\beps$ of the protocol $\bPi_d$ satisfied
\begin{equation}
\label{eq:counterexample-0}
	\Ex_{(\bU_i)} \delta_2(\beps)^2 \quad < \quad (2d)^{-2}~.
\end{equation}
First we argue that
\begin{equation}
\label{eq:counterexample-1}
\Ex_{(\bU_i)} \Ex_{\bj \neq \bk}  | \Tr(\bU_\bj \bU_\bk^*) |^2 \quad = \quad 1 \enspace,
\end{equation}
where the first expectation is over the ensemble of random unitary operators $(\bU_i)$, and the second expectation is over a uniformly random pair of distinct indices~$\bj,\bk \in [d]$, $\bj \neq \bk$. To prove this, note that for all~$j \neq k$ $\Ex_{(\bU_i)} |\Tr(\bU_j \bU_k^*)|^2 = \Ex_{(\bU_i)} |\Tr(\bU_1 \bU_2^*)|^2$ because~$\bU_i$ are independent, identically distributed Haar-random unitaries operators. Furthermore, by the rotation invariance of the Haar measure, $\bU_1 \bU_2^*$ is also distributed according to the Haar measure. So the above quantity is equal to $\Ex_\bU |\Tr( \bU)|^2$ for Haar-random $\bU$. So the LHS of \Cref{eq:counterexample-1} equals
\begin{align*}
	\Ex_\bU |\Tr(\bU)|^2 \quad & = \quad \Ex_\bU \Abs{ \sum_{j = 1}^d \bra{j} \bU \ket{j} }^2 \\
				   &= \quad \Ex_\bU \sum_{j,k=1}^d \bra{j} \bU \ketbra{j}{k} \bU^\adjoint \ket{k} \\
				   &= \quad \sum_{j,k} \bra{j} \Big( \Ex_\bU \bU \ketbra{j}{k} \bU^\adjoint \Big) \ket{k}~.
\end{align*}
Since $\Ex_\bU \bU \ketbra{j}{j} \bU^* = \id/d$ and $\Ex_\bU \bU \ketbra{j}{k} \bU^* = 0$ when $j \neq k$, we have $\Ex |\Tr(\bU)|^2 = 1$, which establishes \Cref{eq:counterexample-1}.

On the other hand, the rigidity condition promised by \Cref{conj:robust-rigidity} implies that every collection of $d \times d$ unitary operators $(U_i)$ yields a superdense protocol with some error $\eps$, and in turn there exists an orthogonal unitary basis $(E_i)$ such that $\| U_i - E_i \|_{\nhs} \leq \delta_2(\eps)$ for all $i \in [d^2]$. Note that $\eps$ and $(E_i)$ depend on $(U_i)$, and let~$\beps$ and~$(\bE_i)$ be the error of the protocol~$\bPi_d$ and the corresponding orthogonal unitary basis, respectively. Then
\begin{align}
	\Ex_{(\bU_i)} \Ex_{\bj \neq \bk} |\Tr(\bU_\bj \bU_\bk^*)|^2 \quad &= \quad \Ex_{(\bU_i)} \Ex_{\bj \neq \bk} \left| \Tr(\bU_\bj \bU_\bk^*) - \Tr(\bE_\bj \bE_\bk^*) \right|^2 \notag \\
	    &\leq \quad \Ex_{(\bU_i)} \Ex_{\bj \neq \bk} \Big( \left| \Tr((\bU_\bj - \bE_\bj) \bU_\bk^*)| + |\Tr(\bE_\bj (\bU_\bk^* - \bE_\bk^*)) \right| \Big)^2 \notag \\
		&\leq \quad 2 \Ex_{(\bU_i)} \Ex_{\bj \neq \bk} \left|\Tr((\bU_\bj - \bE_\bj) \bU_\bk^*) \right|^2 + \left| \Tr(\bE_\bj (\bU_\bk^* - \bE_\bk^*)) \right|^2
			\label{eq:counterexample-2}
\intertext{where the first equality is due to the orthogonality condition~$\Tr(\bE_j \bE_k^*) = 0$ whenever~$j \neq k$, the second line is due to the triangle inequality, and the third line is due to the inequality~$(a + b)^2 \leq 2a^2 + 2b^2$ for real numbers $a,b$. By the Cauchy-Schwarz inequality for the Hilbert-Schmidt inner product, we have $|\Tr((\bU_j - \bE_j) \bU_k^*)|^2 \leq \| \bU_j - \bE_j \|_2^2 \cdot \| \bU_k \|_2^2$ and $|\Tr(\bE_j (\bU_k^* - \bE_k^*))|^2 \leq \| \bE_j \|_2^2 \cdot \| \bU_k - \bE_k \|_2^2$, where $\| X \|_2 = \sqrt{\Tr(XX^\adjoint)}$ denotes the (unnormalized) Hilbert-Schmidt norm. Since $\| A \|_2^2 = d$ for all $d \times d$ unitary matrices $A$, we can upper bound the RHS of \Cref{eq:counterexample-2} as}
    &\leq \quad 2 d \Ex_{(\bU_i)} \Ex_{\bj \neq \bk} \left( \| \bU_\bj - \bE_\bj \|_2^2 + \| \bU_\bk - \bE_\bk \|_2^2 \right) \notag \\
\suppress{
    &= \quad 4 d \Ex_{(\bU_i)} \Ex_{j} \| \bU_j - \bE_j \|_2^2 \notag \\
}
    &= \quad 4 \Ex_{(\bU_i)} \sum_{j} \| \bU_j - \bE_j \|_{\nhs}^2 \notag \\
    &\leq \quad 4d^2 \Ex_{(\bU_i)} \delta_2(\beps)^2 \enspace \notag \\
    & < \quad 1 \enspace, \notag
\end{align}
where the last inequality follows from the assumption in \Cref{eq:counterexample-0}. However, this contradicts \Cref{eq:counterexample-1}. 
Thus, either the conjecture does not hold, or the random superdense coding protocol $\bPi_d$ has error satisfying $\Ex \delta_2(\beps) \geq (2d)^{-2}$. 
\end{proof}

In the rest of this section, we prove that for sufficiently large dimension, with high probability, the protocol~$\bPi_d$ has positive constant error. This indicates that random maximally entangled quantum states are not very reliable for transmitting classical information, and proves Theorem~\ref{thm:random-protocol}. Thus, the random protocol~$\bPi_d$ does not rule out a robust rigidity theorem for superdense coding.

To analyze the decoding error of the protocol, we study the \emph{distinguishability\/} of the ensemble~$\bcE_d$. This is the probability that, if the pure state~$\ket{\bpsi_i}$ is selected uniformly at random from the ensemble, an optimal measurement correctly identifies the state.
\begin{definition}
Let~$\cF \eqdef \big( ( p_i, \rho_i ) : \rho_i \in \qstate( \complex^k), i \in [m] \big) $ be an ensemble of states in which state~$\rho_i$ occurs with probability~$p_i$. 
We define the \emph{distinguishability\/} of~$\cF$ as
\[
	\dist(\cF) \quad \eqdef \quad \max_{\text{POVM } M} \sum_{i = 1}^m p_i \Tr(M_i \rho_i) \enspace,
\]
where the maximization is over all measurements (i.e., POVMs)~$M$ with elements~$M_1, \ldots, M_m$.
\end{definition}
We can estimate the distinguishability of an ensemble of states via the generalized Holevo-Curlander bounds~\cite{Holevo79-distinguishability,Curlander79-distinguishability,ON99-converse-channel-coding,Tyson09-distinguishability}.
\begin{theorem}[generalized Holevo-Curlander bounds~\cite{ON99-converse-channel-coding,Tyson09-distinguishability}]
\label{thm:holevo-curlander}
Let~$\cF \eqdef \big( ( p_i, \rho_i ) : \rho_i \in \qstate( \complex^k), i \in [m] \big)$ be an ensemble of~$m$ quantum states. Then the distinguishability of~$\cF$ satisfies
\[
	\left( \hc(\cF) \right)^2 \quad \leq \quad \dist(\cF) \quad \leq \quad \hc(\cF) \enspace,
\]
where
\[
	\hc(\cF) \quad \eqdef \quad \Tr \sqrt{ \sum_{i = 1}^m p_i^2 \rho_i^2 } \enspace.
\]
\end{theorem}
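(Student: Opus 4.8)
The plan is to prove the two inequalities separately: the upper bound $\dist(\cF)\le\hc(\cF)$ by two applications of the Cauchy--Schwarz inequality to an arbitrary POVM, and the lower bound $\dist(\cF)\ge 2\hc(\cF)-1$ by exhibiting an explicit sub-optimal measurement of ``square-root'' type. Throughout write $\sigma_i\eqdef p_i\rho_i$, $\rho\eqdef\sum_j\sigma_j$, and $\Lambda\eqdef\bigl(\sum_j\sigma_j^2\bigr)^{1/2}$, so $\hc(\cF)=\Tr(\Lambda)$. Since $\sigma_i^2\le\sum_j\sigma_j^2=\Lambda^2$, operator monotonicity of the square root gives $\sigma_i\le\Lambda$, and in particular $\support(\rho_i)\subseteq\support(\Lambda)$ for every $i$; all operators below may therefore be taken to act on $\support(\Lambda)$, on which $\Lambda^{-1}$ denotes the inverse.

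\textbf{Upper bound.} Fix a POVM $\{M_i\}$. First I would rewrite, for each $i$,
\[
p_i\Tr(M_i\rho_i)\;=\;\Tr\!\bigl[(\Lambda^{1/2}M_i^{1/2})^{\adjoint}\,(\Lambda^{-1/2}\sigma_i M_i^{1/2})\bigr],
\]
using $\Lambda^{1/2}\Lambda^{-1/2}\sigma_i=\sigma_i$, and apply Cauchy--Schwarz for the Hilbert--Schmidt inner product to get $p_i\Tr(M_i\rho_i)\le\sqrt{\Tr(M_i\Lambda)}\cdot\sqrt{\Tr(M_i\sigma_i\Lambda^{-1}\sigma_i)}$. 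Summing over $i$, applying Cauchy--Schwarz once more, and using $\sum_i M_i=\id$,
\[
\sum_i p_i\Tr(M_i\rho_i)\;\le\;\sqrt{\Tr(\Lambda)}\cdot\sqrt{\Tr\!\bigl(\Lambda^{-1}\textstyle\sum_i\sigma_i M_i\sigma_i\bigr)}.
\]
Since $0\le M_i\le\id$ we have $\sigma_i M_i\sigma_i\le\sigma_i^2$, hence $\sum_i\sigma_i M_i\sigma_i\le\Lambda^2$ and $\Tr(\Lambda^{-1}\sum_i\sigma_i M_i\sigma_i)\le\Tr(\Lambda)$, so $\sum_i p_i\Tr(M_i\rho_i)\le\Tr(\Lambda)=\hc(\cF)$; maximizing over POVMs gives $\dist(\cF)\le\hc(\cF)$. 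The same bookkeeping also shows $\hc(\cF)\le 1$: writing $\Tr(\Lambda)=\sum_i\Tr(\sigma_i\Lambda^{-1}\sigma_i)$ and using $\sigma_i\le\Lambda$ (so $\sigma_i^{1/2}\Lambda^{-1}\sigma_i^{1/2}\le\id$, since it has the same spectrum as $\Lambda^{-1/2}\sigma_i\Lambda^{-1/2}\le\id$), each summand is at most $\Tr(\sigma_i)$, whence $\hc(\cF)\le\sum_i\Tr(\sigma_i)=1$. This is a consistency check on the theorem and shows the lower bound is non-trivial only when $\hc(\cF)>1/2$.

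\textbf{Lower bound.} Here I would produce a single measurement with success probability at least $2\hc(\cF)-1$, equivalently error probability at most $2(1-\hc(\cF))=2(\Tr\rho-\Tr\Lambda)$. The natural candidate is a square-root-type measurement built from $\rho$ and $\Lambda$: one starts from the pretty good measurement $M_i\eqdef\rho^{-1/2}\sigma_i\rho^{-1/2}$, completed by $M_0\eqdef\id-\Pi_{\support(\rho)}$ (which contributes nothing, as each $\rho_i$ is supported on $\support(\rho)$), whose success probability is $\sum_i\Tr(\rho^{-1/2}\sigma_i\rho^{-1/2}\sigma_i)$, and, if needed, replaces it by the once-corrected POVM obtained from one step of Tyson's ``directional iteration'' applied to $\Lambda$. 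The task then reduces to a trace inequality relating $\sum_i\Tr(\rho^{-1/2}\sigma_i\rho^{-1/2}\sigma_i)$ (or its corrected analogue) to $2\Tr(\Lambda)-\Tr(\rho)$. When the $\sigma_i$ commute this collapses, eigenvalue by eigenvalue, to the scalar inequality $\|a\|_2^2/\|a\|_1\ge 2\|a\|_2-\|a\|_1$ for nonnegative vectors $a$, which is simply $(\|a\|_2-\|a\|_1)^2\ge 0$; the general case I would handle following Tyson, the main tools being the relation $\sigma_i\le\Lambda$ from above and the operator arithmetic--geometric-mean inequality $A\Lambda^{-1}A\ge 2A-\Lambda$ for $A\ge 0$, $\Lambda>0$ (valid because $A\Lambda^{-1}A-2A+\Lambda=(A-\Lambda)\Lambda^{-1}(A-\Lambda)\ge 0$), used to bound the correction terms.

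\textbf{Main obstacle.} The upper bound and the commutative case of the lower bound are routine. The crux is the non-commutative trace inequality behind the lower bound: the scalar identity $(\|a\|_2-\|a\|_1)^2\ge 0$ does not lift directly, since the $\sigma_i$ need not be simultaneously diagonalizable and $(\sum_i\sigma_i)^2$ and $\sum_i\sigma_i^2$ differ by an operator that need not be positive semidefinite; moreover, as one checks on small examples, the per-index version of the inequality is false, so the argument must exploit global structure. Verifying that the corrected directional-iterate operators genuinely form a POVM and estimating their success probability to first order is the step I expect to occupy most of the proof.
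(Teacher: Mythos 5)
This theorem is quoted in the paper as a known result with a citation to Holevo, Curlander, and Tyson; the paper supplies no proof (and in fact uses only the upper bound $\dist(\cF)\le\hc(\cF)$), so there is no in-paper argument to compare against. Your upper-bound derivation is complete and correct: keeping $\sigma_i\eqdef p_i\rho_i$ and $\Lambda\eqdef(\sum_j\sigma_j^2)^{1/2}$, the insertion of $\Lambda^{\pm 1/2}$, two applications of Cauchy--Schwarz, and the operator bound $\sum_i\sigma_iM_i\sigma_i\le\Lambda^2$ give exactly $\sum_ip_i\Tr(M_i\rho_i)\le\Tr(\Lambda)$, with all support issues handled by $\support(\sigma_i)\subseteq\support(\Lambda)$; the consistency check $\hc(\cF)\le 1$ is likewise sound.

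The lower-bound part is, as you say yourself, only a sketch, and I believe the ``main obstacle'' you anticipate (non-commutative lifting of the scalar inequality, the need for Tyson's directional iteration and POVM-verification) is an artifact of starting from the pretty good measurement $\rho^{-1/2}\sigma_i\rho^{-1/2}$. Build the square-root measurement from $\Lambda$ instead: set $W_i\eqdef\sigma_i\Lambda^{-1}$ and $M_i\eqdef W_i^\adjoint W_i=\Lambda^{-1}\sigma_i^2\Lambda^{-1}$, completed by $M_0\eqdef\id-\Pi_\Lambda$, where $\Pi_\Lambda$ projects onto $\support(\Lambda)$. Then $\sum_iM_i=\Lambda^{-1}\Lambda^2\Lambda^{-1}=\Pi_\Lambda\le\id$ is immediate, and a single positivity step (no iteration, no correction) finishes: using $\Pi_\Lambda\sigma_i=\sigma_i$ and the reality of $\Tr(\sigma_i\Lambda^{-1}\sigma_i)$,
\[
0\;\le\;\sum_i\Tr\!\bigl[(W_i-\Pi_\Lambda)\,\sigma_i\,(W_i-\Pi_\Lambda)^\adjoint\bigr]
\;=\;\sum_i\Tr(M_i\sigma_i)\;-\;2\sum_i\Tr(\sigma_i\Lambda^{-1}\sigma_i)\;+\;\sum_i\Tr(\sigma_i)\enspace,
\]
where the first term equals the success probability by cyclicity ($\Tr(W_i^\adjoint W_i\sigma_i)=\Tr(W_i\sigma_iW_i^\adjoint)$), the middle term is $2\Tr(\Lambda^{-1}\Lambda^2)=2\Tr(\Lambda)$, and the last is $1$. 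Hence $\dist(\cF)\ge 2\Tr(\Lambda)-1=2\hc(\cF)-1$. By contrast, the analogous step for the PGM ($W_i=\sigma_i^{1/2}\rho^{-1/2}$ against $\Pi_\rho$) only yields $2\sum_i\Tr(\sigma_i^{3/2}\rho^{-1/2})-1$, which need not dominate $2\Tr(\Lambda)-1$ even for commuting $\sigma_i$; this is precisely the dead end you were gearing up to circumvent with iterations. Replacing $\rho$ by $\Lambda$ dissolves the obstacle.
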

We only need the upper bound on distinguishability above for a uniform ensemble of pure states. This bound was given by Curlander~\cite{Curlander79-distinguishability} in the case of linearly independent states. It was generalized to the case of equiprobable, possibly mixed states by Ogawa and Nagaoka~\cite[Lemma~1]{ON99-converse-channel-coding}. The proof they gave also extends with minor modifications to non-uniform ensembles. The two bounds in \Cref{thm:holevo-curlander} were proven --- re-proven independently in the case of the upper bound~\cite{Tyson09-erratum} --- by Tyson\cite[Theorem 10]{Tyson09-distinguishability}. Tyson later gave another proof of the bounds which also generalizes to error-recovery~\cite[Section III]{Tyson10-error-recovery}.

We show that the expectation of the quantity~$\hc(\bcE_d)$ for the ensemble of random maximally entangled states is at most a constant strictly less than~$1$, for sufficiently large dimension~$d$. This implies that the distinguishability~$\dist(\bcE_d)$ is also strictly less than~$1$ in expectation, and that any measurement Bob makes has a non-zero constant probability of failure, on average. 

\begin{theorem}
\label{thm-protocol-error}
The distinguishability~$\dist( \bcE_d)$ of the random superdense coding protocol~$\bPi_d$ tends to~$ \tfrac{8}{3\pi}\approx 0.85 $ as~$d \to \infty$.
\end{theorem}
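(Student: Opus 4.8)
The plan is to combine the Holevo--Curlander bounds (\Cref{thm:holevo-curlander}) with a random-matrix analysis of the Gram structure of the ensemble~$\bcE_d$. Throughout write~$N \eqdef d^2$, so that~$\bcE_d = \big( (\tfrac1N, \density{\bpsi_i}) : i \in [N] \big)$ with~$\ket{\bpsi_i} = (\bU_i \tensor \id)\ket{\mes_d}$. Since~$2 \hc(\bcE_d) - 1 \le \dist(\bcE_d) \le \hc(\bcE_d)$, the task reduces to understanding the limiting behaviour of~$\hc(\bcE_d)$; the operative estimate --- the one that feeds \Cref{thm:random-protocol} --- is the upper bound~$\dist(\bcE_d) \le \hc(\bcE_d) = \tfrac{8}{3\pi} + \order(1)$ with high probability, and the lower direction is the delicate part (see the last paragraph).

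To compute~$\hc(\bcE_d)$: each state is pure and the prior uniform, so~$\hc(\bcE_d) = \Tr \sqrt{ \tfrac{1}{N^2} \sum_i \density{\bpsi_i}} = \tfrac1N \Tr \sqrt{ \sum_i \density{\bpsi_i} }$. Passing to the vectors~$\ket{\bxi_i} = d \ket{\bpsi_i}$ of \Cref{lem-pseudo-isotropy} and setting~$\bM \eqdef \sum_i \ketbra{\bxi_i}{i}$ (the~$d^2 \times d^2$ matrix whose columns are the~$\ket{\bxi_i}$), we have~$\sum_i \density{\bpsi_i} = \tfrac1N \bM \bM^\adjoint$, hence
\[
    \hc(\bcE_d) \;=\; \tfrac1N \Tr \sqrt{ \tfrac1N \bM \bM^\adjoint } \;=\; \int_0^\infty \sqrt{x}\, \diff \bmu_N(x) \enspace,
\]
where~$\bmu_N$ is the ESD of~$\tfrac1N \bM \bM^\adjoint$. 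This matrix has i.i.d.\ columns distributed as~$\ket{\bxi_N} \in \complex^{d^2}$, with aspect ratio (dimension)/(number of columns)~$= d^2/d^2 = 1$. By \Cref{lem-pseudo-isotropy} the sequence~$(\ket{\bxi_N})$ is pseudo-isotropic, so \Cref{thm:marcenko-pastur} applies with~$r = 1$: the ESD~$\bmu_N$ converges weakly to~$\MP_1$ almost surely. A direct computation then gives~$\int_0^4 \sqrt{x}\, \MP_1(x)\,\diff x = \tfrac1{2\pi} \int_0^4 \sqrt{4-x}\, \diff x = \tfrac{8}{3\pi}$.

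It remains to justify passing from weak convergence of~$\bmu_N$ to convergence of the means~$\int \sqrt{x}\, \diff \bmu_N$; this is precisely the subtlety flagged in \Cref{sec-expectation-limit}, and it needs uniform integrability of~$x \mapsto \sqrt x$ under the~$\bmu_N$, which follows from an exponentially-high-probability bound on the top eigenvalue of~$\tfrac1N \bM \bM^\adjoint$, i.e.\ on~$\norm{\bM}^2/N$. I would obtain this by first showing that~$\ket{\bxi_N}$ is sub-gaussian with~$\norm{\bxi_N}_\sg = \Order(1)$: for a unit vector~$\ket u$ one checks that~$\braket{u}{\bxi_N}$ equals~$\sqrt d$ times a Hilbert--Schmidt inner product of~$\bU$ with a fixed contraction, hence has second moment~$1$ (using~$\expct \density{\bxi_N} = \id$ as in the proof of \Cref{lem-pseudo-isotropy}) and is an~$\Order(\sqrt d)$-Lipschitz function of~$\bU \in \unitary(\complex^d)$; concentration of measure on the unitary group (\Cref{thm-Haar-concentration}, or a Lévy-type estimate) then yields a sub-gaussian tail with~$\Order(1)$ norm, and~$\expct \ket{\bxi_N} = 0$ since~$\expct \bU = 0$. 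Applying \Cref{thm-sub-gaussian} to~$\bM^\adjoint$ --- whose rows~$\bra{\bxi_i}$ are i.i.d., mean-zero, sub-gaussian, isotropic --- gives~$\norm{\bM} \le \sqrt N + c\kappa^2(\sqrt N + t)$ with probability~$\ge 1 - 2\exp(-t^2)$, so~$\norm{\tfrac1N \bM\bM^\adjoint} = \Order(1)$ with an exponential tail (and in fact~$\to (1+\sqrt r)^2 = 4$). This supplies the uniform integrability, hence~$\hc(\bcE_d) \to \tfrac{8}{3\pi}$ almost surely; together with~$\dist(\bcE_d) \le \hc(\bcE_d)$ this already establishes \Cref{thm:random-protocol}, and it pins~$\dist(\bcE_d)$ from above.

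For the matching lower direction --- which I expect to be the main obstacle --- one wants a measurement whose success probability converges to~$\tfrac8{3\pi}$ rather than merely to the crude bound~$2\hc(\bcE_d)-1 \to \tfrac{16}{3\pi}-1$. The natural candidate is the pretty-good (square-root) measurement for~$\bcE_d$, whose average success probability equals~$\sum_i \big( (\sqrt{G})_{ii} \big)^2$ for the Gram matrix~$G$ with entries~$G_{ij} = \tfrac1N \braket{\bpsi_i}{\bpsi_j}$; since~$G$ has the same nonzero eigenvalues as the ensemble average~$\rho = \tfrac1N \sum_i \density{\bpsi_i}$, whose spectrum is governed by~$\MP_1$ up to the~$1/N$ scaling, the remaining --- and essential --- issue is to control the \emph{eigenvectors} of~$G$ (their delocalization relative to the standard basis), so that~$(\sqrt{G})_{ii}$ concentrates around~$\tfrac1N\Tr\sqrt G$. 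This joint eigenvalue/eigenvector analysis is not captured by the Mar\v{c}enko--Pastur law alone and is where the real work lies; the concentration estimates on~$\unitary(\complex^d)$ and the generalized Bai--Yin bound developed for the sub-gaussian step above are the tools I would bring to bear on it, alongside, if necessary, a refinement of the argument in the spirit of Montanaro's analysis of random pure states.
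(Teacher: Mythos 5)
Your upper-bound argument tracks the paper's proof step for step: you rewrite $\hc(\bcE_d)$ as the spectral average of $\sqrt{x}$ against the ESD of $\tfrac1N \bM\bM^\adjoint$ (the paper's~$\bQ$), invoke pseudo-isotropy of $\ket{\bxi_n}$ and the generalized Mar\v{c}enko--Pastur law with $r=1$, and supply the needed uniform integrability via sub-gaussianity of $\ket{\bxi_n}$ (a~$\sqrt d$-Lipschitz function of a Haar unitary, hence an $\Order(1)$ sub-gaussian norm) fed into the Bai--Yin--type bound; this is exactly the content of Lemma~\ref{lem-norm-tail} and Theorem~\ref{thm-expectation-limit}. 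You are also right to observe that this chain only pins $\lim_{d\to\infty}\expct\dist(\bcE_d)\le 8/(3\pi)$: the paper's own proof concludes with that inequality, not an equality, so the theorem's ``tends to'' should be read as the one-sided bound that is actually established (and which is all that Theorem~\ref{thm:random-protocol} needs).

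A caution on your sketch of the ``matching lower direction'': it would not in fact match. If the diagonal entries of $\sqrt{G}$ delocalize as you envisage, the PGM success probability $\sum_i \bigl((\sqrt{G})_{ii}\bigr)^2$ concentrates around $N\bigl(\tfrac1N\Tr\sqrt{G}\bigr)^2 = \hc(\bcE_d)^2$, since $\Tr\sqrt{G} = \sqrt{N}\,\hc(\bcE_d)$. Substituting the Mar\v{c}enko--Pastur limit gives $\approx (8/(3\pi))^2 \approx 0.72$, which beats the Holevo--Curlander floor $2\hc - 1 \approx 0.70$ and is consistent with the quadratic suboptimality of the PGM, but falls well short of $0.85$. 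So the PGM route cannot deliver a $\tfrac{8}{3\pi}$ lower bound, and the two-sided convergence asserted in the statement is likely unattainable for $\dist(\bcE_d)$ itself; what converges to $\tfrac{8}{3\pi}$ is $\hc(\bcE_d)$.
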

\begin{proof}
Define the matrix~$\bQ$ as~$\bQ \coloneqq \sum_{i = 1}^n \density{\bpsi_i}$, and~$\bL_d$ as a uniformly random eigenvalue of~$\bQ$. Then~$\expct \hc( \bcE_d)$ is the expectation of the random variable $\sqrt{\bL_d}$, and we aim to bound this from above.

Define the~$n \times n$ matrix~$\bR \eqdef d \sum_{i=1}^n \ketbra{\bpsi_i}{i}$ so that~$\bQ = \tfrac{1}{n} \bR \bR^\adjoint$. Consider the random vector~$\ket{\bxi_n}$ defined as~$\ket{\bxi_n} \eqdef d (\bU \tensor \id) \ket{\mes_d}$, where~$\bU \in \unitary(\complex^d)$ is a Haar-random unitary operator. I.e., $\ket{\bxi_n}$ is a scaled random maximally entangled state. Lemma~\ref{lem-pseudo-isotropy} shows that the sequence~$(\ket{\bxi_n})$ is pseudo-isotropic. Since the columns of the matrix~$\bR$ are i.i.d.\ copies of the random vector~$\ket{\bxi_n}$, the random matrix~$\bQ$ is of the form described in Theorem~\ref{thm:marcenko-pastur}. Thus the limiting distribution of the uniformly random eigenvalue~$\bL_d$ of~$\bQ$ follows the Mar\v{c}enko-Pastur law with density~$\MP_1$ (i.e., with parameter~$r = 1$):
\begin{align}
\label{eq-mp1}
\MP_1(x) \quad & \coloneqq \quad 
\begin{cases}
    \frac{1}{2 \pi} \sqrt{ \frac{ 4 - x} {x} }
        & \text{ if } 0 \le x \le 4 \enspace, \text{ and} \\
    0 & \text{otherwise} \enspace.
\end{cases}
\end{align}
We would like to use the density~$\MP_1$ to estimate the limit of the expectation of~$\sqrt{\bL_d}$.
A subtle issue is that weak convergence (i.e., convergence in distribution of the random variables) does not necessarily
imply that the limit of the expectation values~$\expct \sqrt{ \bL_d}$ equals the expectation of the
limiting random variable. A simple example for which this does not hold is described in Section~\ref{sec-expectation-limit}. Nonetheless, in Theorem~\ref{thm-expectation-limit} in Section~\ref{sec-expectation-limit}, we show that the sequence of random variables~$\bL_d$ satisfies the stronger property we need. Namely, $\expct \sqrt{ \bL_d}$ converges to~$\expct \sqrt{ \bL}$ as~$d \to \infty$, where~$\bL$ is a random variable with density~$\MP_1$. We may thus bound the distinguishability of the ensemble~$\bcE_d$ as~$d \to \infty$ as follows.
\begin{align*}
\lim_{d \to \infty} \expct \dist( \bcE_d) \quad & \le \quad \lim_{d \to \infty} \expct \hc( \bcE_d) \\
    & = \quad \lim_{d \to \infty} \expct \sqrt{ \bL_d} \quad = \quad \expct \sqrt{ \bL} \\
    & = \quad \int_{-\infty}^{\infty} \sqrt{x} \, \MP_1(x) \diff x \\
    & = \quad \frac{1}{2 \pi} \int_0^4 \sqrt{4 - x} \diff x \\
	& = \quad \frac{8}{3 \pi} \quad \approx \quad 0.85 \enspace.
\end{align*}
\suppress{
We verify that the columns of $R$ satisfy the conditions of the Mar\v{c}enko-Pastur law. The columns of the matrix $R$ are drawn from $\cal{D}$, scaled by a factor of $d$. Let $\hat{\cal{D}}$ denote this scaled distribution. Then Section~\ref{sec:random_mes} shows
\[
	\Ex_{\ket{v} \sim \hat{\cal{D}}} \ketbra{v}{v} = \Ex_{\ket{\psi} \sim \cal{D}} d^2 \ketbra{\psi}{\psi} = \id.
\]
Next, we use Markov's inequality to bound the probability
\begin{align}
\Pr\left ( \frac{1}{d^4} \Paren{ \Tr(A \ketbra{v}{v}) - \Tr(A)}^2 \geq \eps \right) &\leq
\frac{ \frac{1}{d^4}  \Paren{\Ex \Tr(A \ketbra{v}{v})^2 - \Paren{\Ex \Tr(A \ketbra{v}{v}) }^2}}{\eps} \\
&=\frac{ \Paren{\Ex \Tr(A \ketbra{\psi}{\psi})^2 - \frac{1}{d^4} \Paren{\Ex \Tr(A) }^2}}{\eps} \\
&\leq \frac{2}{\eps} (d-1)^{-1}
\end{align}
where the last line follows from Claim~\ref{clm:variance}. As $d \to \infty$, this probability goes to $0$. Thus we can apply the Mar\v{c}enko-Pastur law. We have that
\[
	d^{-3} \Ex \Tr \left (\sqrt{RR^\adjoint} \right) = d^{-2} \Ex \Tr \left (\sqrt{\frac{1}{d^2} RR^\adjoint} \right) = \Ex \sum_i \frac{1}{d^2} \sqrt{ \lambda_i \left (\frac{1}{d^2} RR^\adjoint \right )} \to \int_{-\infty}^{\infty} \sqrt{x} \, p(x) \rd x
\]
where $p(x)$ is the density given by Theorem~\ref{thm:marcenko-pastur}.
}
\end{proof}

We can strengthen this result to show that the distinguishability of~$\bcE_d$ is tightly concentrated around the mean. So all but an exponentially small fraction of the superdense coding protocols using~$\ket{\mes_d}$ succeed with probability smaller than~$\approx 0.85$.
\begin{theorem}
\label{thm-error-concentration}
The distinguishability~$\dist( \bcE_d)$ of the random superdense coding protocol~$\bPi_d$ satisfies
\[
 \Pr \Big ( \dist( \bcE_d) \geq \Ex \dist( \bcE_d) + t \Big) \leq \exp \! \left( - \frac{d^3 (d-2)t^2}{96} \right) \enspace.
\]
\end{theorem}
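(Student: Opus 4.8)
The plan is to realize the distinguishability $\dist(\bcE_d)$ as a Lipschitz function of the Haar-random tuple $(\bU_1, \dotsc, \bU_n)$ with $n \eqdef d^2$, and then to invoke the concentration-of-measure bound for Lipschitz functions of Haar-random unitaries, \Cref{thm-Haar-concentration}. Concretely, define $f : \big( \unitary(\complex^d) \big)^n \to \reals$ by
\[
	f(U_1, \dotsc, U_n) \quad \eqdef \quad \max_{\text{POVM } M} \ \frac{1}{n} \sum_{i=1}^n \Tr\!\big( M_i \, (U_i \tensor \id) \density{\mes_d} (U_i \tensor \id)^\adjoint \big) \enspace,
\]
so that $\dist(\bcE_d) = f(\bU_1, \dotsc, \bU_n)$ by the definition of distinguishability. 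The entire content of the proof is the estimate that $f$ is $\kappa$-Lipschitz with respect to the $\ell_2$-sum of Hilbert-Schmidt metrics for $\kappa \eqdef \sqrt 2\, d^{-3/2}$; the theorem then follows by substituting $m = n = d^2$, this value of $\kappa$, and $\varphi = \Ex \dist(\bcE_d)$ into \Cref{thm-Haar-concentration}, which produces the exponent $-(d-2)t^2/(24 \kappa^2) = -d^3(d-2)t^2/48$.

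To establish the Lipschitz bound, write $\ket{\psi_i} \eqdef (U_i \tensor \id)\ket{\mes_d}$ and $\ket{\phi_i} \eqdef (V_i \tensor \id)\ket{\mes_d}$. Since $f$ is a pointwise maximum over POVMs of functions that are affine in the density operators $\density{\psi_i}$, fixing a POVM $M$ optimal for the tuple $(U_i)$ and using it as a (possibly suboptimal) POVM for $(V_i)$ gives
\[
	f(U_1, \dotsc, U_n) - f(V_1, \dotsc, V_n) \quad \le \quad \frac{1}{n} \sum_{i=1}^n \Tr\!\big( M_i (\density{\psi_i} - \density{\phi_i}) \big) \enspace,
\]
and symmetrically with the roles of $(U_i)$ and $(V_i)$ exchanged; hence $\size{f(U_i) - f(V_i)} \le \frac{1}{n}\sum_i \size{\Tr(M_i(\density{\psi_i}-\density{\phi_i}))}$ for a suitable choice of $M$. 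For any $0 \le M_i \le \id$ and any two pure states, $\size{\Tr(M_i(\density{\psi_i}-\density{\phi_i}))} \le \tfrac12 \trnorm{\density{\psi_i}-\density{\phi_i}} = \sqrt{1 - \size{\braket{\psi_i}{\phi_i}}^2} \le \norm{\ket{\psi_i} - \ket{\phi_i}}$, while
\[
	\norm{\ket{\psi_i} - \ket{\phi_i}}^2 \quad = \quad \bra{\mes_d} \big( (U_i - V_i)^\adjoint (U_i - V_i) \tensor \id \big) \ket{\mes_d} \quad = \quad \tfrac{1}{d} \, \norm{U_i - V_i}_2^2 \enspace.
\]
Combining these estimates and applying the Cauchy-Schwarz inequality over $i \in [n]$,
\[
	\size{f(U_i) - f(V_i)} \quad \le \quad \frac{1}{n\sqrt d} \sum_{i=1}^n \norm{U_i - V_i}_2 \quad \le \quad \frac{1}{\sqrt{nd}} \, \norm{(U_i) - (V_i)}_2 \quad = \quad d^{-3/2} \, \norm{(U_i) - (V_i)}_2 \enspace,
\]
using $n = d^2$. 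Thus $f$ is $d^{-3/2}$-Lipschitz, and in particular $\sqrt 2\, d^{-3/2}$-Lipschitz; this also gives the continuity of $f$ needed to apply \Cref{thm-Haar-concentration}.

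The remaining step is the direct substitution into \Cref{thm-Haar-concentration} indicated in the first paragraph. I do not expect a serious obstacle here: the only points that require care are the envelope argument for the maximum over POVMs (so that the difference of two maxima is controlled by evaluating a single near-optimal POVM on both ensembles) and the elementary comparison $\tfrac12 \trnorm{\density{\psi}-\density{\phi}} \le \norm{\ket{\psi}-\ket{\phi}}$ between trace distance and Euclidean distance for pure states; everything else is bookkeeping of constants. (As an alternative route, one could instead establish concentration for the Holevo-Curlander quantity $\hc(\bcE_d)$, which is a Lipschitz function of the $\ket{\psi_i}$, and then use $\dist(\bcE_d) \le \hc(\bcE_d)$ from \Cref{thm:holevo-curlander}; however, the resulting deviation bound would be centered at $\Ex \hc(\bcE_d)$ rather than at $\Ex \dist(\bcE_d)$, so the argument above via $f$ itself is preferable.)
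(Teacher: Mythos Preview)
Your proposal is correct and follows essentially the same approach as the paper: define the distinguishability as a function $f$ of the unitary tuple, bound its Lipschitz constant with respect to the $\ell_2$-sum of Hilbert-Schmidt metrics via the envelope argument and the trace-distance-to-Euclidean comparison, and plug into \Cref{thm-Haar-concentration}. Your estimate is in fact marginally sharper (you obtain Lipschitz constant $d^{-3/2}$ rather than the paper's $\sqrt{2}\,d^{-3/2}$, because you use $\abs{\Tr(M_i(\density{\psi_i}-\density{\phi_i}))}\le \tfrac12\trnorm{\density{\psi_i}-\density{\phi_i}}$ instead of H\"older with $\norm{M_i}\le 1$), after which you correctly weaken to $\sqrt{2}\,d^{-3/2}$ to match the stated exponent.
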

\begin{proof}
Define the function~$f :  \big( \unitary( \complex^d) \big)^n \rightarrow \reals$ as
\[
    f(U_1, \ldots, U_n) \quad \eqdef \quad \sup_{\text{POVM } M} \frac{1}{n} \sum_{i = 1}^n \Tr(M_i \psi_i) \enspace,
\]
where $\ket{\psi_i} = (U_i \otimes \id) \ket{\mes_d}$, and we denote~$\density{ \psi}$ by~$\psi$. So~$f(\bU_1, \ldots, \bU_n) = \dist( \bcE_d) $, the distinguishability of the ensemble $\bcE_d$. We bounded the expected distinguishability by~$ \approx .85 < 1$ in Theorem~\ref{thm-protocol-error}. We  show that~$\dist( \bcE_d)$ is tightly concentrated around its expectation using Theorem~\ref{thm-Haar-concentration}. To do so, we compute a bound on the Lipschitz constant of~$f$. 

Fix unitary operators $U_1,\ldots,U_n, U_1',\ldots,U_n' \in \unitary( \complex^d ) $ and let $\ket{\psi_i} = (U_i \otimes \id) \ket{\mes_d}$ and $\ket{\psi_i'} = (U_i' \otimes \id) \ket{\mes_d}$. Since the space of $n$-dimensional POVMs with~$n$ outcomes is compact, for any sequence~$(U_i)$ the supremum in the definition of~$f$ is attained at some POVM~$M$. Let $M$, $M'$ correspond to the POVMs achieving~$f(U_1,\ldots,U_n)$ and~$f(U_1',\ldots,U_n')$, respectively, and let~$\alpha, \alpha'$ denote these quantities. Assume without loss of generality that $\alpha' \leq \alpha$.

We have that
\begin{align*}
|\alpha - \alpha'| \quad = \quad \alpha - \alpha' \quad = \quad \frac{1}{d^2} \Big( \sum_i \Tr(M_i \psi_i) - \Tr(M_i' \psi_i') \Big) \quad \leq \quad \frac{1}{d^2} \sum_i \Tr(M_i (\psi_i - \psi_i')) \enspace,
\end{align*}
as the POVM~$M$ may not be an optimal distinguishing measurement for the ensemble~$(\psi_i')$. We bound this by
\begin{align*}
\MoveEqLeft \frac{1}{d^2} \sum_i \Big | \Tr(M_i ( \psi_i - \psi_i') ) \Big | \\
    &\leq \quad \frac{1}{d^2} \sum_i \| M_i \| \cdot \| \psi_i - \psi_i' \|_1 & (\text{H\"{o}lder inequality})\\
    &\leq \quad \frac{1}{d^2} \sum_i \| \psi_i - \psi_i' \|_1 & (\text{$M$ is a POVM}) \\
    &\leq \quad \frac{1}{d^2} \sum_i 2 \| \ket{\psi_i} - \ket{\psi_i'} \| \\
    &=  \quad \frac{2}{d^2} \sum_i  \sqrt{ \bra{\mes} ( U_i - U_i')^\adjoint (U_i - U_i') \ket{\mes} } \\
    &\leq \quad 2 \sqrt{ \frac{1}{d^2} \sum_i \frac{1}{d} \| U_i - U_i' \|_2^2 } & (\text{Jensen inequality})
\end{align*}
where in the fourth line we used the property that for any two pure states $\ket{\varphi}$ and $\ket{\theta}$, the trace distance between $\ketbra{\varphi}{\varphi}$ and $\ketbra{\theta}{\theta}$ is at most $2 \| \ket{\varphi} - \ket{\theta} \|$. In the last line, we used the identity~$\bra{\mes_d} (A \otimes \id) \ket{\mes_d}$ for any~$d \times d$ matrix is equal to $\Tr(A)/d$. 

Thus $|\alpha - \alpha'| \leq 2 d^{-3/2} \sqrt{ \sum_i \| U_i - U_i' \|_2^2 }$, which implies that~$f$ is~$2 d^{-3/2}$-Lipschitz. Applying Theorem~\ref{thm-Haar-concentration} we obtain
\[
 \Pr \Big ( \dist( \bcE_d) \geq \Ex \dist( \bcE_d) + t \Big) \quad \leq \quad \exp \! \left ( - \frac{d^3 (d-2)t^2}{96} \right) \enspace.
\]
\end{proof}

\subsection{A subtle issue}
\label{sec-expectation-limit}

Recall from the proof of Theorem~\ref{thm-protocol-error} in Section~\ref{sec-random-protocol-analysis} that~$\bL_d$ denotes a uniformly random eigenvalue of the matrix~$\bQ$. The generalized Mar{\v c}enko-Pastur Law (Theorem~\ref{thm:marcenko-pastur}) tells us that~$\bL_d$ converges in distribution to a random variable~$\bL$ with density~$\MP_1$ given in Eq.~\eqref{eq-mp1} as~$d \rightarrow \infty$. We used this limiting distribution to estimate the limit of the mean of~$\sqrt{\bL_d}$ in Theorem~\ref{thm-protocol-error}. We pointed out the subtle issue that convergence in distribution does not necessarily imply that the limit of means equals the mean of the limiting random variable. A simple example which illustrates this issue is the following. For any positive integer~$k$, let the random variable~$\bx_k$ take value~$k$ with probability~$1/k$, and value~$0$ with the remaining probability. Then~$\bx_k$ converges in distribution to the constant~$0$, whereas~$\expct \bx_k = 1$ for all~$k$.

The example above highlights the reason underlying this phenomenon: while the probability of an interval on the line may go to zero in the limit, the rate of convergence may not be fast enough to dampen the contribution to the mean from that interval. We show that the probability that the random variable~$\bL_d$ deviates from zero decays exponentially. This helps us conclude the convergence of the mean~$\expct \sqrt{\bL_d}$ to~$\expct \sqrt{\bL}$. 

A similar property was assumed to hold by Montanaro~\cite{montanaro2007distinguishability} in his work on the distinguishability of random quantum states. Let~$\bS \eqdef \sum_{i = 1}^k \ketbra{\bzeta_i}{i}$, where~$\ket{\bzeta_i}$ are i.i.d.\ random vectors in~$\complex^d$ with i.i.d.\ complex gaussian entries with mean~$0$ and variance~$1$. Montanaro approximates~$\expct \trace( \bS \bS^\adjoint)^{1/2}$ using the Mar{\v c}enko-Pastur Law. He justifies this using estimates on the rate of convergence of the expected distribution of a uniformly random eigenvalue of~$(1/k) \bS \bS^\adjoint$ to the limiting distribution given by the Mar{\v c}enko-Pastur Law (see the discussion after Lemma~5 in Ref.~\cite{montanaro2007distinguishability}). The rate of convergence is measured in terms of the Kolmogorov distance between the two distributions. (The Kolmogorov distance between the cumulative distribution functions~$F_1$ and~$F_2$ of real random variables is defined as~$\sup_{x \in \reals} \Abs{F_1(x) - F_2(x)}$.) The Kolmogorov distance was shown to be~$\Order( k^{-5/48})$ by Bai~\cite{Bai93-convergence-rate}. However, vanishing Kolmogorov distance does not necessarily imply the convergence of the mean to the mean of the limiting distribution. For example, the Kolmogorov distance of the distribution of the random variable~$\bx_k$ defined above from the constant~$0$ is~$1/k$. The approach we take in this section can also be used to fill the gap in Montanaro's work. In fact, the analogue of Lemma~\ref{lem-norm-tail} we need for this purpose follows directly, as the columns of~$\bS$ are gaussian. We leave the details to the reader, and return to the analysis of the random variable~$\bL_d$.

In order to show that~$\bL_d$ has an exponentially decaying tail, it suffices to show that the spectral norm of the matrix~$\bQ$---i.e., its largest eigenvalue---has this property. So we proceed by deriving a tail bound for the spectral norm of~$\bQ$.
\begin{lemma}
\label{lem-norm-tail}
Let~$d \ge 3$. There are positive universal constants~$c_1, c_2$ such that for all $t \geq c_1$, 
\begin{equation}
\label{eq-norm-bound}
\Pr \left ( \norm{ \bQ } > t \right ) \quad \le \quad 2 \exp( - tn / c_2) \enspace.
\end{equation}
\end{lemma}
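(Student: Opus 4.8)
The plan is to apply the sub-gaussian singular value bound (Theorem~\ref{thm-sub-gaussian}) to the matrix $\bR^\adjoint = d\sum_{i=1}^n \ketbra{i}{\bpsi_i}$, whose rows are i.i.d.\ copies of $\bra{\bxi_n}$, where $\ket{\bxi_n} = d(\bU\otimes\id)\ket{\mes_d} \in \complex^n$ is the scaled random maximally entangled vector. Since $\bQ = \tfrac1n \bR\bR^\adjoint$, we have $\norm{\bQ} = \tfrac1n\norm{\bR^\adjoint}^2$, so a bound of the form $\norm{\bR^\adjoint} \le C\sqrt n + c\kappa^2 s$ with probability at least $1 - 2\exp(-s^2)$ translates, after squaring and the substitution $s^2 \approx tn$, into exactly the claimed tail $\Pr(\norm{\bQ} > t) \le 2\exp(-tn/c_2)$, valid once $t$ exceeds a universal constant $c_1$ (so that $\sqrt t$ dominates the constant $C$). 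To invoke Theorem~\ref{thm-sub-gaussian} I need the rows of $\bR^\adjoint$ to be independent, mean zero, isotropic, and sub-gaussian with sub-gaussian norm bounded by a universal constant $\kappa$. Independence is immediate since the $\bU_i$ are independent; mean zero follows from $\expct \bU = 0$; and isotropy, $\expct\density{\bxi_n} = \id$, is precisely the identity already established in the proof of Lemma~\ref{lem-pseudo-isotropy}.

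The hard part will be the uniform sub-gaussian bound. For a unit vector $\ket{u}\in\complex^d\otimes\complex^d$, writing $\ket u = \sum_k \ket{a_k}\otimes\ket k$, one checks that $\braket{u}{\bxi_n} = \sqrt d\,\Tr(\bU A)$ with $A \eqdef \sum_k \ketbra{k}{a_k}$ and $\norm{A}_2 = 1$; so I must show $\sqrt d\,\Tr(\bU A)$ is sub-gaussian with norm bounded independently of $d$ and of $A$ (with $\norm{A}_2=1$). The map $U \mapsto |\Tr(UA)|$ is $1$-Lipschitz with respect to the Hilbert-Schmidt metric, so Theorem~\ref{thm-Haar-concentration} (with $m=1$, $\kappa=1$) gives $\Pr(|\Tr(\bU A)| \ge \varphi + r) \le \exp(-(d-2)r^2/24)$, where $\varphi = \expct|\Tr(\bU A)| \le (\expct|\Tr(\bU A)|^2)^{1/2} = (\norm{A}_2^2/d)^{1/2} = 1/\sqrt d$ by the standard Haar computation $\expct \bU_{ij}\overline{\bU_{k\ell}} = \delta_{ik}\delta_{j\ell}/d$. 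Rescaling by $\sqrt d$, using $\sqrt d\,\varphi \le 1$ and $(d-2)/d \ge 1/3$ for $d\ge 3$, and absorbing the small-argument regime into the factor $2$, this becomes $\Pr(|\sqrt d\,\Tr(\bU A)| \ge v) \le 2\exp(-v^2/\kappa_1^2)$ for all $v\ge 0$ and a universal $\kappa_1$; Lemma~\ref{lem-equiv-sub-gaussian}(1) then gives $\norm{\braket{u}{\bxi_n}}_\sg \le c_1\kappa_1$ uniformly in $u$, hence $\norm{\bxi_n}_\sg \le \kappa$ for a universal $\kappa$.

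With these facts in hand, applying Theorem~\ref{thm-sub-gaussian} to the $n\times n$ matrix $\bR^\adjoint$ yields, with probability at least $1 - 2\exp(-s^2)$,
\[
\norm{\bR^\adjoint} \;\le\; \sqrt n + c\kappa^2(\sqrt n + s) \;=\; C\sqrt n + c\kappa^2 s, \qquad C \eqdef 1 + c\kappa^2 .
\]
Then $\norm{\bQ} = \tfrac1n\norm{\bR^\adjoint}^2 > t$ forces $C\sqrt n + c\kappa^2 s < \sqrt{tn}$ to fail, i.e.\ requires $s \ge \sqrt n(\sqrt t - C)/(c\kappa^2)$; taking $c_1 \eqdef 4C^2$ ensures $\sqrt t - C \ge \tfrac12\sqrt t$ whenever $t \ge c_1$, so $s^2 \ge tn/(4c^2\kappa^4)$ and therefore $\Pr(\norm{\bQ} > t) \le 2\exp(-tn/c_2)$ with $c_2 \eqdef 4c^2\kappa^4$. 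I expect essentially all the difficulty to be concentrated in the middle paragraph — verifying that the sub-gaussian norm of a random maximally entangled vector is a genuinely dimension-free constant — since everything after that is bookkeeping; the only other point requiring a word of care is that Theorem~\ref{thm-sub-gaussian} is stated for real matrices, but, exactly as with the other real-to-complex adaptations used in this section, its proof goes through verbatim for complex vectors.
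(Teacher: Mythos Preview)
Your proposal is correct and follows essentially the same route as the paper: verify that $\ket{\bxi_n}$ is mean-zero, isotropic, and sub-gaussian with a dimension-free norm via Haar concentration (Theorem~\ref{thm-Haar-concentration}), then feed $\bR^\adjoint$ into Theorem~\ref{thm-sub-gaussian} and do the same change of variables to convert the singular-value tail into the claimed bound on $\norm{\bQ}$, with the identical choices $c_1 = 4(1+c\kappa^2)^2$ and $c_2 = 4c^2\kappa^4$. The only cosmetic difference is that you rewrite $\braket{u}{\bxi_n}$ as $\sqrt d\,\Tr(\bU A)$ with $\norm{A}_2=1$ and observe that $U\mapsto|\Tr(UA)|$ is $1$-Lipschitz, whereas the paper keeps the form $f(U)=d\,|\bra{u}(U\otimes\id)\ket{\mes_d}|$ and computes directly that it is $\sqrt d$-Lipschitz; these are the same function and yield the same concentration exponent $(d-2)t^2/(24d)$.
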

\begin{proof}
Recall that~$n \coloneqq d^2$, $\bQ \coloneqq \sum_{ i = 1 }^n \density{ \bpsi_i }$, and that~$\bR \eqdef d \sum_{i = 1}^n \ketbra{\bpsi_i}{i}$. We have~$\norm{\bQ} = n^{-1} \norm{ \bR \bR^\adjoint} = n^{-1} \norm{ \bR}^2$, so
\[
\Pr \left( \norm{ \bQ } > t \right) \quad = \quad \Pr \left( \norm{ \bR } > \sqrt{nt} \right) \enspace.
\]
It thus suffices to give a suitable tail bound for~$\norm{ \bR}$.

The vectors~$d \ket{\bpsi_i}$ are i.i.d.\ copies of the random vector~$\ket{\bxi_n}$ defined as~$\ket{\bxi_n} \eqdef d ( \bU \tensor \id ) \ket{\mes_d}$, where~$\bU$ is a Haar-random unitary operator on~$\complex^d$. So the vectors~$d \ket{\bpsi_i}$ have zero mean. We can verify that~$\expct \density{ \bxi_n } = \id$, so the vectors~$d \ket{\bpsi_i}$ are isotropic. We prove below that the vector~$\ket{\bxi_n}$ is sub-gaussian, with sub-gaussian norm at most a universal constant~$\kappa$. So the matrix~$\bR^\adjoint = d\sum_{i=1}^n \ketbra{i}{\psi_i}$ satisfies the conditions of Theorem~\ref{thm-sub-gaussian}, and we have that for some positive universal constant~$c_3$,
\[
	\norm{\bR} = \norm{ \bR^\adjoint} \quad > \quad \sqrt{n} + c_3 \kappa^2 ( \sqrt{n}  + t_1 )
\]
with probability at most~$2 \exp( - t_1^2 )$, for all~$t_1 \geq 0$. Let~$t_1$ be such that the right hand side above equals~$\sqrt{nt}$, i.e.,
\[
t_1 \quad \eqdef \quad \frac{ \sqrt{n} }{ c_3 \kappa^2 } \left( \sqrt{t} - (1 + c_3 \kappa^2) \right) \enspace.
\]
Let~$ c_1 \eqdef 4 (1 + c_3 \kappa^2)^2 $. Whenever~$ t \ge c_1 $, we see that~$ t_1 \ge \sqrt{ nt} / 2 c_3 \kappa^2 \ge 0 $. So
\[
\Pr \left( \norm{ \bR } > \sqrt{nt} \right) \quad \le \quad 2 \exp \! \big( - nt / 4 c_3^2 \kappa^4 \big) \enspace,
\]
and the theorem holds with~$ c_2 \eqdef 4 c_3^2 \kappa^4$.

It remains to prove that the sub-gaussian norm~$\kappa$ of~$\ket{\bxi_n}$ is at most some universal constant. By Lemma~\ref{lem-equiv-sub-gaussian}, it suffices to show that for any unit vector~$\ket{u} \in \complex^n$, the random variable~$\bx \eqdef \braket{u}{\bxi_n}$ has sub-gaussian tails: for a positive universal constant~$\kappa_1$,
\begin{equation}
\label{eq-sg-tail}  
\Pr( \Abs{ \bx } \ge t ) \quad \le \quad 2 \exp \! \big( - t^2 / \kappa_1^2 \big) \enspace,
\end{equation}
for all~$t \ge 0$. We establish this by appealing to Theorem~\ref{thm-Haar-concentration}.

Since~$\ket{ \bxi_n}$ is isotropic,
\[
    \expct \Abs{ \bx}^2 \quad = \quad \bra{u} \; (\expct \density{ \bxi_n} ) \; \ket{u} 
        \quad = \quad 1 \enspace.  
\]
So~$\expct \Abs{ \bx} \le \big( \expct \Abs{ \bx}^2 \big)^{1/2} \le 1 $.

Define the function~$f : \unitary( \complex^d) \rightarrow \complex$ as~$f(U) \eqdef d \Abs{ \bra{u} ( U \tensor \id ) \ket{\mes_d} }$. Then~$\Abs{ \bx} = f( \bU)$. To show that~$f$ is Lipschitz, consider~$U, V \in \unitary( \complex^d)$. Since~$\ket{ u}$ is a unit vector and~$\bra{ \mes_d} ( W \tensor \id ) \ket{ \mes_d} = \tfrac{1}{d} \trace( W)$ we have
\begin{align*}
    \Abs{ f(U) - f(V)} \quad & \le \quad d \Abs{ \bra{u} ( (U - V) \tensor \id ) \ket{ \mes_d} } \\
        & \le \quad d \norm{ ( (U - V) \tensor \id ) \ket{ \mes_d} } \\
        & = \quad d \left( \bra{ \mes_d} ( (U - V)^\adjoint (U - V) \tensor \id ) \ket{ \mes_d} \right)^{1/2} \\
        & = \quad \sqrt{d} \norm{ U - V }_2 \enspace,
\end{align*}
where~$\norm{ \cdot}_2$ denotes the Hilbert-Schmidt norm on~$\linear( \complex^d)$. So the function~$f$ is~$\sqrt{d}$-Lipschitz with respect to Hilbert-Schmidt metric. By Theorem~\ref{thm-Haar-concentration},
\[
\Pr( \Abs{ \bx } \ge 1 + t_1 ) \quad \le \quad \exp \! \left( - \frac{ (d - 2) t_1^2 }{ 24d} \right)
\]
for every~$t_1 > 0$. For~$d \ge 3$, we have~$d - 2 \ge d/3$. So the right hand side is at most~$\exp( - t_1^2 / 72)$, and we have
\[
\Pr( \Abs{ \bx } \ge t ) \quad \le \quad \exp \! \left( - \frac{(t - 1)^2 }{3 \cdot 24 } \right) \quad \le \quad \exp \! \left( - \frac{t^2 }{12 \cdot 24 } \right) \enspace,
\]
for all~$t \ge 2$ (as~$t - 1 \ge t/2$). Note that Eq.~\eqref{eq-sg-tail} holds trivially for~$t \in [0,2]$ for any choice of positive constant~$\kappa_1$ such that~$2 \exp(- 4 / \kappa_1^2) \ge 1$. Thus, taking~$ \kappa_1 \eqdef 24 $, we see that Eq.~\eqref{eq-sg-tail} holds for all~$ t \ge 0$ whenever~$d \ge 3$.
\suppress{
Consider the random variable $\ket{\bxi_n} = U \otimes I \ket{\mes_d}$. We need to show that
\[
	\Pr \left ( \left | \braket{u}{\bxi_n} \right | \geq t \right) \leq 2\exp(-t^2/K^2).
\]
The inner product $\braket{u}{\bxi_n}$ can be written as
\begin{align*}
	\braket{u}{\bxi_n} &= \sqrt{d} \sum_{j=1}^d \bra{u_j} U \ket{j}.
\end{align*}

By rotation invariance of the Haar measure, the random variable $\sqrt{d} \bra{u_j} U \ket{j}$ is distributed identically to $\norm { \ket{u_j} } \cdot \sqrt{d} \cdot \bra{j} U \ket{j}$. By Lemma~\ref{lem-sub-gaussian-unit-vector}, this is a sub-gaussian random variable with norm at most $C \norm{ \ket{u_j} }$ for some universal constant $C> 0$. 

If the random variables $\sqrt{d} \bra{u_j} U \ket{j}$ were independent, then by Lemma~\ref{lem-sums-of-indp-sub-gaussians}, then the random variable $\braket{u}{\bxi_n}$ would be sub-gaussian with squared norm at most
\[
	C \sum_{j=1}^d \norm{ \sqrt{d} \bra{u_j} U \ket{j} }^2_{\sg} \leq C' \sum_{j=1}^d \norm{ \ket{u_j} }^2 = C'.
\]	

However, the random variables are \emph{not} independent, but they are conditionally sub-gaussian; this is because if we condition on the column vectors $U\ket{1},\ldots,U\ket{j}$, then the column vector $U\ket{j+1}$ will be a Haar-random unit vector drawn from the $(d-j)$-dimensional subspace that is the orthogonal complement to the span of $\{U\ket{1},\ldots,U\ket{j} \}$. 

Formally speaking, define a filtration $( \cal{F}_i)$ for $i = 1,\ldots,d/2$ of the probability space of Haar random unitaries where $\cal{F}_i$ corresponds to conditioning on the first $i$ columns of the unitary $U$. Observe that the sequence $\{ (D_j,\cal{F}_j) \}$ where $D_j = \norm{ \ket{u_j} } \cdot \sqrt{d} \cdot \bra{j}U\ket{j}$ is a martingale difference sequence, meaning that the following conditions hold:
\begin{align}
	\expct |D_j| < \infty \label{eq-infty} \\
	\expct [D_{j+1} | \cal{F}_j] = 0. \label{eq-cond-zero}
\end{align}
Line~\eqref{eq-infty} follows from the fact that $|D_j| \leq \sqrt{d} \cdot \norm{ \ket{u_j}}$, and line~\eqref{eq-cond-zero} follows from the fact that conditioning on the first $j$ columns of $U$ does not determine the phase of the entry $\bra{j+1}U\ket{j+1}$. 

Furthermore, the random variables $D_1,\ldots,D_{d/2}$ satisfy the \emph{conditional sub-gaussian property}: for all $j = 1,\ldots,d/2$, we have that
\[
	\expct [ e^{\lambda D_{j+1}} \mid \cal{F}_j ] \leq e^{\lambda^2 \sigma_j^2 /2}
\]
where $\sigma_j \leq C \cdot \norm { \ket{u_j} } \cdot \sqrt{ \frac{d}{d-j}} \leq C \sqrt{2} \norm { \ket{u_j} } $ where we used that $j \leq d/2$. Thus, applying Theorem~\ref{thm-azuma} we get that the sum $D^{(1)} = \sum_{j=1}^{d/2} D_j$ is sub-gaussian with norm $\sqrt{2C^2 \sum_j \norm{ \ket{u_j}}^2} = \sqrt{2} C$.

The same analysis shows that $D^{(2)} = \sum_{j=d/2}^{d} D_j$ is also sub-gaussian with the same norm. Using Lemma~\ref{lem-sums-of-sub-gaussians}, we get that the random variable $D = D^{(1)} + D^{(2)}$ is sub-gaussian with norm at most $\| D^{(1)}\|_{\sg}^2 + \| D^{(2)}\|_{\sg}^2 \leq 2\sqrt{2} C$. This implies that $d \ket{\bps_i}$ is sub-gaussian with constant norm for all $i$, and this concludes the proof.
}
\end{proof}

Lemma~\ref{lem-norm-tail} implies that for a large enough constant~$\alpha$, the contribution to the mean~$\expct \sqrt{ \bL_d}$ outside an interval~$[0, \alpha]$ goes to~$0$ as~$d \to \infty$. Within this interval, the contribution to the mean tends to that for~$\expct \sqrt{ \bL}$. This helps us derive the limiting value of the mean. 
\begin{theorem}
\label{thm-expectation-limit}
$
\lim_{d \rightarrow \infty} \expct \sqrt{ \bL_d}
    \quad = \quad \expct \sqrt{ \bL} \enspace.
$
\end{theorem}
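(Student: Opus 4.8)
The plan is to upgrade the convergence in distribution $\bL_d \to \bL$ furnished by Theorem~\ref{thm:marcenko-pastur} to convergence of the first moments of the square roots; the only additional input needed is a uniform-integrability-type bound, which is precisely what Lemma~\ref{lem-norm-tail} supplies. Two elementary observations set this up. First, $\bQ$ is positive semidefinite, so all its eigenvalues lie in $[0, \norm{\bQ}]$ and hence $0 \le \bL_d \le \norm{\bQ}$ surely. Second, the expectation $\expct$ is over both the Haar-random unitaries (equivalently, over $\bQ$) and the uniform choice of eigenvalue, and writing the latter as a conditional expectation, $\expct[ h(\bL_d) \mid \bQ ]$ is exactly the integral of $h$ against the empirical spectral distribution (ESD) of $\bQ$. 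I would then fix a constant $\alpha \ge \max\{ c_1, 4 \}$, with $c_1$ the constant of Lemma~\ref{lem-norm-tail}. Since $\MP_1$ is supported in $[0,4] \subseteq [0,\alpha]$, we have $\expct \sqrt{\bL} = \expct g(\bL)$ for the bounded continuous function $g(x) \eqdef \sqrt{ \min\{ \max\{ x, 0 \}, \alpha \} }$, which agrees with $\sqrt{x}$ on $[0, \alpha]$ and equals $\sqrt{\alpha}$ for $x \ge \alpha$, and I would split
\[
	\bigabs{ \expct \sqrt{ \bL_d } - \expct \sqrt{ \bL } } \quad \le \quad \bigabs{ \expct \sqrt{ \bL_d } - \expct g( \bL_d ) } \; + \; \bigabs{ \expct g( \bL_d ) - \expct g( \bL ) } \enspace.
\]

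For the second term, Theorem~\ref{thm:marcenko-pastur} gives that the ESD of $\bQ$ converges weakly to $\MP_1$ almost surely over the choice of the $\bU_i$; since $g$ is bounded and continuous, $\expct[ g(\bL_d) \mid \bQ ] \to \expct g(\bL)$ almost surely, and as $\abs{g} \le \sqrt{\alpha}$ everywhere, the bounded convergence theorem applied over the randomness in $\bQ$ yields $\expct g(\bL_d) = \expct\bigl[ \expct[ g(\bL_d) \mid \bQ ] \bigr] \to \expct g(\bL)$. For the first term, $\sqrt{\bL_d} - g(\bL_d)$ vanishes on $\{ \bL_d \le \alpha \}$ and equals $\sqrt{\bL_d} - \sqrt{\alpha} \le \sqrt{\bL_d} \le \sqrt{\norm{\bQ}}$ on $\{ \bL_d > \alpha \} \subseteq \{ \norm{\bQ} > \alpha \}$, so
\[
	\bigabs{ \expct \sqrt{ \bL_d } - \expct g( \bL_d ) } \quad \le \quad \expct\bigl[ \sqrt{ \norm{ \bQ } } \; \Ind( \norm{ \bQ } > \alpha ) \bigr] \enspace.
\]
Writing the right-hand side in layer-cake form and invoking Lemma~\ref{lem-norm-tail} (legitimate since $t^2 \ge \alpha \ge c_1$ throughout), with $n = d^2$ one obtains
\[
	\expct\bigl[ \sqrt{ \norm{ \bQ } } \; \Ind( \norm{ \bQ } > \alpha ) \bigr] \; = \; \sqrt{\alpha} \, \Pr( \norm{\bQ} > \alpha ) + \int_{ \sqrt{\alpha} }^{ \infty } \Pr( \norm{\bQ} > t^2 ) \, \diff t \; \le \; 2 \sqrt{\alpha} \, e^{ - \alpha n / c_2 } + 2 \int_{ \sqrt{\alpha} }^{ \infty } e^{ - t^2 n / c_2 } \, \diff t \enspace,
\]
and bounding $t^2 \ge t \sqrt{\alpha}$ on the domain of integration shows the last integral is at most $\tfrac{2 c_2}{ \sqrt{\alpha}\, n } \, e^{ - \alpha n / c_2 }$. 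Hence the whole expression tends to $0$ as $d \to \infty$, and combining the two estimates gives $\limsup_{d \to \infty} \abs{ \expct \sqrt{\bL_d} - \expct \sqrt{\bL} } = 0$, which is the desired identity.

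I expect the genuine difficulty — ruling out escape of probability mass to $+\infty$ — to be absorbed entirely by Lemma~\ref{lem-norm-tail}. What remains inside the present theorem is bookkeeping: choosing $\alpha$ large enough to cover both the support of $\MP_1$ and the regime of validity of the tail bound, and making sure the interchange of limits relies on the \emph{almost-sure} weak convergence of the ESD (together with the boundedness of $g$) rather than on mere convergence in distribution. No step here should present a serious obstacle once Lemma~\ref{lem-norm-tail} is in hand.
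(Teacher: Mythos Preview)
Your proposal is correct and follows essentially the same approach as the paper: truncate~$\sqrt{\cdot}$ by the bounded continuous function that agrees with it on~$[0,\alpha]$ (with~$\alpha \ge \max\{c_1,4\}$), handle the truncated part via weak convergence of the ESD, and control the tail using Lemma~\ref{lem-norm-tail} together with~$\bL_d \le \norm{\bQ}$. The only differences are stylistic---you work with a single absolute-value split and bound the tail through~$\expct[\sqrt{\norm{\bQ}}\,\Ind(\norm{\bQ}>\alpha)]$, whereas the paper establishes the lower and upper bounds separately and bounds~$\int_{x\ge\alpha}\sqrt{x}\,p(x)\,\diff x$ via Fubini; your treatment of the two layers of randomness (almost-sure weak convergence of the ESD plus bounded convergence over~$\bQ$) is in fact more explicit than the paper's.
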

\begin{proof}
We formalize the intuition given above by appealing to a weaker property implied by
convergence in distribution, namely that the expectation of any \emph{bounded\/}
continuous function~$f$ of the random variable~$\bL_d$ converges to~$\expct f(\bL)$. 

Fix~$\alpha \ge \max \set{ c_1, 4}$, where~$c_1$ is the constant in the statement of Lemma~\ref{lem-norm-tail} and consider the function~$f_\alpha$ defined as follows:
\[
f_\alpha(x) \quad \coloneqq  \quad 
    \begin{cases}
        0 & \quad x \le 0 \\
        \sqrt{x} & \quad 0 < x \le \alpha \\
        \sqrt{\alpha} & \quad \alpha < x \enspace.
    \end{cases}
\]
Since~$f_\alpha$ is continuous and bounded, and~$\bL \in [0, 4]$,
\[
\lim_{d \to \infty} \expct f_\alpha ( \bL_d) \quad = \quad \expct f_\alpha ( \bL)
    \quad = \quad \expct \sqrt{ \bL} \enspace.
\]
On the other hand,~$ \bL_d \ge 0 $ and~$f_\alpha(x) \le \sqrt{x} $ for all~$x \ge 0$. So~$\expct f_\alpha( \bL_d) \le \expct \sqrt{ \bL_d} $, and
\[
\expct \sqrt{ \bL} \quad = \quad \lim_{d \to \infty} \expct f_\alpha ( \bL_d)
    \quad \le \quad \lim_{d \to \infty} \expct \sqrt{ \bL_d} \enspace.
\]

We prove the reverse inequality using Lemma~\ref{lem-norm-tail}. Let~$p(x)$ be the probability density function of~$\bL_d$. By the definition of~$f_\alpha$,
\begin{align}
\label{eq-expct-bd}
    \expct \sqrt{ \bL_d} \quad & \le \quad \expct f_\alpha( \bL_d) + \int_{x \ge \alpha} \sqrt{x} p(x) \diff x \enspace.
\end{align}
Let~$g( \alpha, d)$ denote the second term on the right hand side of Eq.~\eqref{eq-expct-bd} above. This is the contribution to~$\expct \sqrt{\bL_d}$ outside of the interval~$[0, \alpha]$. Using~$\alpha \ge 4$, Fubini's Theorem, $ \bL_d \le \norm{ \bQ}$, and Lemma~\ref{lem-norm-tail}, we have
\begin{align*}
    g( \alpha, d) \quad & \le \quad \int_{x \ge \alpha} x p(x) \diff x \\
        & = \quad \int_{x \ge \alpha} \int_{y \in [0,x]} p(x) \diff y \diff x \\
        & = \quad \int_{y \ge 0} \int_{x \ge \max \set{ \alpha, y}} p(x) \diff x \diff y \\
        & = \quad \int_{y \in [0, \alpha]} \int_{x \ge \alpha} p(x) \diff x \diff y 
            + \int_{y \ge \alpha} \int_{x \ge y} p(x) \diff x \diff y \\
        & = \quad \int_{y \in [0, \alpha]} \Pr( \bL_d \ge \alpha) \diff y 
            + \int_{y \ge \alpha} \Pr( \bL_d \ge y) \diff y \\
        & \le \quad 2 \alpha \exp( - \alpha n/ c_2) + 2 \int_{y \ge \alpha} \exp( - y n/ c_2) \diff y \\
        & = \quad 2 ( \alpha + c_2 / n )  \exp( - \alpha n/ c_2) \enspace, 
\end{align*}
where~$c_2$ is the universal constant in the statement of Lemma~\ref{lem-norm-tail}. Since~$n = d^2$, $ g( \alpha, d) $ vanishes as~$d$ goes to~$\infty$. By Eq.~\eqref{eq-expct-bd},
\begin{align*}
    \lim_{d \to \infty} \expct \sqrt{ \bL_d} \quad 
        & \le \quad \lim_{d \to \infty} \expct f_\alpha( \bL_d) +   \lim_{d \to \infty} g( \alpha, d) \\
        & = \quad \expct f_\alpha( \bL) \quad = \quad \expct \sqrt{ \bL} \enspace.
\end{align*}
This proves the theorem.
\end{proof}
 
\bibliography{bibl}

\end{document}